\documentclass[letterpaper,english,notitlepage]{article}
\usepackage{lmodern}

\usepackage[T1]{fontenc}
\usepackage[latin9]{inputenc}
\usepackage{color}
\usepackage{babel}
\usepackage{refstyle}
\usepackage{amsthm}
\usepackage{amsmath}
\usepackage{amssymb}
\usepackage{esint}
\usepackage[unicode=true,
 bookmarks=true,bookmarksnumbered=false,bookmarksopen=false,
 breaklinks=false,pdfborder={0 0 0},backref=page,colorlinks=true]
 {hyperref}
\hypersetup{pdftitle={An Estimate on the Number of Eigenvalues of a Quasiperiodic Jacobi Matrix},
 pdfauthor={Ilia Binder and Mircea Voda}}

\makeatletter


\AtBeginDocument{\providecommand\lemref[1]{\ref{lem:#1}}}
\AtBeginDocument{\providecommand\eqref[1]{\ref{eq:#1}}}
\AtBeginDocument{\providecommand\thmref[1]{\ref{thm:#1}}}
\AtBeginDocument{\providecommand\propref[1]{\ref{prop:#1}}}
\AtBeginDocument{\providecommand\corref[1]{\ref{cor:#1}}}
\pdfpageheight\paperheight
\pdfpagewidth\paperwidth

\RS@ifundefined{subref}
  {\def\RSsubtxt{section~}\newref{sub}{name = \RSsubtxt}}
  {}
\RS@ifundefined{thmref}
  {\def\RSthmtxt{theorem~}\newref{thm}{name = \RSthmtxt}}
  {}
\RS@ifundefined{lemref}
  {\def\RSlemtxt{lemma~}\newref{lem}{name = \RSlemtxt}}
  {}

\numberwithin{equation}{section}
\numberwithin{figure}{section}
\theoremstyle{plain}
\newtheorem{thm}{\protect\theoremname}
\theoremstyle{plain}
\newtheorem{prop}[thm]{\protect\propositionname}
\theoremstyle{plain}
\newtheorem{lem}[thm]{\protect\lemmaname}
\ifx\proof\undefined\
\newenvironment{proof}[1][\protect\proofname]{\par
\normalfont\topsep6\p@\@plus6\p@\relax
\trivlist
\itemindent\parindent
\item[\hskip\labelsep
\scshape
#1]\ignorespaces
}{%
\endtrivlist\@endpefalse
}
\fi
\theoremstyle{plain}
\newtheorem{cor}[thm]{\protect\corollaryname}

\usepackage{amsfonts}
\DeclareMathOperator{\mes}{mes}
\newcommand{\CC}{\mathbb{C}}
\newcommand{\TT}{\mathbb{T}}
\newcommand{\cG}{\mathcal{G}}
\newcommand{\f}{f^{u}}
\newcommand{\fg}{M^{u}}
\newcommand{\g}{f^{a}}
\newcommand{\gh}{M^{a}}
\newcommand{\cA}{\mathcal{A}}
\newcommand{\cB}{\mathcal{B}}
\newcommand{\cM}{\mathcal{M}}
\newcommand{\cE}{\mathcal{E}}
\newcommand{\cD}{\mathcal{D}}
\numberwithin{equation}{section}
\numberwithin{thm}{section}
\textwidth=125mm
\textheight=195mm

\makeatother

\providecommand{\corollaryname}{Corollary}
\providecommand{\lemmaname}{Lemma}
\providecommand{\proofname}{Proof}
\providecommand{\propositionname}{Proposition}
\providecommand{\theoremname}{Theorem}

\begin{document}
\let\corref=\relax
\let\propref=\relax
\newref{eq}{refcmd = {(\ref{#1})} }
\newref{thm}{refcmd = {Theorem \ref{#1}}}
\newref{lem}{refcmd = {Lemma \ref{#1}}}
\newref{prop}{refcmd = {Proposition \ref{#1}}}
\newref{cor}{refcmd = {Corollary \ref{#1}}}

\global\long\def\mb#1{\mathbb{#1}}

\global\long\def\t#1{\tilde{#1}}

\title{An Estimate on the Number of Eigenvalues of a Quasiperiodic Jacobi Matrix of Size $n$ Contained in an Interval of Size $n^{-C}$}
\author{Ilia Binder and Mircea Voda}
\date{}
\maketitle

\begin{abstract}
We consider infinite quasi-periodic Jacobi self-adjoint matrices for
which the three main diagonals are given via values of real analytic
functions on the trajectory of the shift $x\rightarrow x+\omega$.
We assume that the Lyapunov exponent $L(E_{0})$ of the corresponding
Jacobi cocycle satisfies $L(E_{0})\ge\gamma>0$. In this setting we
prove that the number of eigenvalues $E_{j}^{(n)}(x)$ of a submatrix
of size $n$ contained in an interval $I$ centered at $E_{0}$ with
$|I|=n^{-C_{1}}$ does not exceed $\left(\log n\right)^{C_{0}}$ for
any $x$. Here $n\ge n_{0}$, and $n_{0}$, $C_{0}$, $C_{1}$ are
constants depending on $\gamma$ (and the other parameters of the
problem).
\end{abstract}
\let\thefootnote\relax\footnote{\emph{Keywords:} eigenvalues, eigenfunctions, quasiperiodic Jacobi matrix, avalanche principle, large deviations}
\let\thefootnote\relax\footnote{\emph{Mathematics Subject Classification (2010):} Primary 81Q10; Secondary 47B36, 82B44}

\section{Introduction}

Denote $\TT:=\mathbb{R}/\mathbb{Z}$ and let $a:\TT\rightarrow\mb R$,
and $b:\TT\rightarrow\mb C$ be real analytic functions, with $b$
not identically zero. Let $\omega\in\left(0,1\right)$ satisfy a (generic)
Diophantine condition of the form
\[
\left\Vert n\omega\right\Vert \ge\frac{C_{\omega}}{n\left(\log n\right)^{\alpha}},
\]
where $\alpha>1$ is fixed. We consider the quasiperiodic Jacobi operator
$H\left(x,\omega\right)$ defined on $l^{2}\left(\mathbb{Z}\right)$
by
\[
\left[H\left(x,\omega\right)\phi\right]\left(k\right)=-b\left(x+\left(k+1\right)\omega\right)\phi\left(k+1\right)-\overline{b\left(x+k\omega\right)}\phi\left(k-1\right)+a\left(x+k\omega\right)\phi\left(k\right).
\]
The important special case given by $b\equiv1$ (Schrödinger operator)
has been studied extensively (see the monograph \cite{MR2100420}).
The study of results that apply to quasiperiodic Jacobi operators
in such a general setting has been launched by the recent work of
Jitomirskaya, Koslover, and Schulteis \cite{MR2563096} and Jitomirskaya
and Marx \cite{MR2825743}. In particular, they studied the extended
Harper's model which corresponds to $a\left(x\right)=2\cos(2\pi x)$,
$b(x)=\lambda_{1}e^{2\pi i(x-\omega/2)}+\lambda_{2}+\lambda_{3}e^{-2\pi i\left(x-\omega/2\right)}$
(see \cite{MR2121278,2010arXiv1010.0751J}). Further motivation for
the study of these operators comes from the general fact that quasiperiodic
Jacobi operators are necessary for the solution of the inverse spectral
problem for discrete quasiperiodic operators of second order, and
for the solution of the Todda Lattice with quasiperiodic initial data.

The main objective of this work is to estimate the number of Dirichlet
eigenvalues of the problem on a finite interval of length $n$ which
fall into a given interval of size $n^{-C}$. This type of estimate
plays a central role in the work of Goldstein and Schlag \cite{MR1847592,MR2438997}.
In our analysis we use many ideas and methods of their work. On the
other hand, as it was noted in \cite{MR2825743}, the singularities
(associated with the zeros of $b$) of the corresponding matrix-functions
introduce considerable technical difficulties. These difficulties
are addressed by using a large deviation theorem for subharmonic functions
(\cite[Theorem 3.8]{MR1847592}) applied to $\log\left|b\right|$,
which will allow us to include the singularities in the exceptional
sets. The derivation of the large deviation estimate for the characteristic
polynomials via the method of \cite{MR2438997} becomes especially
complicated, even if $b$ would have no zeros. We show how to get
around these difficulties by introducing a different derivation which
makes a finer use of the cocyle structure (see the proof of \lemref{considerable-difficulty-lemma}).
Our estimate on the number of eigenvalues also improves on the estimate
in \cite{MR2438997}.

The methods we will employ are complex analytic, so from now we canonically
identify $\TT$ with the unit circle in $\mb C$. It is known that
$a$ and $b$ can be extended to be (complex) analytic on a neighborhood
of $\TT$. Let $\t b\left(z\right):=\overline{b\left(1/\bar{z}\right)}$
denote the analytic extension of $\bar{b}$. We now extend the definition
of $H\left(\cdot,\omega\right)$, to a neighborhood on which both
$a$ and $b$ can be extended, by
\[
\left[H\left(z,\omega\right)\phi\right]\left(k\right)=-b\left(z+\left(k+1\right)\omega\right)\phi\left(k+1\right)-\t b\left(z+k\omega\right)\phi\left(k-1\right)+a\left(z+k\omega\right)\phi\left(k\right).
\]
Note that $H\left(\cdot,\omega\right)$ is not necessarily self-adjoint
off $\TT$. For simplicity we make the notational convention that
$z+k\omega:=z\exp\left(2\pi ik\omega\right)$, for $z\in\mb C$ and
$k\in\mb Z$. 

We consider the finite Jacobi submatrix on $\left[0,n-1\right]$,
denoted by $H^{(n)}\left(z,\omega\right)$, and defined by
\[
\left[\begin{array}{ccccc}
a\left(z\right) & -b\left(z+\omega\right) & 0 & \ldots & 0\\
-\tilde{b}\left(z+\omega\right) & a\left(z+\omega\right) & -b\left(z+2\omega\right) & \ldots & 0\\
\ddots & \ddots & \ddots & \ldots & \vdots\\
0 & \ldots & 0 & -\t b\left(z+\left(n-1\right)\omega\right) & a\left(z+\left(n-1\right)\omega\right)
\end{array}\right].
\]
Let $L\left(E\right)$ be the Lyapunov exponent associated with $H\left(x,\omega\right)$
(see \eqref{Lyapunov}). Our main result is as follows.

\newtheorem*{mainthm}{Main Theorem}

\begin{mainthm}Assume that $E_{0}\in\mb R$ is such that $L\left(E_{0}\right)\ge\gamma>0$.
Then there exist constants $C_{0}=C_{0}\left(\omega\right)$, $C_{1}=C_{1}\left(a,b,E_{0},\omega,\gamma\right)$,
and $n_{0}=n_{0}\left(a,b,E_{0},\omega,\gamma\right)$ such that for
every $x\in\TT$ and $n\ge n_{0}$ the number of eigenvalues for $H^{(n)}\left(x,\omega\right)$
located in $\left\{ E:\,\left|E-E_{0}\right|<n^{-C_{1}}\right\} $
is at most $\left(\log n\right)^{C_{0}}$ and furthermore, for any
$x_{0}\in\TT$ and $n\ge n_{0}$ the number of zeros for $\det\left(H^{(n)}\left(\cdot,\omega\right)-E_{0}\right)$
contained in $\left\{ z:\,\left|z-x_{0}\right|<n^{-1}\right\} $ is
at most $\left(\log n\right)^{C_{0}}$.\end{mainthm}

In the Schrödinger case such estimates and further refinements were
obtained by Goldstein and Schlag (see \cite[Proposition 4.9]{MR2438997}).
In fact we will prove a slightly stronger theorem, \thmref{Number-of-eigenvalues}.

\subsection*{Acknowledgements}

The authors are grateful to Michael Goldstein for suggesting the problem
and for extensive discussions which were instrumental to the completion
of the project. The first author was partially supported by the NSERC
Discovery Grant 5810-2009-298433.

\section{Preliminaries}

We proceed by introducing some notation and giving an overview of
the methods. For $\phi$ satisfying the difference equation $H\left(z,\omega\right)\phi=E\phi$
let $M_{n}$ be the matrix such that
\[
\left[\begin{array}{c}
\phi\left(n\right)\\
\phi\left(n-1\right)
\end{array}\right]=M_{n}\left[\begin{array}{c}
\phi\left(0\right)\\
\phi\left(-1\right)
\end{array}\right],\, n\ge1.
\]
We call $M_{n}$ the fundamental matrix. We clearly have 
\[
M_{n}\left(z\right)=\prod_{j=n-1}^{0}\left(\frac{1}{b\left(z+\left(j+1\right)\omega\right)}\left[\begin{array}{cc}
a\left(z+j\omega\right)-E & -\tilde{b}\left(z+j\omega\right)\\
b\left(z+\left(j+1\right)\omega\right) & 0
\end{array}\right]\right),
\]
for $z$ such that $\prod_{j=1}^{n}b\left(z+j\omega\right)\neq0$.
Note that in order to simplify the notation we suppressed the dependence
on $\omega$ and $E$. We will be doing this throughout the paper
whenever possible. From now on, if needed, we will include the set
on which the matrices $M_{n}$ are not defined in the exceptional
sets. 

It is straightforward to see that
\begin{equation}
M_{n}\left(z\right)=\left[\begin{array}{cc}
f_{n}\left(z\right) & -\frac{\t b\left(z\right)}{b\left(z+\omega\right)}f_{n-1}\left(z+\omega\right)\\
f_{n-1}\left(z\right) & -\frac{\t b\left(z\right)}{b\left(z+\omega\right)}f_{n-2}\left(z+\omega\right)
\end{array}\right],\label{eq:M_n-entries}
\end{equation}
with
\begin{align}
f_{n}\left(z\right) & =\frac{1}{\prod_{j=1}^{n}b\left(z+j\omega\right)}f_{n}^{a}\left(z\right),\label{eq:f...f^a}
\end{align}
where 
\[
f_{n}^{a}\left(z\right)=\det\left[H^{n}\left(z,\omega\right)-E\right].
\]
Since $f_{n}^{a}\left(x,E\right)$ is the characteristic polynomial
of $H^{\left(n\right)}\left(x,\omega\right)$ so it is natural to
estimate the number of eigenvalues by applying Jensen's formula to
$f_{n}^{a}$. For this to work we need upper and lower estimates on
$\log\left|f_{n}^{a}\right|$. These estimates will follow from the
deviations estimates for the fundamental matrix and its entries (see
\thmref{ldt} and \propref{ldt_entries}). 

The main tools for obtaining the deviations estimates for the fundamental
matrix are a deviations estimate for subharmonic functions and the
Avalanche Principle, both of which we recall next. In what follows
$\cA_{\rho}$ will denote the annulus $\left\{ z\in\mathbb{C}:\,\left|z\right|\in\left(1-\rho,1+\rho\right)\right\} $
and $ $we fix $p>\alpha+2$.
\begin{thm}
\label{thm:sh_ldt}(\cite[Theorem 3.8]{MR1847592}) Let $u$ be a
subharmonic function and let 
\[
u\left(z\right)=\int_{\mathbb{C}}\log\left|z-\zeta\right|d\mu\left(\zeta\right)+h\left(z\right)
\]
be its Riesz representation on a neighborhood of $\cA_{\rho}$. If
$\mu\left(\cA_{\rho}\right)+\left\Vert h\right\Vert _{L^{\infty}\left(\cA_{\rho}\right)}\le M$
then for any $\delta>0$ and any positive integer $n$ we have
\[
\mes\left(\left\{ x\in\TT:\left|\sum_{k=1}^{n}u\left(x+k\omega\right)-n\left\langle u\right\rangle \right|>\delta n\right\} \right)<\exp\left(-c_{0}\delta n+r_{n}\right)
\]
where $c_{0}=c_{0}\left(\omega,M,\rho\right)$ and
\[
r_{n}=\begin{cases}
C_{0}\left(\log n\right)^{p} & ,\, n>1\\
C_{0} & ,\, n=1,
\end{cases}
\]
with $C_{0}=C_{0}\left(\omega,p\right)$.
\end{thm}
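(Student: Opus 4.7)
The plan is to prove this large deviation estimate via the standard BMO approach: first show that subharmonic functions with bounded Riesz data lie in $\mathrm{BMO}(\TT)$ with quantitative control, then show that averaging along a Diophantine orbit produces a Birkhoff sum whose BMO norm grows only poly-logarithmically in $n$, and finally apply John--Nirenberg to convert the BMO bound into the required exponential tail.

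\textbf{Step 1: BMO bound for $u$.} Using the Riesz representation $u(z)=\int\log|z-\zeta|\,d\mu(\zeta)+h(z)$ on a neighborhood of $\cA_\rho$, the harmonic piece $h$ is bounded by $M$ and so contributes $O(M)$ to $\|u\|_{\mathrm{BMO}(\TT)}$. For the potential piece, the kernel $x\mapsto\log|x-\zeta|$ is uniformly in $\mathrm{BMO}(\TT)$ with norm independent of $\zeta\in\CC$, so integrating against a measure of total mass at most $M$ preserves the bound. Together these give $\|u\|_{\mathrm{BMO}(\TT)}\lesssim M$ and the Fourier decay $|\hat u(k)|\lesssim M/|k|$ for $k\neq 0$ (derived by passing to boundary values and differentiating the potential representation).

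\textbf{Step 2: BMO bound for the Birkhoff sum.} Let $S_n(x):=\sum_{k=1}^n u(x+k\omega)-n\langle u\rangle$. Its $m$th Fourier coefficient is $\hat u(m)\cdot K_n(m)$ where $K_n(m)=\sum_{k=1}^n e^{2\pi imk\omega}$ satisfies $|K_n(m)|\le\min(n,1/\|m\omega\|)$. Inserting the Diophantine condition $\|m\omega\|\ge C_\omega/(m(\log m)^\alpha)$ and the Fourier decay from Step 1, I would split the frequency range at the threshold $|m|\sim n/(\log n)^\alpha$ and sum the two pieces, using a Bernstein-type comparison to pass from $L^2$ to BMO on each dyadic block of frequencies. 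The result is
\[
\|S_n\|_{\mathrm{BMO}(\TT)}\lesssim M(\log n)^p,
\]
valid for any fixed $p>\alpha+2$; the extra two logs come from the Bernstein step and the tail of the Fourier sum.

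\textbf{Step 3: John--Nirenberg.} Applying the John--Nirenberg inequality to $S_n$ yields
\[
\mes\!\bigl(\{x\in\TT:|S_n(x)|>\delta n\}\bigr)\le\exp\!\bigl(-c\,\delta n/\|S_n\|_{\mathrm{BMO}(\TT)}\bigr),
\]
and substituting the bound from Step 2 gives precisely $\exp(-c_0\delta n+r_n)$ with $r_n=C_0(\log n)^p$ and $c_0=c_0(\omega,M,\rho)$. The $n=1$ case follows directly from $S_1$ being BMO of norm $O(M)$ by Step 1, with $r_1=C_0$.

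\textbf{Main obstacle.} The delicate part is Step 2: obtaining $(\log n)^p$ rather than a polynomial-in-$n$ BMO bound requires carefully splitting the Fourier expansion exactly at the Diophantine threshold and tracking the arithmetic loss coming from the $(\log m)^\alpha$ factor. Any cruder treatment of the small-denominator contribution would degrade $r_n$ to a power of $n$ and destroy the exponential tail; the sharp interplay between the Diophantine exponent and the Bernstein step is what dictates the admissible range $p>\alpha+2$.
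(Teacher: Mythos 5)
This theorem is cited from Goldstein--Schlag \cite[Theorem 3.8]{MR1847592}; the present paper does not reprove it, so I am checking your outline on its own terms. There is a genuine gap at Step 3. John--Nirenberg applied to $S_{n}(x)=\sum_{k=1}^{n}u(x+k\omega)-n\left\langle u\right\rangle $ (which has mean zero) gives $\mes\left\{ x\in\TT:\left|S_{n}(x)\right|>\delta n\right\} \le C\exp\bigl(-c\,\delta n/\left\Vert S_{n}\right\Vert _{BMO(\TT)}\bigr)$. Substituting your Step 2 claim $\left\Vert S_{n}\right\Vert _{BMO(\TT)}\lesssim M(\log n)^{p}$ yields $\exp\bigl(-c'\delta n/(\log n)^{p}\bigr)$, which is \emph{not} the claimed $\exp(-c_{0}\delta n+r_{n})$: in the only nontrivial regime $\delta n\gtrsim(\log n)^{p}$ the theorem's bound is essentially $\exp(-c_{0}\delta n)$, while the John--Nirenberg bound loses a full factor $(\log n)^{p}$ in the exponent. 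Dividing the rate by $(\log n)^{p}$ is not the same as adding $r_{n}$; a direct check shows $\exp\bigl(-c'\delta n/(\log n)^{p}\bigr)\le\exp(-c_{0}\delta n+r_{n})$ forces $\delta n\lesssim(\log n)^{p}$, exactly the range where the theorem is vacuous. A single BMO/John--Nirenberg pass could only produce the stated form if $\left\Vert S_{n}\right\Vert _{BMO(\TT)}=O(1)$ uniformly in $n$, which you neither claim nor can expect for a Birkhoff sum of a log potential with $n$ singularities near $\TT$.

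The actual proof obtains the linear-in-$\delta n$ exponent not from John--Nirenberg but from a Cartan-type sub-level estimate for the log-potential part $\int\log\left|z-\zeta\right|d\mu(\zeta)$ of $u$, which produces exponential tails in the depth directly; the Fourier/Diophantine analysis enters only to control the remaining smooth contribution and is responsible for the $(\log n)^{p}$ correction. The remark in the paper immediately after the statement, that $r_{n}=C_{0}\log n$ when $n$ is a convergent denominator $q_{s}$, reflects that the argument is sharpest at those scales (where the orbit is optimally equidistributed) and handles general $n$ by blocking into denominators, picking up the extra logarithms. Your Step 2 is also not substantiated as written: after normalizing by $n$, the dyadic Fourier split using only the worst-case Diophantine lower bound on $\left\Vert m\omega\right\Vert$ gives sup control of the low-frequency piece of order $M/\log n$ rather than $M(\log n)^{p}/n$, and the BMO splitting lemma of \cite{MR1843776} applied to $S_{n}$ charges a Riesz mass of order $nM$. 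But even granting some version of the BMO claim, Step 3 still produces the wrong functional form, so the proposal does not establish the theorem.
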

If $p_{s}/q_{s}$ is a convergent of $\omega$ and $n=q_{s}>1$ then
one can choose $r_{n}=C_{0}\log n$. One can keep this in mind, but
we will make no use of it.
\begin{prop}
\label{prop:avalanche}(Avalanche Principle; \cite[Proposition 3.3]{MR2438997})
Let $A_{1},\ldots,A_{n}$, $n\ge2$, be a sequence of $2\times2$
matrices. If
\[
\max_{1\le j\le n}\left|\det A_{j}\right|\le1,
\]
\[
\min_{1\le j\le n}\left\Vert A_{j}\right\Vert \ge\mu>n,
\]
and
\[
\max_{1\le j<n}\left(\log\left\Vert A_{j+1}\right\Vert +\log\left\Vert A_{j}\right\Vert -\log\left\Vert A_{j+1}A_{j}\right\Vert \right)<\frac{1}{2}\log\mu
\]
then 
\[
\left|\log\left\Vert A_{n}\ldots A_{1}\right\Vert +\sum_{j=2}^{n-1}\log\left\Vert A_{j}\right\Vert -\sum_{j=1}^{n-1}\log\left\Vert A_{j+1}A_{j}\right\Vert \right|<C_{0}\frac{n}{\mu}
\]
with some absolute constant $C_{0}$.
\end{prop}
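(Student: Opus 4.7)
The plan is to exploit that each $A_{j}$, with $|\det A_{j}|\le 1$ and $\|A_{j}\|\ge\mu$, is close to rank one. Write the singular value decomposition $A_{j}=B_{j}+E_{j}$ with $B_{j}:=\sigma_{j}u_{j}v_{j}^{*}$, where $\sigma_{j}:=\|A_{j}\|$ and $u_{j},v_{j}$ are the top left and right singular unit vectors, so that $\|E_{j}\|=|\det A_{j}|/\sigma_{j}\le 1/\mu$. A direct computation gives
\[
B_{j+1}B_{j}=\sigma_{j+1}\sigma_{j}\,c_{j}\,u_{j+1}v_{j}^{*},\qquad c_{j}:=v_{j+1}^{*}u_{j},
\]
rank one of norm $\sigma_{j+1}\sigma_{j}|c_{j}|$, while $\|A_{j+1}A_{j}-B_{j+1}B_{j}\|\le 3\sigma_{j+1}\sigma_{j}/\mu^{2}$ using $\sigma_{j},\sigma_{j+1}\ge\mu$. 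The third hypothesis reads $\|A_{j+1}A_{j}\|\ge\sigma_{j+1}\sigma_{j}\mu^{-1/2}$, so $|c_{j}|\ge\tfrac{1}{2}\mu^{-1/2}$ and $\log\|A_{j+1}A_{j}\|=\log(\sigma_{j+1}\sigma_{j}|c_{j}|)+O(\mu^{-3/2})$.

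Next I would expand the full product
\[
P_{n}:=A_{n}\cdots A_{1}=\sum_{S\subseteq\{1,\ldots,n\}}M_{S},
\]
where $M_{S}$ uses $E_{j}$ at positions $j\in S$ and $B_{j}$ elsewhere. The leading term $M_{\emptyset}=B_{n}\cdots B_{1}=\bigl(\prod_{j=1}^{n}\sigma_{j}\bigr)\bigl(\prod_{j=1}^{n-1}c_{j}\bigr)u_{n}v_{1}^{*}$ has norm $\prod\sigma_{j}\prod|c_{j}|$. For nonempty $S$, grouping the maximal runs of consecutive $B$'s into rank one factors and bounding each intervening $|v^{*}E_{i}u|\le 1/\mu$ yields
\[
\frac{\|M_{S}\|}{\|M_{\emptyset}\|}\le\prod_{i\in S}\frac{1}{\sigma_{i}\mu|c_{i-1}||c_{i}|}\le\Bigl(\frac{C}{\mu}\Bigr)^{|S|},
\]
with the convention $|c_{0}|=|c_{n}|=1$. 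Summing the binomial series and using $\mu>n$,
\[
\sum_{S\neq\emptyset}\frac{\|M_{S}\|}{\|M_{\emptyset}\|}\le(1+C/\mu)^{n}-1\le C'n/\mu,
\]
so $\|P_{n}\|=\|M_{\emptyset}\|\,(1+O(n/\mu))$.

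Taking logarithms gives $\log\|P_{n}\|=\sum_{j=1}^{n}\log\sigma_{j}+\sum_{j=1}^{n-1}\log|c_{j}|+O(n/\mu)$. Substituting the expression for $\log|c_{j}|$ derived in the first paragraph and simplifying the telescoping sum of $\log\sigma_{j}$ terms yields
\[
\log\|P_{n}\|=\sum_{j=1}^{n-1}\log\|A_{j+1}A_{j}\|-\sum_{j=2}^{n-1}\log\|A_{j}\|+O(n/\mu),
\]
which is the desired identity. The delicate part is the combinatorial bound $\|M_{S}\|/\|M_{\emptyset}\|\le(C/\mu)^{|S|}$ with an absolute $C$: each insertion of an $E_{i}$ breaks a rank one $B$-run into two, replacing $\sigma_{i}\ge\mu$ by $\|E_{i}\|\le 1/\mu$ and eliminating the connection factors $|c_{i-1}|,|c_{i}|\ge\mu^{-1/2}$; these two sources of smallness combine to give a per-$E$ ratio of $O(1/\mu)$, and in configurations of consecutive $E$'s the shared connection factor is dropped at most once, so the product bound still applies and the geometric series sums to $O(n/\mu)$ against the $\binom{n}{k}$ in the expansion.
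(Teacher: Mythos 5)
The paper does not prove this proposition at all: it is quoted verbatim from Goldstein--Schlag \cite[Proposition 3.3]{MR2438997}, whose argument runs through the stability of expanding and contracting singular directions (the same circle of ideas quoted later in this paper as Lemma \ref{lem:expanding_contracting_directions}). Your route is genuinely different and mostly sound: you split $A_{j}=B_{j}+E_{j}$ with $B_{j}$ the top rank-one part, note $\|E_{j}\|=|\det A_{j}|/\sigma_{j}\le1/\mu$, deduce $|c_{j}|=|v_{j+1}^{*}u_{j}|\ge\tfrac12\mu^{-1/2}$ and $\log\|A_{j+1}A_{j}\|=\log(\sigma_{j+1}\sigma_{j}|c_{j}|)+O(\mu^{-3/2})$ from the third hypothesis, and expand the full product into $2^{n}$ terms dominated by the rank-one term $M_{\emptyset}$ whose norm is exactly $\prod\sigma_{j}\prod|c_{j}|$. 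The combinatorial bound $\|M_{S}\|/\|M_{\emptyset}\|\le(C/\mu)^{|S|}$ is correct (each inserted $E_{i}$ trades $\sigma_{i}\ge\mu$ and at most two lost factors $|c|\ge\tfrac12\mu^{-1/2}$ against $\|E_{i}\|\le1/\mu$, and double-counting a shared $|c_{i}|\le1$ only weakens the bound), and the final telescoping of the $\log\sigma_{j}$'s does reproduce the stated identity. What this buys is a short, self-contained proof with no induction on partial products; what it loses is sharpness in the borderline regime, as follows.

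The one genuine weak point is the last step, ``taking logarithms.'' Your expansion gives $\bigl|\|P_{n}\|-\|M_{\emptyset}\|\bigr|\le\bigl((1+C/\mu)^{n}-1\bigr)\|M_{\emptyset}\|$, and the constant you can actually extract (something like $e^{C}-1$ with $C\approx4$) exceeds $1$. Under the stated hypothesis $\mu>n$ the quantity $C'n/\mu$ need not be small, so the triangle-inequality lower bound $\|P_{n}\|\ge\|M_{\emptyset}\|(1-C'n/\mu)$ can be vacuous and $\log(1+O(n/\mu))$ is not controlled; the same issue, in miniature, affects the constants in $|c_{j}|\ge\tfrac12\mu^{-1/2}$ for $\mu$ below an absolute threshold. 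As written, your argument therefore proves the proposition only under $\mu\ge C'n$ for some absolute $C'$, not for all $\mu>n$. This is harmless for every use in this paper, since in Lemma \ref{lem:AppliedAP} one has $\mu=\exp(\gamma l/2)$ against $m\le n/l$ factors, so the ratio is at most $1/l$; but to recover the statement verbatim you would need to accumulate the errors multiplicatively (e.g.\ by comparing successive partial products inductively, so the loss is $\prod_{j}(1-C/\mu)\ge e^{-Cn/\mu}$ rather than $1-(e^{Cn/\mu}-1)$), which is essentially what the cited singular-direction argument of Goldstein--Schlag accomplishes.
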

In \cite{MR1847592} (where $b\equiv1$) one takes advantage of the
fact that $\log\left\Vert M_{n}\left(\cdot\right)\right\Vert $ is
subharmonic (on a neighborhood of $\TT$) and that it is almost invariant
to get a first deviations estimate by using \thmref{sh_ldt}. Next,
this estimate is used to apply the Avalanche Principle, which together
with the almost invariance yields a sharper deviations estimate. Almost
invariance refers to the fact that
\[
\left|\log\left\Vert M_{n}\left(x\right)\right\Vert -\frac{1}{l}\sum_{k=0}^{l-1}\log\left\Vert M_{n}\left(x+k\omega\right)\right\Vert \right|\le Cl,\, x\in\TT.
\]
In our case $\log\left\Vert M_{n}\left(\cdot\right)\right\Vert $
is not necessarily subharmonic, the Avalanche Principle (as stated)
cannot be applied to $M_{n}$, because it possible that $\left|\det M_{n}\right|\nleqslant1$,
and the almost invariance may fail to hold on $\TT$. To work around
these issues it is natural to use the following two matrices associated
with $M_{n}$:
\begin{equation}
\gh_{n}\left(z\right)=\left(\prod_{j=1}^{n}b\left(z+j\omega\right)\right)M_{n}\left(z\right)\label{eq:M^a}
\end{equation}
and
\begin{align}
\fg_{n}\left(z\right) & =\frac{1}{\sqrt{\left|\det M_{n}\left(z\right)\right|}}M_{n}\left(z\right)=\left(\prod_{j=0}^{n-1}\left|\frac{b\left(z+\left(j+1\right)\omega\right)}{\t b\left(z+j\omega\right)}\right|^{1/2}\right)M_{n}\left(z\right)\label{eq:M^u}
\end{align}
$\gh_{n}\left(\cdot\right)$ is analytic and hence $\log\left\Vert \gh_{n}\left(\cdot\right)\right\Vert $
is subharmonic, and $M_{n}^{u}\left(\cdot\right)$ is unimodular (i.e.
$\left|\det M_{n}^{u}\right|=1$). Clearly, we will apply \thmref{sh_ldt}
to $\log\left\Vert M_{n}^{a}\right\Vert $ and the Avalanche Principle
to $M_{n}^{u}$. Note that $\log\left\Vert M_{n}^{a}\left(\cdot\right)\right\Vert $
would be subharmonic even if we had $\bar{b}$ instead of $\tilde{b}$,
however $\t b$ is needed to ensure that $f_{n}^{a}$ is analytic.
Furthermore, if we have $\bar{b}$ instead of $\tilde{b}$ the function
$\log\left|f_{n}^{a}\left(\cdot\right)\right|$ is not necessarily
subharmonic. 

Using \eqref{M_n-entries}, \eqref{M^a}, and \eqref{M^u} it is straightforward
to check that
\begin{equation}
\gh_{n}\left(z\right)=\left[\begin{array}{cc}
\g_{n}\left(z\right) & -\t b\left(z\right)\g_{n-1}\left(z+\omega\right)\\
b\left(z+n\omega\right)\g_{n-1}\left(z\right) & -\t b\left(z\right)b\left(z+n\omega\right)\g_{n-2}\left(z+\omega\right)
\end{array}\right],\label{eq:M^a-entries}
\end{equation}
\begin{equation}
\fg_{n}\left(z\right)=\left[\begin{array}{cc}
\f_{n}\left(z\right) & -\frac{\t b\left(z\right)}{b\left(z+\omega\right)}\left|\frac{b\left(z+\omega\right)}{\t b\left(z\right)}\right|^{1/2}\f_{n-1}\left(z+\omega\right)\\
\left|\frac{b\left(z+n\omega\right)}{\t b\left(z+\left(n-1\right)\omega\right)}\right|^{1/2}\f_{n-1}\left(x\right) & -\frac{\t b\left(z\right)}{b\left(z+\omega\right)}\left|\frac{b\left(z+n\omega\right)b\left(z+\omega\right)}{\t b\left(z+\left(n-1\right)\omega\right)\t b\left(z\right)}\right|^{1/2}\f_{n-2}\left(z+\omega\right)
\end{array}\right],\label{eq:M^u-entries}
\end{equation}
where
\begin{equation}
f_{n}^{u}\left(z\right)=\left(\prod_{j=0}^{n-1}\left|\frac{b\left(z+\left(j+1\right)\omega\right)}{\t b\left(z+j\omega\right)}\right|^{1/2}\right)f_{n}\left(z\right)\label{eq:f...f^u}
\end{equation}
($f_{n}$ and $f_{n}^{a}$ have already been defined).

Let $S_{n}\left(z\right)=\sum_{k=0}^{n-1}\log\left|b\left(z+k\omega\right)\right|$
and $\t S_{n}\left(z\right)=\sum_{k=0}^{n-1}\log\left|\t b\left(z+k\omega\right)\right|$.
From \eqref{M^a} and \eqref{M^u} we get
\begin{equation}
\log\left\Vert M_{n}\left(z\right)\right\Vert =-S_{n}\left(z+\omega\right)+\log\left\Vert \gh_{n}\left(z\right)\right\Vert ,\label{eq:logM...logM^a}
\end{equation}
and
\begin{equation}
\log\left\Vert \fg_{n}\left(z\right)\right\Vert =-\frac{1}{2}\left(\t S_{n}\left(z\right)+S_{n}\left(z+\omega\right)\right)+\log\left\Vert \gh_{n}\left(z\right)\right\Vert .\label{eq:logM^u...logM^a}
\end{equation}
It will be easy to see that these relations together with \thmref{sh_ldt}
applied to $\log\left|b\right|$ and $\log\left|\tilde{b}\right|$
allow us to pass from deviations estimates for $\gh_{n}$ to deviations
estimates for $M_{n}$ and $\fg_{n}$ (see for example \corref{ldt_delta^2...M^u}). 

Even though we will apply the Avalanche Principle to $\fg_{n}$ the
conclusion will also hold for $\gh_{n}$ and $M_{n}$. We will make
this more precise. Let $n=\sum_{j=1}^{m}l_{j}$, $s_{k}=\sum_{j=1}^{k}l_{j}$
where $m,l_{1},\ldots,l_{m}$ are positive integers. We assume that
$s_{0}=0$. By saying that, for example, the conclusion of the Avalanche
Principle applied to $\fg_{n}$ also holds for $\gh_{n}$ we mean
that 
\begin{multline*}
\log\left\Vert \fg_{n}\left(z\right)\right\Vert +\sum_{j=2}^{m-1}\log\left\Vert \fg_{l_{j}}\left(z+s_{j-1}\omega\right)\right\Vert \\
-\sum_{j=1}^{m-1}\log\left\Vert \fg_{l_{j+1}}\left(z+s_{j}\omega\right)\fg_{l_{j}}\left(z+s_{j-1}\omega\right)\right\Vert =\log\left\Vert \gh_{n}\left(z\right)\right\Vert \\
+\sum_{j=2}^{m-1}\log\left\Vert \gh_{l_{j}}\left(z+s_{j-1}\omega\right)\right\Vert -\sum_{j=1}^{m-1}\log\left\Vert \gh_{l_{j+1}}\left(z+s_{j}\omega\right)\gh_{l_{j}}\left(z+s_{j-1}\omega\right)\right\Vert .
\end{multline*}
This follows easily from \eqref{logM^u...logM^a}.

The deviations estimate for $\log\left|f_{n}^{a}\right|$ is just
the John-Nirenberg inequality. The needed $BMO$ norm bound will be
obtained by using the {}``$BMO$ splitting lemma'' \cite[Lemma 2.3]{MR1843776}.
As in the case for the fundamental matrix, we first obtain a rough
estimate (\lemref{ldt_entries_weak}) that allows us to apply the
Avalanche Principle in order to obtain a better estimate. We follow
the approach from \cite{MR2438997} with the notable exception of
the proof of \lemref{considerable-difficulty-lemma} (cf. \cite[Lemma 2.7]{MR2438997}).
This is the only place where the difficulties come not only from the
possible zeroes of $b$ but also from the fact that $b$ is not constant. 

We will obtain a uniform upper bound for $\log\left|f_{n}^{a}\left(\cdot\right)\right|$
on $\TT$ from an uniform upper bound for $\log\left\Vert M_{n}^{a}\left(\cdot\right)\right\Vert $
(\propref{M^a-upper-bound}) and the obvious inequality $\log\left|f_{n}^{a}\left(\cdot\right)\right|$
$\le\log\left\Vert M_{n}^{a}\left(\cdot\right)\right\Vert $. The
proof of \propref{M^a-upper-bound} requires that the deviations estimate
for $\log\left\Vert M_{n}^{a}\right\Vert $ holds on $r\TT$ for $r$
in a neighborhood of $1$. Of course this implies that all the results
leading to the deviations estimate should also hold on $r\TT$. For
simplicity we will prove these estimates on $\TT$, however the proofs
will be such that the generalization from $\TT$ to $r\TT$ is immediate.
To this end the derivations up to \propref{M^a-upper-bound} won't
use the fact that $\t b=\bar{b}$ on $\TT$. However, after that point
we only need the results to hold on $\TT$ and we will make use of
said fact to simplify notation. 

The deviations estimates will rely on the positivity of the Lyapunov
exponent. Let 
\[
L_{n}\left(r\right)=\frac{1}{n}\int_{\mathbb{T}}\log\left\Vert M_{n}\left(rx\right)\right\Vert dx,
\]
\[
L_{n}^{u}\left(r\right)=\frac{1}{n}\int_{\TT}\log\left\Vert M_{n}^{u}\left(rx\right)\right\Vert dx,
\]
\[
L_{n}^{a}\left(r\right)=\frac{1}{n}\int_{\mathbb{T}}\log\left\Vert \gh_{n}\left(rx\right)\right\Vert dx,
\]
\[
D\left(r\right)=\int_{\mathbb{T}}\log\left|b\left(rx\right)\right|dx,
\]
and
\[
\t D\left(r\right)=\int_{\TT}\log\left|\t b\left(rx\right)\right|dx.
\]
When $r=1$ we will omit the $r$ argument, so for example we will
write $L_{n}$ instead of $L_{n}\left(1\right)$. The quantities $L_{n}^{a}\left(r\right)$,
$D\left(r\right)$, and $\t D\left(r\right)$ are finite because the
integrands are subharmonic (and not identically $-\infty$), and $L_{n}\left(r\right)$
is finite because from \eqref{logM...logM^a} we have 
\begin{equation}
L_{n}\left(r\right)=-D\left(r\right)+L_{n}^{a}\left(r\right).\label{eq:L_n...L^a_n}
\end{equation}
By Kingman's subadditive ergodic theorem the following limits exist:
\begin{eqnarray}
L\left(r\right) & = & \lim_{n\rightarrow\infty}L_{n}\left(r\right)=\inf_{n\ge1}L_{n}\left(r\right),\label{eq:Lyapunov}\\
L^{u}\left(r\right) & = & \lim_{n\rightarrow\infty}L_{n}^{u}\left(r\right)=\inf_{n\ge1}L_{n}^{u}\left(r\right),\\
L^{a}\left(r\right) & = & \lim_{n\rightarrow\infty}L_{n}^{a}\left(r\right)=\inf_{n\ge1}L_{n}^{a}\left(r\right).
\end{eqnarray}
$L=L\left(E,\omega\right)$ is called the Lyapunov exponent. From
\eqref{M^u} it can be seen that
\[
L^{u}\left(r\right)=\frac{1}{2}(\t D\left(r\right)-D\left(r\right))+L\left(r\right)
\]
and in particular, since $D=\tilde{D}$, we have $L=L^{u}$. Since
$M_{n}^{u}$ is unimodular it follows that $L_{n}^{u}\left(r\right)\ge0$,
and hence $L^{u}\left(r\right)\ge0$. In particular we have that $L=L^{u}\ge0$. 

Fix $\gamma>0$. From now on we assume that $L\ge\gamma>0$. This
assumption is needed to apply the Avalanche Principle, so in fact
we will use $L^{u}=L\ge\gamma>0$. For the results to hold on $r\TT$,
$r\neq1$, we will need that $r$ is close enough to $1$ so that
$L^{u}\left(r\right)\ge\gamma/2>0$. Note that the results up to \lemref{Ln_L}
don't use the Avalanche Principle and so they hold without the assumption
that $L\ge\gamma>0$.

Henceforth we will assume that $a$ and $b$ are analytic on the closure
of $\cA_{\rho_{0}''}$ with $\rho_{0}''>0$ fixed. We also fix $\rho_{0}$
and $\rho_{0}'$ such that $0<\rho_{0}<\rho_{0}'<\rho_{0}''$. $ $The
reason for this setup is that $\log\left\Vert M_{n}^{a}\left(\cdot\right)\right\Vert $
will have a Riesz representation on $\cA_{\rho_{0}'}$ but we will
be able to get the estimates on the Riesz representation (needed for
\thmref{sh_ldt}) only on $\cA_{\rho_{0}}$. The estimates before
\propref{M^a-upper-bound} will hold on $r\TT$ for every $r\in\left(1-\rho_{0}/2,1+\rho_{0}/2\right)$
(provided $L^{u}\left(r\right)>0$) and the constants can be chosen
uniformly for all such $r$. \propref{M^a-upper-bound} will hold
on $r\TT$ for every $r\in\left(1-\rho_{0}/4,1+\rho_{0}/4\right)$
(provided $L^{u}\left(r\right)>0$).

\section{Estimates for the Fundamental Matrix\label{sec:Estimates-for-Matrix}}

First we prove the almost invariance of $\gh_{n}$ (see \eqref{M^a-almost-invariance}).
The following lemma and its corollaries contain the main estimates
that are needed to deal with the fact that $b$ could have zeros.
If $b$ doesn't have any zeros then all the estimates hold trivially
without exceptional sets and everything goes as in \cite{MR1847592}.

In what follows we will keep track of the dependence of the various
constants on the parameters of our problem. The dependence on $\omega$
will only come up through \thmref{sh_ldt}. In order to simplify the
notation we won't record the dependence on $\rho_{0}$, $\rho_{0}'$,
and $\rho_{0}''$ (except in the lemmas where $\rho_{0}$ appears
in the statement). Dependence on any other quantities is such that
if the quantity takes values in a compact set, then the constant can
be chosen uniformly with respect to that quantity. The main dependence
we are interested in, is that on $\left|E\right|$. We denote by $\left\Vert \cdot\right\Vert _{\infty}$
the $L^{\infty}$ norm on $\cA_{\rho_{0}''}$ and we let $\left\Vert b\right\Vert _{*}=\left\Vert b\right\Vert _{\infty}+\sup_{r\in\left(1-\rho_{0},1+\rho_{0}\right)}$$\left|D\left(r\right)\right|$.
Note that $\left\Vert b\right\Vert _{*}=\left\Vert \tilde{b}\right\Vert _{*}$.
\begin{lem}
There exist constants $\lambda_{0}=\lambda_{0}\left(\left\Vert a\right\Vert _{\infty},\left\Vert b\right\Vert _{*},\left|E\right|,\omega\right)$
and $c_{0}=c_{0}(\left\Vert b\right\Vert _{*}$$,$ $\omega)$ such
that the following inequalities hold for any positive integer $l$
and any $\lambda\ge\lambda_{0}$ up to a set (independent of $E$)
of measure less than $\exp\left(-c_{0}\lambda l\right)$:
\begin{gather}
\left|\log\left\Vert \gh_{l}\left(x\right)\right\Vert \right|\le\lambda l\label{eq:M^a-bound}\\
\left|\log\left\Vert \gh_{l}\left(x\right)^{-1}\right\Vert \right|\le\lambda l.\label{eq:M^a^(-1)-bound}
\end{gather}
\end{lem}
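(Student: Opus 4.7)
The plan is to combine a trivial deterministic upper bound on $\|\gh_l\|$ with a large deviations lower bound on $|\det\gh_l|$ coming from \thmref{sh_ldt} applied to $\log|b|$ and $\log|\tilde b|$. Multiplying the product formula for $M_l$ from the preliminaries by $\prod_{j=1}^l b(z+j\omega)$ cancels the $1/b$ factors and writes $\gh_l(z)$ as the ordered product of the $l$ matrices
\[
A_j(z) := \begin{pmatrix} a(z+j\omega)-E & -\tilde b(z+j\omega) \\ b(z+(j+1)\omega) & 0\end{pmatrix}, \qquad j=0,\dots,l-1.
\]
Each $A_j$ has operator norm at most some $C_* = C_*(\|a\|_{\infty}, \|b\|_{\infty}, |E|)$, so $\log\|\gh_l(x)\| \le l\log C_*$ for every $x\in\TT$, with no exceptional set at all. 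This settles \eqref{eq:M^a-bound} once $\lambda_0 \ge \log C_*$.

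Taking determinants of the same product gives $\det\gh_l(z) = \prod_{j=1}^l b(z+j\omega)\prod_{j=0}^{l-1}\tilde b(z+j\omega)$, hence $\log|\det\gh_l(z)| = S_l(z+\omega) + \tilde S_l(z)$. The $2\times 2$ identity $\|A^{-1}\| = \|A\|/|\det A|$ then yields
\[
\log\|\gh_l(x)^{-1}\| = \log\|\gh_l(x)\| - S_l(x+\omega) - \tilde S_l(x),
\]
so \eqref{eq:M^a^(-1)-bound} reduces to a lower bound on $S_l(x+\omega)+\tilde S_l(x)$. I would apply \thmref{sh_ldt} to $\log|b|$ and $\log|\tilde b|$, whose Riesz data on a neighborhood of $\cA_{\rho_0}$ are controlled by $\|b\|_*$, producing a constant $c_0$ depending only on $\|b\|_*$ and $\omega$. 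With deviation parameter $\delta = \lambda/4$ this yields $S_l(x+\omega)+\tilde S_l(x) \ge l(D(1)+\tilde D(1)) - \lambda l/2$ off a set (independent of $E$, since $b$ and $\tilde b$ do not depend on $E$) of measure at most $2\exp(-c_0\lambda l/4 + r_l)$. Substituting back, $\log\|\gh_l(x)^{-1}\| \le l\log C_* - l(D(1)+\tilde D(1)) + \lambda l/2 \le \lambda l$ provided $\lambda_0 \ge 2\log C_* - 2(D(1)+\tilde D(1))$, which is controlled by $\|a\|_{\infty}$, $\|b\|_*$, and $|E|$.

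It remains to convert the exceptional measure from $2\exp(-c_0\lambda l/4 + r_l)$ to the stated form $\exp(-c_0\lambda l)$ after renaming the constant $c_0$. This requires $r_l \le (c_0/8)\lambda l$ uniformly over positive integers $l$. Since $r_1 = C_0$ and $r_l = C_0(\log l)^p$ for $l\ge 2$, and $r_l/l$ is bounded uniformly in $l\ge 1$, one further enlargement of $\lambda_0$ (still depending only on the allowed parameters) absorbs $r_l$. The only technical obstacle is this alignment of scales: the deviation $\delta$ in \thmref{sh_ldt} must scale linearly with $\lambda$ for the measure bound to be exponential in $\lambda l$, and the subharmonic correction $r_l$ must be uniformly absorbable in $l$, which forces the dependence of $\lambda_0$ on $\|a\|_{\infty}, \|b\|_*, |E|, \omega$; the rest is deterministic linear algebra.
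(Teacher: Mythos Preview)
Your approach is essentially the same as the paper's, and the use of the $2\times 2$ identity $\|A^{-1}\|=\|A\|/|\det A|$ for \eqref{eq:M^a^(-1)-bound} is a clean shortcut compared to the paper's explicit inversion of each factor. There is, however, one genuine gap: \eqref{eq:M^a-bound} is a \emph{two-sided} bound, and the trivial product estimate $\log\|\gh_l(x)\|\le l\log C_*$ only gives the upper half. You still need $\log\|\gh_l(x)\|\ge -\lambda l$, which is not automatic since the factors $A_j(x)$ are singular whenever $b$ or $\tilde b$ vanish along the orbit. The fix is already contained in what you wrote for \eqref{eq:M^a^(-1)-bound}: from $\|\gh_l\|^2\ge|\det\gh_l|$ one gets $\log\|\gh_l(x)\|\ge\tfrac12\bigl(S_l(x+\omega)+\tilde S_l(x)\bigr)$, and your large-deviations bound on the Birkhoff sums then yields $\log\|\gh_l(x)\|\ge\tfrac12 l(D+\tilde D)-\lambda l/4\ge -\lambda l$ on the same good set, after enlarging $\lambda_0$ by a quantity controlled by $\|b\|_*$. (The lower half of \eqref{eq:M^a^(-1)-bound}, which you also do not write out, is genuinely trivial: $\|\gh_l^{-1}\|\ge\|\gh_l\|^{-1}\ge C_*^{-l}$.) Once this omission is repaired the argument is complete and matches the paper's proof.
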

\begin{proof}
There exists a constant $C=C\left(\left\Vert a\right\Vert _{\infty},\left\Vert b\right\Vert _{\infty},\left|E\right|\right)>0$
such that 
\[
\log\left\Vert \gh_{l}\left(x\right)\right\Vert \le Cl
\]
for all $x$. On the other hand
\[
\left\Vert \gh_{l}\left(x\right)\right\Vert \ge\left|\det\gh_{l}\left(x\right)\right|^{1/2}=\prod_{j=0}^{l-1}\left|\t b\left(x+j\omega\right)b\left(x+\left(j+1\right)\omega\right)\right|^{1/2}
\]
for all $x$. Hence
\begin{equation}
\frac{\t S_{l}\left(x\right)+S_{l}\left(x+\omega\right)}{2}\le\log\left\Vert \gh_{l}\left(x\right)\right\Vert \le Cl\label{eq:T-bounds-everywhere}
\end{equation}
for all $x$. From \thmref{sh_ldt} we can conclude that for any $\lambda'>0$
we have 
\[
-2\lambda'l\le\left(\frac{\t D+D}{2}-\lambda'\right)l\le\log\left\Vert \gh_{l}\left(x\right)\right\Vert \le Cl\le2\lambda'l
\]
up to a set not exceeding $2\exp\left(-c\lambda'l+r_{l}\right)$ in
measure, provided 
\[
\lambda'\ge\max\left\{ -\left(\t D+D\right),C\right\} /2.
\]
By setting $\lambda=2\lambda'$ and choosing $\lambda_{0}\ge\max\left\{ -\left(\t D+D\right),C\right\} $
we have that \eqref{M^a-bound} holds up to a set of measure not exceeding
$2\exp\left(-c\lambda l+r_{l}\right)$. Finally, it is easy to see
that by choosing $\lambda_{0}$ such that
\[
\lambda_{0}\ge\frac{2}{c}\sup_{l\ge1}\frac{\log2+r_{l}}{l}
\]
we have
\[
2\exp\left(-c\lambda l+r_{l}\right)\le\exp\left(-\frac{c}{2}\lambda l\right),\,\lambda\ge\lambda_{0},\, l\ge1.
\]
This concludes the proof of \eqref{M^a-bound}.

Since for almost every $x$ we have
\begin{multline*}
\left[\begin{array}{cc}
a\left(x+j\omega\right)-E & -\tilde{b}\left(x+j\omega\right)\\
b\left(x+\left(j+1\right)\omega\right) & 0
\end{array}\right]^{-1}=\\
\frac{1}{\tilde{b}\left(x+j\omega\right)b\left(x+\left(j+1\right)\omega\right)}\left[\begin{array}{cc}
0 & \tilde{b}\left(x+j\omega\right)\\
-b\left(x+\left(j+1\right)\omega\right) & a\left(x+j\omega\right)-E
\end{array}\right]
\end{multline*}
it is straightforward to see that there exists a constant $C=C\left(\left\Vert a\right\Vert _{\infty},\left\Vert b\right\Vert _{\infty},\left|E\right|\right)$
such that
\[
-\frac{\t S_{l}\left(x\right)+S_{l}\left(x+\omega\right)}{2}\le\log\left\Vert \gh_{l}\left(x\right)^{-1}\right\Vert \le Cl-\t S_{l}\left(x\right)-S_{l}\left(x+\omega\right)
\]
for almost every $x$. Now \eqref{M^a^(-1)-bound} follows in the
same way as \eqref{M^a-bound}. Note that the exceptional set comes
from $\t S_{l}\left(x\right)+S_{l}\left(x+\omega\right)$ and is thus
independent of $E$.
\end{proof}
The same type of estimates can be obtained now for $M_{n}$ and $\fg_{n}$.
We just record one of the estimates that will be needed later.
\begin{cor}
\label{cor:M^u^-1bound} There exist constants $\lambda_{0}=\lambda_{0}\left(\left\Vert a\right\Vert _{\infty},\left\Vert b\right\Vert _{*},\left|E\right|,\omega\right)$
and $c_{0}=c_{0}\left(\left\Vert b\right\Vert _{*},\omega\right)$
such that 
\[
\left|\log\left\Vert \fg_{l}\left(x\right)^{-1}\right\Vert \right|\le\lambda l
\]
holds for any positive integer $l$ and any $\lambda\ge\lambda_{0}$
up to a set of measure less than $\exp\left(-c_{0}\lambda l\right)$.\end{cor}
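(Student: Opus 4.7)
The plan is to reduce the claim to the bound on $\log\|\gh_l\|$ just proved, using the unimodularity of $\fg_l$ and the identity \eqref{eq:logM^u...logM^a}. The key observation is that, by the very definition \eqref{eq:M^u}, one has $|\det\fg_l(z)|=1$ wherever $\fg_l$ is defined. Since $\fg_l(x)$ is a $2\times2$ matrix, its two singular values are reciprocals, so
\[
\|\fg_l(x)^{-1}\| = \|\fg_l(x)\|
\]
pointwise (outside the measure-zero set where $\fg_l$ fails to be defined, which can be absorbed into the exceptional set). Consequently it suffices to produce the two-sided bound $|\log\|\fg_l(x)\||\le\lambda l$ off a set of measure at most $\exp(-c_0\lambda l)$.

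For that, I would invoke the identity \eqref{eq:logM^u...logM^a},
\[
\log\|\fg_l(x)\| = -\tfrac12\bigl(\t S_l(x)+S_l(x+\omega)\bigr) + \log\|\gh_l(x)\|,
\]
and control each of the two terms separately. The preceding lemma gives constants $\lambda_0^{(1)}=\lambda_0^{(1)}(\|a\|_\infty,\|b\|_*,|E|,\omega)$ and $c^{(1)}=c^{(1)}(\|b\|_*,\omega)$ such that $|\log\|\gh_l(x)\||\le\tfrac\lambda2\, l$ outside a set of measure at most $\exp(-c^{(1)}\lambda l)$, for any $\lambda\ge\lambda_0^{(1)}$.

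For the $S$-terms, I would apply \thmref{sh_ldt} to the subharmonic functions $\log|b|$ and $\log|\t b|$ (whose Riesz-representation data are controlled by $\|b\|_*$). This yields
\[
\bigl|\t S_l(x)+S_l(x+\omega)-l(\t D+D)\bigr|<\delta l
\]
outside a set of measure at most $2\exp(-c\delta l + r_l)$, for any $\delta>0$. Because $|\t D+D|\le 2\|b\|_*$, the left side is then bounded by $(\delta+2\|b\|_*)l$ on the complement. Choosing $\delta\asymp\lambda$ and enlarging $\lambda_0$ so that $\lambda_0\ge \|b\|_*$ and $\lambda_0 \ge (2/c)\sup_{l\ge 1}(\log 2+r_l)/l$ (exactly as in the proof of the previous lemma, to absorb the $r_l$ term into a fraction of the main exponent), one gets $|\t S_l(x)+S_l(x+\omega)|\le\lambda l$ outside a set of measure at most $\exp(-c^{(2)}\lambda l)$, with $c^{(2)}=c^{(2)}(\|b\|_*,\omega)$.

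There is no real obstacle here; the only care needed is in the bookkeeping. One sets $\lambda_0$ to be the maximum of $\lambda_0^{(1)}$ and the threshold required in the $S$-terms step above, sets $c_0=\tfrac12\min(c^{(1)},c^{(2)})$, and combines the two exceptional sets. The resulting constants depend only on the quantities listed in the statement (notably, the exceptional-set constant $c_0$ depends only on $\|b\|_*$ and $\omega$, since the $E$- and $a$-dependence in the previous lemma enters only through $\lambda_0$, not through $c_0$).
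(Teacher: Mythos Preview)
Your proof is correct, but it takes a different route from the paper's. The paper proceeds directly from the identity
\[
\log\|\fg_l(x)^{-1}\| = \tfrac12\bigl(\t S_l(x)+S_l(x+\omega)\bigr) + \log\|\gh_l(x)^{-1}\|,
\]
and then combines \thmref{sh_ldt} for the $S$-terms with the inverse bound \eqref{M^a^(-1)-bound} from the preceding lemma. You instead exploit unimodularity to observe that $\|\fg_l(x)^{-1}\|=\|\fg_l(x)\|$, reducing to a two-sided bound on $\log\|\fg_l\|$ via \eqref{logM^u...logM^a} and the forward bound \eqref{M^a-bound}. Both arguments require exactly the same ingredients (\thmref{sh_ldt} applied to $\log|b|$, $\log|\t b|$, plus one of the two bounds from the previous lemma) and the same kind of bookkeeping to absorb $r_l$ and merge exceptional sets. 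Your approach has the mild advantage that it never touches \eqref{M^a^(-1)-bound}, so it would go through even if only \eqref{M^a-bound} had been established; the paper's approach is slightly more mechanical in that it mirrors the derivation pattern used elsewhere without invoking the singular-value structure of $\fg_l$.
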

\begin{proof}
From \eqref{M^u} we have 
\[
\log\left\Vert \fg_{l}\left(x\right)^{-1}\right\Vert =\frac{1}{2}\left(\t S_{l}\left(x\right)+S_{l}\left(x+\omega\right)\right)+\log\left\Vert \gh_{l}\left(x\right)^{-1}\right\Vert .
\]
Using \thmref{sh_ldt} and \eqref{M^a^(-1)-bound} we get
\[
-3\lambda'l\le\left(\frac{\t D+D}{2}-2\lambda'\right)l\le\log\left\Vert \fg_{l}\left(x\right)^{-1}\right\Vert \le\left(\frac{\t D+D}{2}+2\lambda'\right)l\le3\lambda'l
\]
up to a set of measure less than $2\exp\left(-c_{1}\lambda'l+r_{l}\right)+\exp\left(-c_{2}\lambda'l\right)\le\exp\left(-c3\lambda'l\right)$
provided $\lambda'$ is large enough. Now we can take $\lambda=3\lambda'$. \end{proof}
\begin{cor}
There exist constants $\lambda_{0}=\lambda_{0}\left(\left\Vert a\right\Vert _{\infty},\left\Vert b\right\Vert _{*},\left|E\right|,\omega\right)$
and $c_{0}=c_{0}\left(\left\Vert b\right\Vert _{*},\omega\right)$
such that the following inequalities hold for any positive integers
$l$ and $n$, and any $\lambda\ge\lambda_{0}$ up to a set (depending
on $n$) of measure less than $\exp\left(-c_{0}\lambda l\right)$:
\begin{gather}
\left|\log\left\Vert \gh_{l}\left(x\right)\right\Vert -lL_{l}^{a}\right|\le\lambda l\label{eq:M^a-L^a-bound}\\
\left|\log\left\Vert \gh_{n+l}\left(x\right)\right\Vert -\log\left\Vert \gh_{n}\left(x\right)\right\Vert \right|\le\lambda l\label{eq:M^a_{n+l}-M^a_{n}-bound}\\
\left|\log\left\Vert \gh_{n}\left(x+l\omega\right)\right\Vert -\log\left\Vert \gh_{n}\left(x\right)\right\Vert \right|\le\lambda l\label{eq:M^a(x+l)-M^a(x)-bound}\\
\left|\log\left\Vert \gh_{n}\left(x\right)\right\Vert -\frac{1}{l}\sum_{k=0}^{l-1}\log\left\Vert \gh_{n}\left(x+k\omega\right)\right\Vert \right|\le\lambda l.\label{eq:M^a-almost-invariance}
\end{gather}
\end{cor}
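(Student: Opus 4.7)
The backbone is the cocycle identity
\[
\gh_{n+l}(x)=\gh_l(x+n\omega)\gh_n(x)=\gh_n(x+l\omega)\gh_l(x),
\]
which follows from the analogous relation for $M_n$ combined with \eqref{eq:M^a}. From it one also obtains $\gh_n(x+l\omega)=\gh_l(x+n\omega)\gh_n(x)\gh_l(x)^{-1}$, so any pairwise comparison among $\log\|\gh_n(x)\|$, $\log\|\gh_{n+l}(x)\|$ and $\log\|\gh_n(x+l\omega)\|$ is controlled by $\log\|\gh_l(\cdot)^{\pm1}\|$ evaluated at $x$ or $x+n\omega$. I would prove the four bounds in the order \eqref{eq:M^a_{n+l}-M^a_{n}-bound}, \eqref{eq:M^a(x+l)-M^a(x)-bound}, \eqref{eq:M^a-almost-invariance}, and finally \eqref{eq:M^a-L^a-bound}, each building on the previous ones.

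\textbf{Difference bounds.} For \eqref{eq:M^a_{n+l}-M^a_{n}-bound}, submultiplicativity applied to the two forms of the cocycle identity gives
\[
|\log\|\gh_{n+l}(x)\|-\log\|\gh_n(x)\||\le\max\{\log\|\gh_l(x+n\omega)\|,\log\|\gh_l(x+n\omega)^{-1}\|\},
\]
while for \eqref{eq:M^a(x+l)-M^a(x)-bound} the right-hand side is a sum of such quantities at $x$ and at $x+n\omega$. The preceding lemma, applied at the translated point (which preserves the measure of the exceptional set), then yields the bound $\lambda l$ outside a set of measure $\exp(-c_0\lambda l)$.

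\textbf{Almost-invariance and comparison with $L_l^a$.} For \eqref{eq:M^a-almost-invariance} the naive plan of applying \eqref{eq:M^a(x+l)-M^a(x)-bound} with shift $k$ and parameter $\lambda$ for each $k=0,\ldots,l-1$ and averaging fails: the resulting exceptional set has measure $\sum_{k=1}^{l-1}\exp(-c_0\lambda k)\sim\exp(-c_0\lambda)$, not the required $\exp(-c_0\lambda l)$. The fix is to apply the lemma to $\gh_k^{\pm1}$ at $x$ and $x+n\omega$ with parameter $\lambda l/k$ instead of $\lambda$; this is admissible since $\lambda l/k\ge\lambda_0$ whenever $\lambda\ge\lambda_0$ and $k\le l$. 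Each summand is then $\le 2\lambda l$ outside a set of measure $\exp(-c_0\lambda l)$, and the union over $k$ has measure $\le l\exp(-c_0\lambda l)\le\exp(-c_0\lambda l/2)$ for $\lambda$ large enough, so the triangle inequality gives \eqref{eq:M^a-almost-invariance} after absorbing constants. Finally, for \eqref{eq:M^a-L^a-bound} I would apply \thmref{sh_ldt} to the subharmonic function $\log\|\gh_l\|/l$ (whose Riesz representation on a neighborhood of $\cA_{\rho_0}$ has parameters uniform in $l$) with averaging length $l$, obtaining
\[
\Bigl|\tfrac{1}{l}\sum_{k=1}^{l}\log\|\gh_l(x+k\omega)\|-lL_l^a\Bigr|\le\lambda l
\]
outside a set of measure $\exp(-c_0\lambda l+r_l)\le\exp(-c_0\lambda l/2)$, and then combine with \eqref{eq:M^a-almost-invariance} taken at $n=l$.

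\textbf{Main obstacle.} The essential subtlety is the parameter rescaling in the almost-invariance step: a direct summation of the $k$-th exceptional sets loses the factor of $l$ in the exponent. The same issue is why \eqref{eq:M^a-L^a-bound} is obtained indirectly via almost-invariance rather than by a single application of \thmref{sh_ldt} to $\log\|\gh_l\|$ with $n=1$, which would only give exceptional-set decay $\exp(-c_0\lambda)$. Once this scaling trick is in place the rest of the argument is bookkeeping with the translation-invariance of Lebesgue measure on $\TT$.
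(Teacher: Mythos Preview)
Your arguments for \eqref{eq:M^a_{n+l}-M^a_{n}-bound}, \eqref{eq:M^a(x+l)-M^a(x)-bound}, and \eqref{eq:M^a-almost-invariance} coincide with the paper's: the cocycle identity reduces everything to $\log\|\gh_l^{\pm1}\|$ at shifted points, and the rescaling $\lambda\mapsto\lambda l/k$ in the almost-invariance step is exactly what the paper does.

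The only genuine divergence is in \eqref{eq:M^a-L^a-bound}. You obtain it by invoking \thmref{sh_ldt} with averaging length $l$ (which needs the uniform Riesz bounds of \lemref{M^a-Riesz_bounds}) and then feeding the result through \eqref{eq:M^a-almost-invariance}. The paper's route is considerably more elementary: integrating the pointwise two-sided bound $\tfrac{\t D+D}{2}\,l\le\log\|\gh_l(x)\|\le Cl$ from the preceding lemma gives $\tfrac{\t D+D}{2}\le L_l^a\le C$ uniformly in $l$, so $|L_l^a|$ is bounded by a constant and \eqref{eq:M^a-L^a-bound} follows immediately from \eqref{eq:M^a-bound} by the triangle inequality (after absorbing the constant into $\lambda_0$). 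Your approach is correct, but it imports the subharmonic machinery and the Riesz-mass lemma into a statement that doesn't need them; the paper defers those tools until the sharper deviation estimate in \lemref{ldt_delta^2}.
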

\begin{proof}
By integrating \eqref{T-bounds-everywhere} we get
\begin{equation}
\frac{\t D+D}{2}\le L_{l}^{a}\le C.\label{eq:L^a_l-bounds}
\end{equation}
This and \eqref{M^a-bound} imply \eqref{M^a-L^a-bound}.

We have
\[
\gh_{n+l}\left(x\right)=\gh_{l}\left(x+n\omega\right)\gh_{n}\left(x\right),
\]
hence 
\[
-\log\left\Vert \gh_{l}\left(x+n\omega\right)^{-1}\right\Vert \le\log\left\Vert \gh_{n+l}\left(x\right)\right\Vert -\log\left\Vert \gh_{n}\left(x\right)\right\Vert \le\log\left\Vert \gh_{l}\left(x+n\omega\right)\right\Vert 
\]
for almost every $x$. Now \eqref{M^a_{n+l}-M^a_{n}-bound} follows
by \eqref{M^a-bound} and \eqref{M^a^(-1)-bound}.

From the fact that
\[
\gh_{n}\left(x+l\omega\right)\gh_{l}\left(x\right)=\gh_{l}\left(x+n\omega\right)\gh_{n}\left(x\right)
\]
we conclude that
\begin{multline*}
-\log\left\Vert \gh_{l}\left(x+n\omega\right)^{-1}\right\Vert -\log\left\Vert \gh_{l}\left(x\right)\right\Vert \le\log\left\Vert \gh_{n}\left(x+l\omega\right)\right\Vert -\log\left\Vert \gh_{n}\left(x\right)\right\Vert \\
\le\log\left\Vert \gh_{l}\left(x+n\omega\right)\right\Vert +\log\left\Vert \gh_{l}\left(x\right)^{-1}\right\Vert 
\end{multline*}
for almost every $x$. Now \eqref{M^a(x+l)-M^a(x)-bound} also follows
by \eqref{M^a-bound} and \eqref{M^a^(-1)-bound}.

Let $\lambda\ge\lambda_{0}$. Then for $k=1,\ldots,l-1$ we have $\lambda l/k>\lambda_{0}$,
so by \eqref{M^a(x+l)-M^a(x)-bound} we get
\[
\left|\log\left\Vert \gh_{n}\left(x+k\omega\right)\right\Vert -\log\left\Vert \gh_{n}\left(x\right)\right\Vert \right|\le\left(\frac{\lambda l}{k}\right)k=\lambda l
\]
up to a set of measure less than $\exp\left(-c\lambda l\right)$.
Summing over $k=0,\ldots,l-1$ and dividing by $l$ we get that \eqref{M^a-almost-invariance}
holds up to a set of measure less than $l\exp\left(-c\lambda l\right)$.
Finally, note that $l\exp\left(-c\lambda l\right)<\exp\left(-c\lambda l/2\right),\, l\ge1$
if $\lambda$ is large enough. This concludes the proof.
\end{proof}
Next we provide bounds on the Riesz representation of $\log\left\Vert \gh_{n}\left(\cdot\right)\right\Vert $,
which are needed to ensure that the constants we will get from \thmref{sh_ldt}
don't depend on $n$.
\begin{lem}
\label{lem:M^a-Riesz_bounds} Let 
\[
\frac{1}{n}\log\left\Vert \gh_{n}\left(z\right)\right\Vert =\int_{\cA_{\rho_{0}'}}\log\left|z-\zeta\right|d\mu_{n}\left(\zeta\right)+h_{n}\left(z\right)
\]
be the Riesz representation on $\cA_{\rho_{0}'}$. There exists a
constant $C_{0}=C_{0}(\left\Vert a\right\Vert _{\infty},\left\Vert b\right\Vert _{*},$
$\left|E\right|,\rho_{0},\rho_{0}',\rho_{0}'')$ such that
\[
\mu_{n}\left(\cA_{\rho_{0}}\right)+\left\Vert h_{n}\right\Vert _{L^{\infty}\left(\cA_{\rho_{0}}\right)}\le C_{0}.
\]
\end{lem}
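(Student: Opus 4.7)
Both bounds will be deduced from two universal facts about $u_n(z) := \tfrac{1}{n}\log\|\gh_n(z)\|$:
(a) a uniform upper bound $u_n(z) \le M_1$ on $\cA_{\rho_0''}$, with $M_1 = M_1(\|a\|_\infty, \|b\|_\infty, |E|)$;
(b) a uniform lower bound $L_n^a(r) = \int_\TT u_n(rx)\,dx \ge -\|b\|_*$ for every $r$ in a neighborhood of $1$.
Both are immediate from the preceding material. Bound (a) follows from \eqref{M^a} and submultiplicativity of the operator norm, since each factor has entries controlled in terms of $\|a\|_\infty, \|b\|_\infty, |E|$ on $\cA_{\rho_0''}$. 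Bound (b) is the $r$-analogue of the first inequality in \eqref{L^a_l-bounds}, obtained by integrating the pointwise estimate $u_n \ge \tfrac{1}{2n}(\t S_n + S_n(\cdot + \omega))$ (itself a consequence of $\|\gh_n\|^2 \ge |\det \gh_n|$) over $r\TT$.

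To bound $\mu_n(\cA_{\rho_0})$, set $R := (\rho_0' - \rho_0)/4$ and $\rho_0^* := \rho_0 + R$, so that $\overline{B(z_0, 2R)} \subset \cA_{\rho_0'}$ for every $z_0 \in \cA_{\rho_0^*}$. Jensen's formula for subharmonic functions then gives, for each such $z_0$,
\[
(\log 2)\,\mu_n(B(z_0, R)) \le \int_0^{2R}\frac{\mu_n(B(z_0, t))}{t}\,dt = \frac{1}{2\pi}\int_\TT u_n(z_0 + 2Re^{i\theta})\,d\theta - u_n(z_0) \le M_1 - u_n(z_0).
\]
Integrating in $z_0$ against Lebesgue area measure on $\cA_{\rho_0^*}$, Fubini shows that the left-hand side dominates $\pi R^2\,\mu_n(\cA_{\rho_0})$ (because each $\zeta \in \cA_{\rho_0}$ satisfies $|\cA_{\rho_0^*} \cap B(\zeta, R)| = \pi R^2$), while the right-hand side is bounded above by $C + C\|b\|_*$ using $\int_{\cA_{\rho_0^*}} u_n\,dA = \int 2\pi r\,L_n^a(r)\,dr \ge -C\|b\|_*$ from (b). This yields $\mu_n(\cA_{\rho_0}) \le C_0$.

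For the harmonic part, repeat the above with auxiliary radii $\rho_0 < \rho_0^* < \rho_0^{**} < \rho_0'$ to get $\mu_n(\cA_{\rho_0^{**}}) \le C_0$. Then $h_n(z) = u_n(z) - \int \log|z - \zeta|\,d\mu_n(\zeta)$ is harmonic on $\cA_{\rho_0^{**}}$, and its $L^1(\cA_{\rho_0^{**}})$ norm is controlled by $\|u_n\|_{L^1(\cA_{\rho_0^{**}})}$ (finite by (a) together with (b)) plus a constant multiple of $\mu_n(\cA_{\rho_0^{**}})$, using Fubini and the uniform bound $\int_{\cA_{\rho_0^{**}}} |\log|z - \zeta||\,dA(z) \le C$ for $\zeta \in \cA_{\rho_0^{**}}$. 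A standard interior mean-value estimate for harmonic functions then upgrades this $L^1$ bound to an $L^\infty$ bound on $\cA_{\rho_0}$. The only delicate step is the Fubini–Jensen computation together with the radius bookkeeping that keeps every disk inside $\cA_{\rho_0'}$; everything else is routine potential theory.
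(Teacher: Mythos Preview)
Your approach is essentially the paper's: both rest on (a) the uniform upper bound $u_n\le M_1$ on $\cA_{\rho_0''}$ and (b) a lower bound on $L_n^a$ coming from $\|\gh_n\|^2\ge|\det\gh_n|$. The paper simply invokes \cite[Lemma~2.2]{MR2438997}, which packages the potential theory and gives $\mu_n(\cA_{\rho_0})\le C(T_n-\sup_{\cA_{\rho_0'}}u_n)\le C(T_n-L_n^a)$ and $\|h_n\|_{L^\infty(\cA_{\rho_0})}\le C(T_n-L_n^a)+T_n$; you instead carry out a Jensen--averaging--mean-value argument by hand. That is fine, but two pieces of your bookkeeping need repair.

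First, the inequality $L_n^a(r)\ge -\|b\|_*$ follows from the definition of $\|b\|_*$ only for $|r-1|<\rho_0$, whereas your averaging over $\cA_{\rho_0^*}$ (and later $\cA_{\rho_0^{**}}$) uses $r$ in a strictly larger range. This can be fixed---$D(r)$ and $\t D(r)$ are convex in $\log r$ and bounded above by $\log\|b\|_\infty$, so a lower bound on $(1-\rho_0,1+\rho_0)$ propagates to any compact subinterval of $(1-\rho_0'',1+\rho_0'')$ with constants depending on the radii---but it should be said. Note that the paper's route avoids this entirely, since the cited lemma only needs $\sup_{\TT}u_n\ge L_n^a=L_n^a(1)$.

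Second, in your $h_n$ step the logarithmic potential is $\int_{\cA_{\rho_0'}}\log|z-\zeta|\,d\mu_n(\zeta)$, with $\mu_n$ supported on all of $\cA_{\rho_0'}$. The Fubini bound therefore produces $C\,\mu_n(\cA_{\rho_0'})$, not $C\,\mu_n(\cA_{\rho_0^{**}})$ as you wrote; bounding the mass only on $\cA_{\rho_0^{**}}$ is not enough. The fix is to run your Jensen argument once more with disks centered in an annulus between $\cA_{\rho_0'}$ and $\cA_{\rho_0''}$ (legitimate since $u_n$ is subharmonic on $\cA_{\rho_0''}$) to control $\mu_n(\cA_{\rho_0'})$ itself; the ``auxiliary radii'' step as written stops short of this.
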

\begin{proof}
Let $u_{n}\left(z\right)=\log\left\Vert M_{n}^{a}\left(z\right)\right\Vert /n$
and $T_{n}=\sup_{\cA_{\rho_{0}''}}u_{n}$. From \cite[Lemma 2.2]{MR2438997}
we have that 
\begin{multline*}
\mu_{n}\left(\cA_{\rho_{0}}\right)\le\mu_{n}\left(\cA_{\rho_{0}'}\right)\le C\left(\rho_{0}',\rho_{0}''\right)\left(T_{n}-\sup_{\cA_{\rho_{0}'}}u_{n}\right)\\
\le C\left(T_{n}-\sup_{\TT}u_{n}\right)\le C\left(T_{n}-L_{n}^{a}\right)
\end{multline*}
and
\begin{multline*}
\left\Vert h_{n}\right\Vert _{L^{\infty}\left(\cA_{\rho_{0}}\right)}\le\left\Vert h_{n}-\sup_{\cA_{\rho_{0}'}}u_{n}\right\Vert _{L^{\infty}\left(\cA_{\rho_{0}}\right)}+\sup_{\cA_{\rho_{0}'}}u_{n}\\
\le C\left(\rho_{0},\rho_{0}',\rho_{0}''\right)\left(T_{n}-\sup_{\cA_{\rho_{0}'}}u_{n}\right)+T_{n}\le C\left(T_{n}-L_{n}^{a}\right)+T_{n}.
\end{multline*}
The conclusion now follows from the fact that there exists a constant
$C=C(\left\Vert a\right\Vert _{\infty},$ $\left\Vert b\right\Vert _{\infty},\left|E\right|,\rho_{0}'')$
such that $T_{n}\le C$, and from \eqref{L^a_l-bounds}.
\end{proof}
Now we can prove the first deviations estimate. 
\begin{lem}
\label{lem:ldt_delta^2} Let $\delta_{0}>0$. For any $\delta\in\left(0,\delta_{0}\right)$
and any integer $n>1$ we have 
\[
\mes\left\{ x\in\TT:\left|\log\left\Vert \gh_{n}\left(x\right)\right\Vert -nL_{n}^{a}\right|>n\delta\right\} <\exp\left(-c_{0}n\delta^{2}+C_{0}\left(\log n\right)^{p}\right)
\]
where $c_{0}=c_{0}\left(\left\Vert a\right\Vert _{\infty},\left\Vert b\right\Vert _{*},\left|E\right|,\omega,\delta_{0}\right)$
and $C_{0}=C_{0}\left(\left\Vert a\right\Vert _{\infty},\left\Vert b\right\Vert _{*},\left|E\right|,\omega,p,\delta_{0}\right)$. \end{lem}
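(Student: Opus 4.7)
The plan is the standard two-step Bourgain--Goldstein bootstrap: use the almost invariance \eqref{eq:M^a-almost-invariance} to replace $\log\Vert\gh_n(x)\Vert$ by its ergodic average of length $l$, and then apply the subharmonic large deviation theorem \thmref{sh_ldt} to that average to compare it with $nL_n^a$. The parameter $l$ is optimized at the end to balance the two errors.

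In detail, set $u_n(z) := \log\Vert \gh_n(z)\Vert / n$. Since $\gh_n$ is analytic on (a neighborhood of) $\cA_{\rho_0'}$, $u_n$ is subharmonic with a Riesz representation there, and \lemref{M^a-Riesz_bounds} provides bounds on $\mu_n(\cA_{\rho_0}) + \Vert h_n\Vert_{L^\infty(\cA_{\rho_0})}$ that are independent of $n$ and depend only on $\Vert a\Vert_\infty, \Vert b\Vert_*, |E|$. Thus \thmref{sh_ldt} applies with constants uniform in $n$: for any $\delta'>0$ and any positive integer $l$,
\[
\mes\!\Bigl\{x\in\TT:\Bigl|\tfrac{1}{l}\sum_{k=0}^{l-1} u_n(x+k\omega) - \langle u_n\rangle \Bigr| > \delta'\Bigr\} < \exp(-c_0 \delta' l + r_l),
\]
where $\langle u_n\rangle = L_n^a$. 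Multiplying through by $n$, this compares the ergodic average of $\log\Vert\gh_n\Vert$ with $nL_n^a$ up to an error $n\delta'$. Separately, the almost invariance \eqref{eq:M^a-almost-invariance} says that for $\lambda\ge\lambda_0$,
\[
\Bigl|\log\Vert\gh_n(x)\Vert - \tfrac{1}{l}\sum_{k=0}^{l-1}\log\Vert\gh_n(x+k\omega)\Vert\Bigr| \le \lambda l
\]
off a set of measure $<\exp(-c\lambda l)$. Adding the two estimates via the triangle inequality yields
\[
\bigl|\log\Vert\gh_n(x)\Vert - nL_n^a\bigr| \le \lambda l + n\delta'
\]
off a set of measure $<\exp(-c_0\delta' l + r_l) + \exp(-c\lambda l)$.

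Now I would optimize: given $\delta\in(0,\delta_0)$, set $\delta' = \delta/2$ and choose $l := \lfloor n\delta/(2\lambda_0)\rfloor$ (assuming $n\delta \ge 2\lambda_0$; the complementary range is trivial by enlarging $C_0$), together with $\lambda = \lambda_0$, so that both $\lambda l$ and $n\delta'$ are at most $n\delta/2$ and their sum is $\le n\delta$. The two contributions to the exceptional set become $\exp(-c_0 (\delta/2)l + r_l)\le \exp(-c\,n\delta^2 + C_0(\log n)^p)$ (since $l\le n$, so $r_l\le C_0(\log n)^p$) and $\exp(-c\lambda_0 l)\le \exp(-c'n\delta)$, which is dominated by the first on the range $\delta<\delta_0$. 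This produces exactly the bound in the statement.

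The only subtle point is the $n$-independence of the subharmonic constants, and this is precisely what \lemref{M^a-Riesz_bounds} is designed to supply; once that is in hand, the rest is bookkeeping. A minor matter is the handling of small $n\delta$ (where $l$ would be forced to be $0$), but this is absorbed into the additive $C_0(\log n)^p$ term by choosing the constant large enough so that the asserted inequality becomes vacuous. The restriction $n>1$ in the statement is exactly what is needed to use the $(\log n)^p$ form of $r_n$ from \thmref{sh_ldt}.
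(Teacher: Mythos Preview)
Your proof is correct and follows essentially the same approach as the paper: split via the triangle inequality into the almost-invariance error \eqref{eq:M^a-almost-invariance} and the subharmonic large-deviation error from \thmref{sh_ldt} (applied to $u_n=\tfrac{1}{n}\log\Vert\gh_n\Vert$ with the $n$-uniform Riesz bounds of \lemref{M^a-Riesz_bounds}), then choose $l\asymp n\delta$ to balance. The paper's argument is identical in structure, with the same choice of $l$ and the same handling of the vacuous small-$n\delta$ range.
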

\begin{proof}
We have
\begin{multline}
\mes\left\{ x\in\TT:\left|\log\left\Vert \gh_{n}\left(x\right)\right\Vert -nL_{n}^{a}\right|>n\delta\right\} \le\\
\mes\left\{ x\in\TT:\left|\frac{1}{n}\log\left\Vert \gh_{n}\left(x\right)\right\Vert -\frac{1}{l}\sum_{k=0}^{l-1}\frac{1}{n}\log\left\Vert \gh_{n}\left(x+k\omega\right)\right\Vert \right|>\frac{\delta}{2}\right\} \\
+\mes\left\{ x\in\TT:\left|\frac{1}{l}\sum_{k=0}^{l-1}\frac{1}{n}\log\left\Vert \gh_{n}\left(x+k\omega\right)\right\Vert -L_{n}^{a}\right|>\frac{\delta}{2}\right\} .\label{eq:M^a-L^a-decomp}
\end{multline}
The conclusion will follow by estimating the two quantities on the
right-hand side of the above inequality.

From \eqref{M^a-almost-invariance} we get
\[
\left|\log\left\Vert \gh_{n}\left(x\right)\right\Vert -\frac{1}{l}\sum_{k=0}^{l-1}\log\left\Vert \gh_{n}\left(x+k\omega\right)\right\Vert \right|\le C_{1}l
\]
up to a set not exceeding $\exp\left(-cl\right)$ in measure. Let
$l=\left[\delta n/2C_{1}\right]+1$ . We have
\[
\frac{\delta}{2}<\frac{C_{1}l}{n}
\]
so we get
\[
\left|\frac{1}{n}\log\left\Vert \gh_{n}\left(x\right)\right\Vert -\frac{1}{l}\sum_{k=0}^{l-1}\frac{1}{n}\log\left\Vert \gh_{n}\left(x+k\omega\right)\right\Vert \right|\le\frac{\delta}{2}
\]
for all $x$ except for a set of measure less than $\exp\left(-cl\right)$.
Hence 
\begin{multline*}
\mes\left\{ x\in\TT:\left|\frac{1}{n}\log\left\Vert \gh_{n}\left(x\right)\right\Vert -\frac{1}{l}\sum_{k=0}^{l-1}\frac{1}{n}\log\left\Vert \gh_{n}\left(x+k\omega\right)\right\Vert \right|>\frac{\delta}{2}\right\} \\
<\exp\left(-cl\right)<\exp\left(-c_{1}\delta n\right),
\end{multline*}
where $c_{1}=c/(2C_{1})$. 

From \thmref{sh_ldt} we have
\begin{multline*}
\mes\left\{ x\in\TT:\left|\frac{1}{l}\sum_{k=0}^{l-1}\frac{1}{n}\log\left\Vert \gh_{n}\left(x+k\omega\right)\right\Vert -L_{n}^{a}\right|>\frac{\delta}{2}\right\} \\
<\exp\left(-c\frac{\delta}{2}l+C\left(\log l\right)^{p}\right)<\exp\left(-c_{2}\delta^{2}n+C'\left(\log n\right)^{p}\right).
\end{multline*}
Recall that \lemref{M^a-Riesz_bounds} ensures that $c$ and $C$
don't depend on $n$.

Now \eqref{M^a-L^a-decomp} becomes
\begin{multline*}
\mes\left\{ x\in\TT:\left|\log\left\Vert \gh_{n}\left(x\right)\right\Vert -nL_{n}^{a}\right|>n\delta\right\} \\
<\exp\left(-c_{1}\delta n\right)+\exp\left(-c_{2}\delta^{2}n+C'\left(\log n\right)^{p}\right)\\
<2\exp\left(-c\delta^{2}n+C'\left(\log n\right)^{p}\right)<\exp\left(-c\delta^{2}n+C''\left(\log n\right)^{p}\right),
\end{multline*}
where $c=c(c_{1},c_{2},\delta_{0})$. This concludes the proof.
\end{proof}
The same proof yields that for $\delta\ge\delta_{0}$ we have 
\[
\mes\left\{ x\in\TT:\left|\log\left\Vert \gh_{n}\left(x\right)\right\Vert -nL_{n}^{a}\right|>n\delta\right\} <\exp\left(-c_{0}n\delta+C_{0}\left(\log n\right)^{p}\right).
\]
 For $\delta_{0}$ large enough, this just follows from \eqref{M^a-L^a-bound}.
Also note that to get an estimate when $n=1$ one just needs to apply
\thmref{sh_ldt}.

The same type of estimate holds for $M_{n}^{u}$ and $M_{n}$. We
state it only for $M_{n}^{u}$ since this is all we need.
\begin{cor}
\label{cor:ldt_delta^2...M^u}Let $\delta_{0}>0$. For any $\delta\in\left(0,\delta_{0}\right)$
and any integer $n>1$ we have 
\[
\mes\left\{ x\in\TT:\left|\log\left\Vert \fg_{n}\left(x\right)\right\Vert -nL_{n}^{u}\right|>n\delta\right\} <\exp\left(-c_{0}n\delta^{2}+C_{0}\left(\log n\right)^{p}\right)
\]
where $c_{0}=c_{0}\left(\left\Vert a\right\Vert _{\infty},\left\Vert b\right\Vert _{*},\left|E\right|,\omega,\delta_{0}\right)$
and $C_{0}=C_{0}\left(\left\Vert a\right\Vert _{\infty},\left\Vert b\right\Vert _{*},\left|E\right|,\omega,p,\delta_{0}\right)$. \end{cor}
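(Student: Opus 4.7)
The plan is to reduce the estimate for $\fg_n$ to the one for $\gh_n$ just proved in \lemref{ldt_delta^2} plus two applications of the subharmonic large deviations theorem to $\log|b|$ and $\log|\t b|$. The key identity is \eqref{logM^u...logM^a}:
\[
\log\left\Vert \fg_{n}(x)\right\Vert =-\tfrac{1}{2}\bigl(\t S_{n}(x)+S_{n}(x+\omega)\bigr)+\log\left\Vert \gh_{n}(x)\right\Vert,
\]
which upon integrating over $\TT$ yields $L_{n}^{u}=-\tfrac{1}{2}(\t D+D)+L_{n}^{a}$. Subtracting, the quantity of interest splits as
\[
\log\left\Vert \fg_{n}(x)\right\Vert -nL_{n}^{u} =\bigl(\log\left\Vert \gh_{n}(x)\right\Vert -nL_{n}^{a}\bigr)-\tfrac{1}{2}\bigl(\t S_{n}(x)-n\t D\bigr)-\tfrac{1}{2}\bigl(S_{n}(x+\omega)-nD\bigr).
\]

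First I would bound the exceptional set for the first summand by $\delta/3$ using \lemref{ldt_delta^2}, obtaining a set of measure at most $\exp(-c_{0}n\delta^{2}+C_{0}(\log n)^{p})$. For the remaining two summands, I would apply \thmref{sh_ldt} to the subharmonic functions $\log|b|$ and $\log|\t b|$ on a neighborhood of $\cA_{\rho_{0}}$: since $b$ and $\t b$ are analytic on the closure of $\cA_{\rho_{0}''}$ and are not identically zero, the mass and bounded-harmonic-part estimates needed for \thmref{sh_ldt} hold with a constant $M$ depending only on $b$ and the $\rho_{0}$'s, independently of $n$. This gives
\[
\mes\bigl\{x:\;|\t S_{n}(x)-n\t D|>\tfrac{n\delta}{3}\bigr\}+\mes\bigl\{x:\;|S_{n}(x+\omega)-nD|>\tfrac{n\delta}{3}\bigr\}<2\exp\bigl(-c'n\delta+C'(\log n)^{p}\bigr)
\]
(the shift by $\omega$ does not affect measure, since Lebesgue measure on $\TT$ is shift-invariant).

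A union bound then yields total bad measure at most $\exp(-c_{0}n\delta^{2}+C_{0}(\log n)^{p})+2\exp(-c'n\delta+C'(\log n)^{p})$. Since $\delta<\delta_{0}$, the first term dominates the linear-in-$\delta$ ones after possibly shrinking $c_0$ by an absorbing factor depending on $\delta_{0}$; absorbing the factor of $2$ into the $(\log n)^{p}$ term as in the proof of \lemref{ldt_delta^2} gives the stated bound. There is no real obstacle here — the only point that needs attention is that the constants produced by \thmref{sh_ldt} applied to $\log|b|$ and $\log|\t b|$ are genuinely $n$-independent, which follows because the Riesz data of these fixed subharmonic functions on $\cA_{\rho_{0}}$ are bounded once and for all in terms of $\left\Vert b\right\Vert_{*}$ and the $\rho_{0}$'s.
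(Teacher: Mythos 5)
Your proof is correct and follows essentially the same route as the paper: decompose via \eqref{logM^u...logM^a}, bound the $\gh_n$ deviation by \lemref{ldt_delta^2}, bound the $S_n$ and $\t S_n$ deviations by \thmref{sh_ldt}, and take a union bound, absorbing the linear-in-$\delta$ exponents into the quadratic one using $\delta<\delta_0$. The only (inconsequential) deviation is the $\delta/3$ split where the paper uses $\delta/2$.
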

\begin{proof}
Using \eqref{logM^u...logM^a} we easily get
\begin{multline*}
\mes\left\{ x\in\TT:\left|\log\left\Vert \fg_{n}\left(x\right)\right\Vert -nL_{n}^{u}\right|>n\delta\right\} \\
\le\mes\left\{ x\in\TT:\left|\log\left\Vert \gh_{n}\left(x\right)\right\Vert -nL_{n}^{a}\right|>\frac{n\delta}{2}\right\} \\
+\mes\left\{ x\in\TT:\left|\t S_{n}\left(x\right)-n\t D\right|>\frac{n\delta}{2}\right\} +\mes\left\{ x\in\TT:\left|S_{n}\left(x+\omega\right)-nD\right|>\frac{n\delta}{2}\right\} .
\end{multline*}
The conclusion now follows from \lemref{ldt_delta^2} and \thmref{sh_ldt}.
\end{proof}
The next step is to make use of the Avalanche Principle to improve
the previous estimate. The following lemma is the most general application
of the Avalanche Principle that suits our purposes.
\begin{lem}
\label{lem:AppliedAP}Let $n>1$ be an integer and $n=\sum_{j=1}^{m}l_{j}$
where $l_{j}$ are positive integers such that $l\le l_{j}\le3l$,
with $l=l\left(n\right)$ a real number. Let $A_{j}\left(x\right)=A_{j}\left(x,n\right)$
be $2\times2$ matrices for $x\in\TT$, and let $L_{k}$, $k\ge1$
be a sequence of real numbers. If
\[
l>\frac{2}{\gamma}\log n,
\]
\[
L_{l_{j}}-L_{l_{j}+l_{j+1}}\le\frac{\gamma}{100},\, L_{l_{j+1}}-L_{l_{j}+l_{j+1}}\le\frac{\gamma}{100},\, j=1,\ldots,m-1
\]
\[
\max_{1\le j\le m}\left|\det A_{j}\left(x\right)\right|\le1,\, a.e.\, x\in\TT,
\]
\[
\mes\left\{ x\in\TT:\left|\frac{1}{l_{j}}\log\left\Vert A_{j}\left(x\right)\right\Vert -L_{l_{j}}\right|>\frac{\gamma}{100}\right\} \le\exp\left(-c_{0}l_{j}^{\sigma}\right),\, j=1,\ldots,m,
\]
and
\begin{multline*}
\mes\left\{ x\in\TT:\left|\frac{1}{l_{j}+l_{j+1}}\log\left\Vert A_{j+1}\left(x\right)A_{j}\left(x\right)\right\Vert -L_{l_{j}+l_{j+1}}\right|>\frac{\gamma}{100}\right\} \\
\le\exp\left(-c_{0}\left(l_{j}+l_{j+1}\right)^{\sigma}\right),\, j=1,\ldots,m-1,
\end{multline*}
then there exists an absolute constant $C_{0}$ such that
\begin{multline*}
\left|\log\left\Vert A_{m}\left(x\right)\ldots A_{1}\left(x\right)\right\Vert +\sum_{j=2}^{m-1}\log\left\Vert A_{j}\left(x\right)\right\Vert -\sum_{j=1}^{m-1}\log\left\Vert A_{j+1}\left(x\right)A_{j}\left(x\right)\right\Vert \right|\\
<C_{0}m\exp\left(-\frac{\gamma}{2}l\right)<C_{0}\frac{1}{l}
\end{multline*}
up to a set of measure less than $3n\exp\left(-c_{0}l^{\sigma}\right)$.\end{lem}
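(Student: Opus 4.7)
My approach is a direct verification of the three hypotheses of \propref{avalanche} on the complement of a small exceptional set, followed by a transcription of its conclusion. Let $\cB$ be the union over $j=1,\ldots,m$ of the set where the deviation bound on $\|A_j(x)\|$ fails, together with the union over $j=1,\ldots,m-1$ of the set where the deviation bound on $\|A_{j+1}(x)A_j(x)\|$ fails. A union bound gives
\[
|\cB| \le 2m\exp(-c_0 l^\sigma),
\]
and since each $l_j\ge l\ge 1$ forces $n=\sum l_j\ge ml\ge m$, this yields $|\cB|<3n\exp(-c_0l^\sigma)$, which matches the stated exceptional set.

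Next I fix $x\in\TT\setminus\cB$ and pick $\mu:=\exp(99\gamma l/100)$, so that the hypothesis $l>(2/\gamma)\log n$ immediately gives $\mu>\exp(\gamma l/2)>n$. The determinant bound $\max_j|\det A_j(x)|\le 1$ is given. For the norm lower bound, the deviation estimate together with the implicit background assumption $L_{l_j}\ge\gamma$ (inherited from the standing $L=L^u\ge\gamma$ in the intended applications $L_k=L_k^u$ or $L_k^a$) gives
\[
\log\|A_j(x)\|\ge l_jL_{l_j}-l_j\gamma/100\ge l_j(99\gamma/100)\ge\log\mu.
\]
For the splitting condition, I combine the three relevant deviation bounds for $\|A_j\|$, $\|A_{j+1}\|$, $\|A_{j+1}A_j\|$ with the two assumed differences $L_{l_j}-L_{l_j+l_{j+1}}\le\gamma/100$ and $L_{l_{j+1}}-L_{l_j+l_{j+1}}\le\gamma/100$; after the cross-terms cancel one is left with
\[
\log\|A_{j+1}\|+\log\|A_j\|-\log\|A_{j+1}A_j\|\le 3(l_j+l_{j+1})\gamma/100\le 18\gamma l/100,
\]
which is comfortably smaller than $\tfrac12\log\mu=99\gamma l/200$.

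With all three hypotheses verified on $\TT\setminus\cB$, \propref{avalanche} delivers the AP identity with error at most $C_0 m/\mu\le C_0 m\exp(-\gamma l/2)$, which is the first form of the stated conclusion. For the second form, a second appeal to $l>(2/\gamma)\log n$ yields $\exp(-\gamma l/2)<1/n$, and then $m\le n/l$ gives $C_0m\exp(-\gamma l/2)<C_0/l$. The only step requiring care is the splitting inequality, where four deviation slacks of size $\gamma/100$ and two $L$-difference slacks of size $\gamma/100$ must all fit strictly beneath $\tfrac12\log\mu$; this is exactly why the constant $100$ is built into the hypotheses, since the resulting cumulative slack $18\gamma l/100$ then sits safely below the target $99\gamma l/200$.
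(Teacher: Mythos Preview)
Your argument is correct and follows essentially the same route as the paper: define an exceptional set by a union bound over the $2m-1$ deviation events, then on its complement verify the three hypotheses of \propref{avalanche} and read off the conclusion. The only cosmetic difference is your choice $\mu=\exp(99\gamma l/100)$ versus the paper's $\mu=\exp(\gamma l/2)$; both work, and you are more explicit than the paper about the implicit standing assumption $L_{l_j}\ge\gamma$ needed for the norm lower bound.
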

\begin{proof}
Let $\mu=\exp\left(l\gamma/2\right)$. We have
\[
\min_{1\le j\le m}\left\Vert A_{j}\left(x\right)\right\Vert \ge\min_{1\le j\le m}\exp\left(l_{j}L_{l_{j}}-\frac{\gamma}{100}\right)>\exp\left(l\gamma/2\right)=\mu>n
\]
and
\begin{multline*}
\max_{1\le j<m-1}\left[\log\left\Vert A_{j+1}\left(x\right)\right\Vert +\log\left\Vert A_{j}\left(x\right)\right\Vert -\log\left\Vert A_{j+1}\left(x\right)A_{j}\left(x\right)\right\Vert \right]\\
\le l_{j+1}\left(L_{l_{j+1}}+\frac{\gamma}{100}\right)+l_{j}\left(L_{l_{j}}+\frac{\gamma}{100}\right)-\left(l_{j}+l_{j+1}\right)\left(L_{l_{j+1}+l_{j}}-\frac{\gamma}{100}\right)\\
=l_{j+1}\left(L_{l_{j+1}}-L_{l_{j+1}+l_{j}}+\frac{2\gamma}{100}\right)+l_{j}\left(L_{l_{j}}-L_{l_{j+1}+l_{j}}+\frac{2\gamma}{100}\right)\\
<6l\frac{3\gamma}{100}<\frac{\gamma l}{4}=\frac{1}{2}\log\mu
\end{multline*}
up to a set of measure $3m\exp\left(-c_{0}l^{\sigma}\right)<3n\exp\left(-c_{0}l^{\sigma}\right)$.
The conclusion follows from the Avalanche Principle and the fact that
$m/\mu<1/l$.
\end{proof}
As mentioned before, it is important for us that the constants in
the deviations estimate can be chosen uniformly for $E$ in a compact
set. For this we need to provide a bound for $L_{n}^{u}-L^{u}$ that
holds for all $E$ in a compact set. First we state a simple estimate
that we will use to deal with the integrals over the exceptional sets
for our functions.
\begin{lem}
\label{lem:Integral-lemma}Let $f$ be a measurable function defined
on $\TT$ such that for any $\delta\ge\delta_{0}$ we have that $\left|f\left(x\right)\right|\le\delta$
up to a set of measure less than $\exp\left(-c_{0}\delta\right)$.
Then $\left\Vert f\right\Vert _{L^{2}\left(\TT\right)}\le C_{0}$,
where $C_{0}=C_{0}\left(c_{0},\delta_{0}\right)$.
\end{lem}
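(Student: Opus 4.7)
The plan is to use the layer-cake formula for the $L^2$ norm together with the tail bound in the hypothesis. Specifically, I would start from
\[
\left\Vert f\right\Vert _{L^{2}\left(\TT\right)}^{2}=\int_{0}^{\infty}2t\cdot\mes\left\{ x\in\TT:\left|f\left(x\right)\right|>t\right\} dt
\]
and split the integral into $[0,\delta_0]$ and $(\delta_0,\infty)$.

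On the first piece, I would simply bound $\mes\{|f|>t\}\le 1$, so that $\int_0^{\delta_0} 2t\,dt=\delta_0^2$. On the second piece, the hypothesis applied with $\delta=t\ge\delta_0$ gives $\mes\{|f|>t\}\le\exp(-c_0 t)$, so the integral is bounded by $\int_{\delta_0}^\infty 2t\exp(-c_0 t)\,dt$, which is finite and depends only on $c_0$ and $\delta_0$ (e.g.\ by a single integration by parts one gets an explicit bound like $2(\delta_0+c_0^{-1})c_0^{-1}\exp(-c_0\delta_0)$). Adding the two pieces yields $\|f\|_{L^2}^2\le C_0^2$ with $C_0=C_0(c_0,\delta_0)$, which is the desired estimate.

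There is essentially no obstacle here: the only minor point is to verify that the hypothesis, stated as a bound for all $\delta\ge\delta_0$, is exactly what is needed to dominate the distribution function on the tail $(\delta_0,\infty)$. The measurability of $f$ is needed only to justify the layer-cake identity, which is standard.
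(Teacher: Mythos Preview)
Your argument is correct. The paper does not actually supply a proof of this lemma; it introduces it with the phrase ``we state a simple estimate'' and then moves on, so there is nothing to compare against beyond noting that the layer-cake computation you give is exactly the standard way to justify such a statement.
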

\begin{lem}\label{lem:Ln_L} For any integer $n>1$ we have
\[
0\le L_{n}-L=L_{n}^{u}-L^{u}=L_{n}^{a}-L^{a}<C_{0}\frac{\left(\log n\right)^{2}}{n}
\]
where $C_{0}=C_{0}\left(\left\Vert a\right\Vert _{\infty},\left\Vert b\right\Vert _{*},\left|E\right|,\omega,\gamma\right)$.\end{lem}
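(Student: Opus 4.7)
The equalities $L_n - L = L_n^u - L^u = L_n^a - L^a$ follow from \eqref{L_n...L^a_n} (which gives $L_n^a = L_n + D$, hence $L_n^a - L^a = L_n - L$ after passing to the $n \to \infty$ limit) together with $\tilde D = D$ on $\TT$ (from which $L_n^u = L_n$ and thus $L_n^u - L^u = L_n - L$). The non-negativity is immediate from $L^u = \inf_n L_n^u$. The substance is the upper bound $L_n^u - L^u \le C_0(\log n)^2/n$, which I attack via the Avalanche Principle applied to the unimodular cocycle $\fg$.

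Fix $C_1 = C_1(\gamma,\omega,\|a\|_\infty,\|b\|_*,|E|)$ to be chosen large, set $l = \lfloor C_1(\log n)/\gamma\rfloor$, and pick $m$ and positive integers $l_1,\ldots,l_m\in[l,3l]$ with $l_1 + \cdots + l_m = n$. I apply \lemref{AppliedAP} with $A_j(x) = \fg_{l_j}(x+s_{j-1}\omega)$ and reference values $L_{l_j}^u$. The hypotheses are verified as follows: $\det A_j = \pm 1$ by unimodularity of $\fg$; the cocycle identity $\fg_{l_{j+1}}(x+s_j\omega)\fg_{l_j}(x+s_{j-1}\omega) = \fg_{l_j+l_{j+1}}(x+s_{j-1}\omega)$ (from the cocycle identity for $M$ and the multiplicative normalization in \eqref{M^u}) supplies the pair products; the deviations bounds for $\delta = \gamma/100$ come from \corref{ldt_delta^2...M^u}, with the $(\log l)^p$ factor absorbed once $l$ exceeds a threshold; $l > (2/\gamma)\log n$ is immediate from $C_1 > 2$; and $L_{l_j}^u - L_{l_j+l_{j+1}}^u \le L_{l_j}^u - L^u \le \gamma/100$ for $l_j$ above a qualitative threshold (using $L_n^u \to L^u$), which is absorbed into $n_0$---crucially, no quantitative rate is needed.

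\lemref{AppliedAP} yields a pointwise identity with error $|E(x)|<C_0/l$ off a set $\mathcal B$ of measure $\le 3n\exp(-c_0 l^\sigma)$. Integrating over $\TT$: translation invariance converts the sums into $l_j L_{l_j}^u$ and $(l_j+l_{j+1})L_{l_j+l_{j+1}}^u$; the good-set error contributes $\le C_0/l$; the bad-set contribution, estimated by Cauchy--Schwarz and \lemref{Integral-lemma} applied to each $\log\|\fg_k\|/k$ (giving $\|\log\|\fg_k\|\|_{L^2(\TT)}\le Ck$ via \corref{ldt_delta^2...M^u}, and hence total $L^2$-norm $\le Cn$), is $\le |\mathcal B|^{1/2} \cdot O(n)$, negligible for $C_1$ large. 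Taking all $l_j = l$ for simplicity and dividing by $n = ml$ gives
\[
L_n^u = 2L_{2l}^u - L_l^u + \tfrac{2l}{n}(L_l^u - L_{2l}^u) + O(1/(nl)).
\]
Running the same argument with $n$ replaced by $2n$ and the same $l$ (still admissible for $C_1$ large), then subtracting, cancels $2L_{2l}^u - L_l^u$; using the a priori bound $0 \le L_l^u - L_{2l}^u \le L_l^u \le C$ from \eqref{logM^u...logM^a} and \eqref{L^a_l-bounds} gives $0 \le L_n^u - L_{2n}^u \le Cl/n \le C(\log n)/n$. Telescoping $L_n^u - L^u = \sum_{k\ge 0}(L_{2^k n}^u - L_{2^{k+1}n}^u) \le C\sum_k \log(2^k n)/(2^k n) = O((\log n)/n)$, and absorbing small $n$ into $C_0$, yields the stated bound with room to spare. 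The main obstacle is firing the AP without access to any rate on $L_n^u - L^u$---the very quantity we are trying to estimate---the resolution being that only qualitative monotone convergence is needed for the $\gamma/100$-closeness hypothesis; keeping all constants uniform in $E$ on a compact set requires tracing this through \corref{ldt_delta^2...M^u} and \lemref{Integral-lemma}, both of which are already set up to provide such uniformity.
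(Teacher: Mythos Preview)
Your overall strategy---apply the Avalanche Principle to $\fg_n$ at scale $l\asymp\log n$, integrate, and then telescope along $n,2n,4n,\ldots$---is the same as the paper's, and the computations you sketch are essentially correct for a \emph{fixed} $E$. The gap is precisely at the point you flag as ``the main obstacle'': verifying the AP hypothesis $L_{l_j}^u-L_{l_j+l_{j+1}}^u\le\gamma/100$. You bound this by $L_{l_j}^u-L^u$ and invoke the qualitative convergence $L_k^u\downarrow L^u$, absorbing the resulting threshold into $n_0$. But the threshold at which $L_l^u-L^u\le\gamma/100$ is reached depends on $E$ in a way that is not a priori controlled by $\|a\|_\infty,\|b\|_*,|E|,\omega,\gamma$, so your $n_0$---and hence your $C_0$ after absorbing small $n$---depends on $E$ beyond the listed parameters. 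This is exactly the uniformity issue the lemma is designed to address (see the sentence preceding it in the paper). Your closing claim that uniformity can be traced through \corref{ldt_delta^2...M^u} and \lemref{Integral-lemma} is beside the point: those results \emph{are} uniform, but the qualitative threshold you invoke is an independent source of $E$-dependence that they do not cover.

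The paper's proof avoids this by a pigeonhole argument: it does not ask for $L_l^u-L^u\le\gamma/100$ but only for the \emph{consecutive} gap $L_{l'}^u-L_{2l'}^u\le\gamma/100$ at some nearby scale $l'$. Since the telescoping sum $\sum_{j\ge0}\bigl(L_{2^jl}^u-L_{2^{j+1}l}^u\bigr)=L_l^u-L^u\le L_1^u$ is bounded uniformly in $|E|$ by \eqref{L^a_l-bounds}, at most $J=O((L_1^u-L^u)/\gamma)$ of the consecutive gaps can exceed $\gamma/100$; hence some $l'\in\{l,2l,\ldots,2^Jl\}$, with $J$ bounded in terms of the listed parameters only, satisfies $L_{l'}^u-L_{2l'}^u\le\gamma/100$. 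Replacing $l$ by this $l'$ (still $\asymp\log n$ with a uniformly bounded implied constant) makes your argument go through with the required uniformity. Note also that when $l\nmid n$ you need the adjustments \eqref{M^a_{n+l}-M^a_{n}-bound}--\eqref{M^a(x+l)-M^a(x)-bound} to pass from $ml$ to $n$; this costs an extra $(\log n)^2$ rather than $\log n$, which is what produces the stated exponent.
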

\begin{proof}
It is sufficient to get the estimate for large $n$. We will tacitly
assume that $n$ is large enough for our estimates to hold. We should
keep in mind that the choice of large $n$ should be uniform for $E$
in a bounded set. 

It is easy to see that the conclusion follows if we have
\begin{equation}
\left|L_{2n}^{a}-L_{n}^{a}\right|\le C\frac{\left(\log n\right)^{2}}{n}.\label{eq:Ln-L2n}
\end{equation}
Since we have 
\begin{align*}
\left|L_{2n}^{a}-L_{n}^{a}\right| & =\left|\int_{\TT}\frac{\log\left\Vert \gh_{2n}\left(x\right)\right\Vert -\log\left\Vert \gh_{n}\left(x+n\omega\right)\right\Vert -\log\left\Vert \gh_{n}\left(x\right)\right\Vert }{2n}dx\right|,
\end{align*}
it will be sufficient to prove that
\begin{equation}
\left|\log\left\Vert \gh_{2n}\left(x\right)\right\Vert -\log\left\Vert \gh_{n}\left(x+n\omega\right)\right\Vert -\log\left\Vert \gh_{n}\left(x\right)\right\Vert \right|\le C_{1}\left(\log n\right)^{2}\label{eq:lem:Ln_L-desired-estimate}
\end{equation}
up to a set not exceeding $C_{2}n^{-1}$ in measure. Indeed, from
\eqref{M^a-bound} it follows that for $\delta\ge\delta_{0}$ we have
\[
\left|\frac{\log\left\Vert \gh_{2n}\left(x\right)\right\Vert -\log\left\Vert \gh_{n}\left(x+n\omega\right)\right\Vert -\log\left\Vert \gh_{n}\left(x\right)\right\Vert }{2n}\right|\le\delta
\]
up to a set not exceeding $\exp\left(-c_{1}\delta2n\right)+2\exp\left(-c_{1}\delta n\right)<\exp\left(-c\delta n\right)$
in measure, and by using \eqref{lem:Ln_L-desired-estimate} and \lemref{Integral-lemma}
we get 
\begin{multline*}
\left|L_{2n}^{a}-L_{n}^{a}\right|\le\int_{\TT}\left|\frac{\log\left\Vert \gh_{2n}\left(x\right)\right\Vert -\log\left\Vert \gh_{n}\left(x+n\omega\right)\right\Vert -\log\left\Vert \gh_{n}\left(x\right)\right\Vert }{2n}\right|dx\\
\le C_{1}\left(\log n\right)^{2}+C_{3}\sqrt{C_{2}n^{-1}}\le C\left(\log n\right)^{2}.
\end{multline*}

Now we check that the sufficient condition \eqref{lem:Ln_L-desired-estimate}
holds. Let $l=\left[C_{l}\log n\right]$ and $m=\left[n/l\right]$.
If $C_{l}$ is sufficiently large we have that $l>2\log n/\gamma$
and $3n\exp\left(-cl\right)<n^{-1}$. We want to choose $C_{l}$ so
that $L_{l}^{u}-L_{2l}^{u}\le\gamma/100$ and $C_{l}\le C$ (note
that without the bound, such $C_{l}$ obviously exists). Suppose that
$L_{2^{j}l}^{u}-L_{2^{j+1}l}^{u}>\frac{\gamma}{100}$ for $j\ge0$.
Then using \eqref{L^a_l-bounds} we get
\[
C-\frac{\t D+D}{2}\ge L_{l}^{u}-L_{2^{j+1}l}^{u}>\frac{j\gamma}{100}.
\]
This shows that by eventually replacing $l$ with $2^{j}l$ with some
\[
j<100\left(2C-\t D-D\right)/\gamma
\]
 we will have $L_{l}^{u}-L_{2l}^{u}\le\gamma/100$, and the corresponding
$C_{l}$ will be bounded. Using \corref{ldt_delta^2...M^u} and \lemref{AppliedAP}
we get$ $ 
\begin{equation}
\left|\log\left\Vert \gh_{ml}\left(x\right)\right\Vert +\sum_{j=1}^{m-2}\log\left\Vert \gh_{l}\left(x+jl\omega\right)\right\Vert -\sum_{j=0}^{m-2}\log\left\Vert \gh_{2l}\left(x+jl\omega\right)\right\Vert \right|<C\label{eq:Tml-avalanche}
\end{equation}
up to a set not exceeding $n^{-1}$ in measure, and analogous estimates
for $\log\Vert\gh(x+$ $ml\omega)\Vert$ and $\log\left\Vert \gh_{2ml}\left(x\right)\right\Vert $.
Recall that we apply the Avalanche Principle to $M_{n}^{u}$ but the
conclusion also holds for $M_{n}^{a}$. Note that we need to have
$m\ge2$. This clearly holds for large enough $n$ depending on $C_{l}$.
This can be done uniformly for $E$ in a bounded set because of our
bound on $C_{l}$. Putting these estimates together we get
\begin{multline}
\Big|\log\left\Vert \gh_{2ml}\left(x\right)\right\Vert -\log\left\Vert \gh_{ml}\left(x+ml\omega\right)\right\Vert -\log\left\Vert \gh_{ml}\left(x\right)\right\Vert \\
+\log\left\Vert \gh_{l}\left(x+\left(m-1\right)l\omega\right)\right\Vert +\log\left\Vert \gh_{l}\left(x+ml\omega\right)\right\Vert \\
-\log\left\Vert \gh_{2l}\left(x+\left(m-1\right)l\omega\right)\right\Vert \Big|<C\label{eq:T2ml-Tml-Tml-stuff}
\end{multline}
up to a set not exceeding $Cn^{-1}$ in measure. By \eqref{M^a-bound}
we have that $\left|\log\left\Vert \gh_{l}\left(x\right)\right\Vert \right|\le C\log n$
up to a set not exceeding $n^{-1}$ in measure. From this, similar
estimates, and \eqref{T2ml-Tml-Tml-stuff} we get 
\begin{equation}
\left|\log\left\Vert \gh_{2ml}\left(x\right)\right\Vert -\log\left\Vert \gh_{ml}\left(x+ml\omega\right)\right\Vert -\log\left\Vert \gh_{ml}\left(x\right)\right\Vert \right|<C\log n\label{eq:T2ml-Tml-Tml}
\end{equation}
up to a set not exceeding $Cn^{-1}$ in measure.

From \eqref{M^a_{n+l}-M^a_{n}-bound} we get that for sufficiently
large $\delta$ we have
\[
\left|\log\left\Vert \gh_{n}\left(x\right)\right\Vert -\log\left\Vert \gh_{ml}\left(x\right)\right\Vert \right|\le\delta\left(n-ml\right)
\]
up to a set not exceeding $\exp\left(-c\delta\left(n-ml\right)\right)$
in measure. We can choose $\delta>(\log n)/c$ to conclude that
\[
\left|\log\left\Vert \gh_{n}\left(x\right)\right\Vert -\log\left\Vert \gh_{ml}\left(x\right)\right\Vert \right|\le C\left(\log n\right)^{2}
\]
up to a set not exceeding $n^{-1}$ in measure. From this, similar
estimates (using \eqref{M^a_{n+l}-M^a_{n}-bound} and \eqref{M^a(x+l)-M^a(x)-bound}),
and \eqref{T2ml-Tml-Tml} we can conclude that
\[
\left|\log\left\Vert \gh_{2n}\left(x\right)\right\Vert -\log\left\Vert \gh_{n}\left(x+n\omega\right)\right\Vert -\log\left\Vert \gh_{n}\left(x\right)\right\Vert \right|<C\left(\log n\right)^{2}
\]
up to a set not exceeding $Cn^{-1}$ in measure. Thus we proved \eqref{lem:Ln_L-desired-estimate}
and this concludes the proof.
\end{proof}
The bound from the previous lemma can be improved, as in \cite[Theorem 5.1]{MR1847592},
to $L_{n}-L\le C_{0}/n$. However, we won't need this better bound
in this paper.

Now we are able to prove the improved version of the deviations estimate
(cf. \cite[Theorem 7.1]{MR1847592}).
\begin{thm}
\label{thm:ldt}For any $\delta>0$ and any integer $n>1$ we have
\[
\mes\left\{ x\in\TT:\left|\log\left\Vert \gh_{n}\left(x\right)\right\Vert -nL_{n}^{a}\right|>\delta n\right\} <\exp\left(-c_{0}\delta n+C_{0}\left(\log n\right)^{p}\right)
\]
where $c_{0}=c_{0}\left(\left\Vert a\right\Vert _{\infty},\left\Vert b\right\Vert _{*},\left|E\right|,\omega,\gamma\right)$
and $C_{0}=C_{0}\left(\left\Vert a\right\Vert _{\infty},\left\Vert b\right\Vert _{*},\left|E\right|,\omega,\gamma,p\right)$.
The same estimate, with possibly different constants, holds with $L^{a}$
instead of $L_{n}^{a}$. \end{thm}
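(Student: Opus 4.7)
The plan is to bootstrap Lemma \ref{lem:ldt_delta^2} (which gives the sub-optimal $\delta^{2}$ rate) using the Avalanche Principle, along the lines of the proof of Lemma \ref{lem:Ln_L} but carried out quantitatively in $\delta$. The case $\delta\ge\delta_{0}$ for a fixed $\delta_{0}=\delta_{0}(\gamma)$ is already in the remark following Lemma \ref{lem:ldt_delta^2}, so I may and do assume $\delta<\delta_{0}$; I may also assume $\delta n>C_{0}(\log n)^{p}$ since otherwise the claimed bound exceeds $1$. Pick an $l=l(n,\delta)$ satisfying $2(\log n)/\gamma<l\le n\delta/(C\lambda_{0})$ (for instance $l\asymp\delta n$; such $l$ exists in the regime above), and partition $n=\sum_{j=1}^{m}l_{j}$ with $l\le l_{j}\le3l$.

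For each shift $s\in\{0,1,\ldots,l-1\}$, I apply Lemma \ref{lem:AppliedAP} to $A_{j}(y)=\fg_{l_{j}}(y+s_{j-1}\omega)$ at $y=x+s\omega$ (using $\fg$ for the determinant hypothesis, then transferring to $\gh$ via \eqref{logM^u...logM^a}). The Lyapunov comparisons $L^{u}_{l_{j}}-L^{u}_{l_{j}+l_{j+1}}\le\gamma/100$ come from Lemma \ref{lem:Ln_L}, and the per-factor LDT hypotheses with the fixed deviation $\gamma/100$ come from Corollary \ref{cor:ldt_delta^2...M^u} (absorbing the $(\log l)^{p}$ term gives effectively $\sigma=1$). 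This produces, outside an exceptional set of measure $\le3n\exp(-c_{0}l)$,
\[
\log\|\gh_{n}(x+s\omega)\|=\sum_{j=1}^{m-1}\log\|\gh_{l_{j}+l_{j+1}}(x+s\omega+s_{j-1}\omega)\|-\sum_{j=2}^{m-1}\log\|\gh_{l_{j}}(x+s\omega+s_{j-1}\omega)\|+O(1/l).
\]
Averaging this identity over $s\in[0,l-1]$, the left-hand side differs from $\log\|\gh_{n}(x)\|$ by at most $\lambda l\le n\delta/C$ on a set of measure $\ge1-\exp(-c\lambda l)$ thanks to \eqref{M^a-almost-invariance}. The double sum $\sum_{s}\sum_{j}$ on the right then collapses (up to boundary adjustments because the $l_{j}$'s may differ) into two Birkhoff sums with the canonical shift $\omega$: $\frac{1}{l}\sum_{k=0}^{N_{1}-1}\log\|\gh_{2l}(x+k\omega)\|$ and $\frac{1}{l}\sum_{k=l}^{N_{1}-1}\log\|\gh_{l}(x+k\omega)\|$, with $N_{1}\asymp n$.

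To these Birkhoff sums I apply Theorem \ref{thm:sh_ldt} to the subharmonic functions $\log\|\gh_{2l}(\cdot)\|/(2l)$ and $\log\|\gh_{l}(\cdot)\|/l$, whose Riesz data on $\cA_{\rho_{0}}$ are uniformly bounded by Lemma \ref{lem:M^a-Riesz_bounds}. With deviation $\delta/10$ and $N_{1}\asymp n$ terms each sum is within $O(n\delta)$ of $2N_{1}L^{a}_{2l}$ respectively $(N_{1}-l)L^{a}_{l}$, outside a set of measure $\le\exp(-c\delta n+C(\log n)^{p})$. Lemma \ref{lem:Ln_L} then lets me replace $L^{a}_{2l}$ and $L^{a}_{l}$ by $L^{a}$, producing an error $O(n(\log l)^{2}/l)+O((\log n)^{2})$; both are absorbed either into $n\delta$ or into the allowed $C(\log n)^{p}$ slack for the chosen $l$. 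Combining all the pieces yields $|\log\|\gh_{n}(x)\|-nL^{a}_{n}|\le n\delta$ off an exceptional set of measure $\le\exp(-c_{0}\delta n+C_{0}(\log n)^{p})$.

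The main obstacle is the simultaneous choice of $l$: the AP hypothesis demands $l>2\log n/\gamma$, the AP union bound over $l$ shifts demands $l$ large enough that $3nl\exp(-cl)\le\exp(-c_{0}\delta n+C_{0}(\log n)^{p})$, the almost invariance error demands $\lambda l\le n\delta$, and the Lyapunov-exponent approximation demands $(\log l)^{2}/l\lesssim\delta$. The scale $l\asymp\delta n$ (with the freedom to enlarge $\lambda$ from its floor $\lambda_{0}$) balances these in the meaningful regime $\delta n\gtrsim(\log n)^{p}$. Finally, the variant with $L^{a}$ in place of $L^{a}_{n}$ is immediate: Lemma \ref{lem:Ln_L} gives $n|L^{a}_{n}-L^{a}|\le C(\log n)^{2}$, absorbed without cost into $C_{0}(\log n)^{p}$.
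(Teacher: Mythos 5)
Your outline matches the paper's strategy up to the last step: bootstrap Lemma \ref{lem:ldt_delta^2} by the Avalanche Principle at scale $l\asymp\delta n$, sum over shifts, absorb the shift via \eqref{M^a-almost-invariance}, and estimate the resulting Birkhoff sums of $\log\|\gh_{l}\|$ and $\log\|\gh_{2l}\|$ by Theorem \ref{thm:sh_ldt}. This correctly yields
\[
\log\left\Vert \gh_{n}\left(x\right)\right\Vert +\left(m-2\right)l\,L_{l}^{a}-2\left(m-2\right)l\,L_{2l}^{a}=O\left(\delta n\right)
\]
off the right exceptional set. Where you diverge from the paper is in how you identify the constant $\left(m-2\right)l\left(2L_{2l}^{a}-L_{l}^{a}\right)$ with $nL_{n}^{a}$, and this is where the argument breaks.

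You invoke Lemma \ref{lem:Ln_L} to replace $L_{l}^{a}$ and $L_{2l}^{a}$ by $L^{a}$. That produces an error of order $n\left(\log l\right)^{2}/l$, which with $l\asymp\delta n$ is $\asymp\left(\log n\right)^{2}/\delta$. For this to be absorbed into the deviation threshold $\delta n$ you would need $\left(\log n\right)^{2}\lesssim\delta^{2}n$, i.e.\ $\delta\gtrsim\left(\log n\right)/\sqrt{n}$. But the meaningful regime is only $\delta\gtrsim\left(\log n\right)^{p}/n$, which reaches far below $\left(\log n\right)/\sqrt{n}$. At the bottom of that range, $\delta n\asymp\left(\log n\right)^{p}$ while $\left(\log n\right)^{2}/\delta\asymp n\left(\log n\right)^{2-p}$, which is much larger. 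The claim that this error can instead go ``into the allowed $C\left(\log n\right)^{p}$ slack'' conflates a deviation bound with the exceptional-set exponent; the two are not interchangeable. Even the sharper unused bound $L_{l}-L\le C/l$ would only push the threshold to $\delta\gtrsim n^{-1/2}$, still insufficient. And the scale $l$ cannot be enlarged to compensate, because the almost-invariance error $\lambda l$ forces $l\lesssim\delta n$ while the AP union bound forces $l\gtrsim\delta n$, pinning $l\asymp\delta n$.

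The paper avoids this entirely by \emph{integrating} the relation
\[
\log\left\Vert \gh_{n}\left(x\right)\right\Vert +\left(m-2\right)l\left(L_{l}^{a}-2L_{2l}^{a}\right)=O\left(\delta n\right)
\]
over $\TT$ (using Lemma \ref{lem:Integral-lemma} to control the contribution from the exceptional set) to conclude directly that $\left|nL_{n}^{a}+\left(m-2\right)l\left(L_{l}^{a}-2L_{2l}^{a}\right)\right|<C\delta n$. This identifies the constant \emph{exactly} to the needed accuracy without any reference to $L^{a}$, and Lemma \ref{lem:Ln_L} is used only at the very end (to pass from $L_{n}^{a}$ to $L^{a}$), where the error is $n\left(L_{n}^{a}-L^{a}\right)\le C\left(\log n\right)^{2}\ll\left(\log n\right)^{p}\le\delta n$, which is indeed harmless. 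To fix your proof, replace the Lemma \ref{lem:Ln_L} step at scale $l$ with this integration argument.
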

\begin{proof}
First note that due to \eqref{M^a-L^a-bound} we just need to check
the estimate for $\delta<\delta_{0}$. Furthermore, note that the
estimate is trivial if $-c_{0}\delta n+C_{0}\left(\log n\right)^{p}>0$.
Hence we just need to check the estimate for $\delta$ satisfying
\begin{equation}
C\frac{\left(\log n\right)^{p}}{n}\le\delta<\delta_{0},\label{eq:delta_assumption}
\end{equation}
where $C=C_{0}/c_{0}$ can be made as large as we need by choosing
$C_{0}$ large. Furthermore by choosing $C_{0}$ large enough we can
make sure that the deviations estimate holds trivially for small $n$.
Hence it is enough to check the estimate for $n$ large enough.

Let $l=\left[\delta n\right]$+1, $m=\left[n/l\right]$ and $l'=n-\left(m-1\right)l$.
An application of the Avalanche Principle (using \corref{ldt_delta^2...M^u},
\eqref{delta_assumption}, and \lemref{AppliedAP}) yields
\begin{multline*}
\log\left\Vert \gh_{n}\left(x\right)\right\Vert +\sum_{j=1}^{m-2}\log\left\Vert \gh_{l}\left(x+jl\omega\right)\right\Vert -\log\left\Vert \gh_{l'+l}\left(x+\left(m-2\right)l\omega\right)\right\Vert \\
-\sum_{j=0}^{m-3}\log\left\Vert \gh_{l}\left(x+\left(j+1\right)l\omega\right)\gh_{l}\left(x+jl\omega\right)\right\Vert =O\left(\frac{1}{l}\right)
\end{multline*}
up to a set of measure less than $3n\exp\left(-cl\right)<\exp\left(-c\delta n/2\right)$.
From \eqref{M^a-bound} we can conclude that
\begin{multline*}
\left|\log\left\Vert \gh_{l'}\left(x+\left(m-1\right)l\omega\right)\gh_{l}\left(x+\left(m-2\right)l\omega\right)\right\Vert \right|\\
=\left|\log\left\Vert \gh_{l'+l}\left(x+\left(m-2\right)l\omega\right)\right\Vert \right|\le Cl
\end{multline*}
up to a set of measure less than $\exp\left(-cl\right)\le\exp\left(-c\delta n\right)$.
Hence
\[
\log\left\Vert \gh_{n}\left(x\right)\right\Vert +\sum_{j=1}^{m-2}\log\left\Vert \gh_{l}\left(x+jl\omega\right)\right\Vert -\sum_{j=0}^{m-3}\log\left\Vert \gh_{2l}\left(x+jl\omega\right)\right\Vert =O\left(l\right)
\]
up to a set of measure less than $\exp\left(-c\delta n\right)$. Summing
the above estimate with $x+k\omega$ instead of $x$ yields
\begin{multline*}
\frac{1}{l}\sum_{k=0}^{l-1}\log\left\Vert \gh_{n}\left(x+k\omega\right)\right\Vert +\sum_{j=l}^{\left(m-1\right)l-1}\frac{1}{l}\log\left\Vert \gh_{l}\left(x+j\omega\right)\right\Vert \\
-\sum_{j=0}^{\left(m-2\right)l-1}\frac{1}{l}\log\left\Vert \gh_{2l}\left(x+j\omega\right)\right\Vert =O\left(l\right)
\end{multline*}
up to a set of measure less than $l\exp\left(-c\delta n\right)<\exp\left(-c\delta n/2\right)$.
Using \eqref{M^a-almost-invariance} we can conclude that
\[
\log\left\Vert \gh_{n}\left(x\right)\right\Vert +\sum_{j=l}^{\left(m-1\right)l-1}\frac{1}{l}\log\left\Vert \gh_{l}\left(x+j\omega\right)\right\Vert -\sum_{j=0}^{\left(m-2\right)l-1}\frac{1}{l}\log\left\Vert \gh_{2l}\left(x+j\omega\right)\right\Vert =O\left(l\right)
\]
up to a set of measure less than $\exp\left(-c_{1}\delta n\right)+\exp\left(-c_{2}l\right)<\exp\left(-c\delta n\right)$.
From this, \thmref{sh_ldt}, and \eqref{L^a_l-bounds} it follows
that
\[
\log\left\Vert \gh_{n}\left(x\right)\right\Vert +\left(m-2\right)l\left(L_{l}^{a}-2L_{2l}^{a}\right)=O\left(\delta n\right)
\]
up to a set of measure less than 
\[
2\exp\left(-c_{1}\delta n+C\left(\log n\right)^{p}\right)+\exp\left(-c_{2}\delta n\right)<\exp\left(-c\delta n+C\left(\log n\right)^{p}\right).
\]
Integrating over $\TT$ and using \lemref{Integral-lemma} yields
\[
\left|nL_{n}^{a}+\left(m-2\right)l\left(L_{l}^{a}-2L_{2l}^{a}\right)\right|<C_{1}\delta n+C_{2}n\exp\left(\left(-c\delta n+C\left(\log n\right)^{p}\right)/2\right)<C\delta n.
\]
Note that for the last inequality to hold we need to choose $C$ large
enough in \eqref{delta_assumption}. Now we have that
\[
\left|\log\left\Vert \gh_{n}\left(x\right)\right\Vert -nL_{n}^{a}\right|<C\delta n
\]
up to a set of measure less than $\exp\left(-c\delta n+C\left(\log n\right)^{p}\right)$.
The fact that $L_{n}^{a}$ can be replaced by $L^{a}$ follows from
\lemref{Ln_L} and \eqref{delta_assumption}. \end{proof}
\begin{cor}
\label{cor:ldt_M^u}For any $\delta>0$ and any integer $n>1$ we
have 
\[
\mes\left\{ x\in\TT:\left|\log\left\Vert \fg_{n}\left(x\right)\right\Vert -nL_{n}^{u}\right|>\delta n\right\} <\exp\left(-c_{0}\delta n+C_{0}\left(\log n\right)^{p}\right)
\]
where $c_{0}=c_{0}\left(\left\Vert a\right\Vert _{\infty},\left\Vert b\right\Vert _{*},\left|E\right|,\omega,\gamma\right)$
and $C_{0}=C_{0}\left(\left\Vert a\right\Vert _{\infty},\left\Vert b\right\Vert _{*},\left|E\right|,\omega,\gamma,p\right)$.
The same estimate, with possibly different constants, holds with $L^{u}$
instead of $L_{n}^{u}$.\end{cor}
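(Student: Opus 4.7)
The plan is to mimic the derivation of \corref{ldt_delta^2...M^u} from \lemref{ldt_delta^2}, but now using the sharper bound \thmref{ldt} in place of \lemref{ldt_delta^2}. The starting point is the identity \eqref{logM^u...logM^a}, namely
\[
\log\left\Vert \fg_{n}\left(x\right)\right\Vert =-\tfrac{1}{2}\bigl(\t S_{n}\left(x\right)+S_{n}\left(x+\omega\right)\bigr)+\log\left\Vert \gh_{n}\left(x\right)\right\Vert,
\]
and its integrated form $L_{n}^{u}=-\tfrac{1}{2}(\t D+D)+L_{n}^{a}$. Subtracting $nL_n^u$ from the left and the corresponding expression from the right, I get a decomposition of $\log\|\fg_n(x)\|-nL_n^u$ as a sum of three mean-zero contributions: the fluctuation of $\log\|\gh_n(x)\|$ around $nL_n^a$, and the fluctuations of $\t S_n(x)$ and $S_n(x+\omega)$ around their respective means.

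The next step is the standard union bound: if $|\log\|\fg_n(x)\|-nL_n^u|>\delta n$, then by the triangle inequality (with weights $1,\tfrac12,\tfrac12$) at least one of
\[
\bigl|\log\|\gh_{n}(x)\|-nL_{n}^{a}\bigr|>\tfrac{\delta n}{2},\quad |\t S_{n}(x)-n\t D|>\tfrac{\delta n}{2},\quad |S_{n}(x+\omega)-nD|>\tfrac{\delta n}{2}
\]
must hold. I then bound the measure of the first exceptional set by \thmref{ldt}, which yields precisely the target estimate $\exp(-c_0\delta n+C_0(\log n)^p)$ (with constants depending on $\|a\|_\infty,\|b\|_*,|E|,\omega,\gamma,p$). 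The second and third sets are bounded directly by \thmref{sh_ldt} applied to the subharmonic functions $\log|\t b|$ and $\log|b|$ (here $\|b\|_*$ controls the size needed in the Riesz-representation hypothesis of \thmref{sh_ldt}), giving bounds of the form $\exp(-c\delta n+C(\log n)^p)$. Summing the three gives the claimed estimate after harmlessly enlarging $C_0$ and shrinking $c_0$.

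For the second assertion (replacing $L_n^u$ by $L^u$), I invoke \lemref{Ln_L}, which provides $|L_n^u-L^u|\le C(\log n)^2/n$. Since $p>\alpha+2\ge 3>2$, the contribution $n|L_n^u-L^u|\le C(\log n)^2$ is absorbed into the $C_0(\log n)^p$ error term after adjusting constants; alternatively, the deviation $|\log\|\fg_n(x)\|-nL^u|>\delta n$ implies $|\log\|\fg_n(x)\|-nL_n^u|>\delta n/2$ for all $n$ large enough relative to $\delta$, and for small $n$ the estimate is trivial by choosing $C_0$ large.

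No step is genuinely hard: the only point that requires any thought is book-keeping of the three triangle-inequality constants and confirming that \lemref{Ln_L} gives a correction with the right power of $\log n$. The main conceptual content sits entirely in \thmref{ldt} and \thmref{sh_ldt}, both already available.
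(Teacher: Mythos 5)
Your argument is correct and is exactly what the paper intends: the paper's proof of this corollary simply states that ``the proof is the same as for Corollary~\ref{cor:ldt_delta^2...M^u},'' and your decomposition via \eqref{eq:logM^u...logM^a} into the $\gh_n$-fluctuation plus the two Birkhoff-sum fluctuations of $\log|b|$, $\log|\tilde b|$ (bounded by \thmref{thm:ldt} and \thmref{thm:sh_ldt} respectively) reproduces that argument with \thmref{thm:ldt} in place of \lemref{lem:ldt_delta^2}. Your treatment of the $L^u$ versus $L_n^u$ replacement via \lemref{lem:Ln_L} likewise matches the remark at the end of the proof of \thmref{thm:ldt}.
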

\begin{proof}
The proof is the same as for \corref{ldt_delta^2...M^u}.
\end{proof}
Next we establish some estimates that will be needed in the next section.
First we prove a uniform upper bound for $\log\left\Vert M_{n}^{a}\right\Vert $.
We will need the following general result about averages of subharmonic
functions.
\begin{lem}
(\cite[Lemma 4.1]{MR2438997}) Let $u$ be a subharmonic function
and let 
\[
u\left(z\right)=\int_{\mathbb{C}}\log\left|z-\zeta\right|d\mu\left(\zeta\right)+h\left(z\right)
\]
be its Riesz representation on a neighborhood of $\cA_{\rho}$. If
$\mu\left(\cA_{\rho}\right)+\left\Vert h\right\Vert _{L^{\infty}\left(\cA_{\rho}\right)}\le M$
then for any $r_{1},r_{2}\in\left(1-\rho,1+\rho\right)$ we have
\[
\left|\left\langle u\left(r_{1}\left(\cdot\right)\right)\right\rangle -\left\langle u\left(r_{2}\left(\cdot\right)\right)\right\rangle \right|\le C_{0}\left|r_{1}-r_{2}\right|,
\]
where $C_{0}=C_{0}\left(M,\rho\right)$.
\end{lem}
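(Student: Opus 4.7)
My plan is to use the Riesz representation to split the circular average into a potential term and a harmonic term, and to show that each is Lipschitz in $r$ on $(1-\rho,1+\rho)$ with constant depending only on $M$ and $\rho$. Fix $r_{1},r_{2}\in(1-\rho,1+\rho)$; by Fubini,
\[
\left\langle u(r(\cdot))\right\rangle =\int\Phi(r,\zeta)\,d\mu(\zeta)+H(r),
\]
where $\Phi(r,\zeta):=\int_{\TT}\log|re^{2\pi i\theta}-\zeta|\,d\theta$ and $H(r):=\int_{\TT}h(re^{2\pi i\theta})\,d\theta$. I would then handle the two pieces separately.

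For the potential piece, I would invoke the classical identity $\Phi(r,\zeta)=\log\max(r,|\zeta|)$, which is $\log r$ for $r\ge|\zeta|$ and the constant $\log|\zeta|$ for $r\le|\zeta|$, and is therefore $(1-\rho)^{-1}$-Lipschitz in $r$ uniformly in $\zeta$. Integrating against $d\mu$ over $\cA_{\rho}$ and using $\mu(\cA_{\rho})\le M$ produces a bound of the form $M(1-\rho)^{-1}|r_{1}-r_{2}|$. Any mass of $\mu$ sitting slightly outside $\cA_{\rho}$ (the Riesz representation is only assumed on a neighborhood) contributes a kernel that is real-analytic in $r$ on the compact interval $[1-\rho,1+\rho]$, and I would absorb that smooth contribution into the harmonic term without affecting the overall bound.

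For the harmonic piece, harmonicity of $h$ on the annular neighborhood forces its Fourier expansion along each circle $r\TT$ to have the form $h(re^{2\pi i\theta})=\sum_{n\in\mathbb{Z}}c_{n}(r)e^{2\pi in\theta}$ with $c_{n}(r)=a_{n}r^{n}+b_{n}r^{-n}$ for $n\ne0$ and $c_{0}(r)=\alpha+\beta\log r$. Hence $H(r)=c_{0}(r)=\alpha+\beta\log r$. Since $|H(r)|\le\|h\|_{L^{\infty}(\cA_{\rho})}\le M$ on $(1-\rho,1+\rho)$, evaluating at two well-separated radii (say $r=1\pm\rho/2$) gives $|\beta|\le C(\rho)M$, whence $|H(r_{1})-H(r_{2})|=|\beta|\cdot|\log r_{1}-\log r_{2}|\le C(M,\rho)|r_{1}-r_{2}|$. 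Adding the potential and harmonic estimates yields the asserted Lipschitz bound.

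I expect the main difficulty to be bookkeeping rather than anything substantive: one must verify at each step that all constants depend only on $M$ and $\rho$, and in particular confirm that Riesz mass sitting just outside $\cA_{\rho}$ does not inflate the constant. This is handled by the real-analyticity of $\log|re^{2\pi i\theta}-\zeta|$ in $r$ whenever $|\zeta|$ stays bounded away from the circles $r\TT$, so the external mass only ever contributes a smooth-in-$r$ correction that can be folded into $h$.
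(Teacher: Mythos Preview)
The paper does not give its own proof of this lemma; it is quoted from \cite[Lemma~4.1]{MR2438997}. Your approach is the natural one: the identity $\int_{\TT}\log|re^{2\pi i\theta}-\zeta|\,d\theta=\log\max(r,|\zeta|)$ handles the potential part, and the observation that the circular mean of a function harmonic on an annulus has the form $\alpha+\beta\log r$ (with $|\beta|$ controlled by two values of $H$) handles the remainder.

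There is, however, a genuine gap in your treatment of Riesz mass lying in the neighborhood but outside $\cA_\rho$. You propose to fold this contribution into $h$ ``without affecting the overall bound,'' arguing via real-analyticity in $r$; but smoothness alone says nothing about the \emph{size} of the Lipschitz constant, and absorbing that mass into $h$ produces a new harmonic function $\tilde h=h+\int_{U\setminus\cA_\rho}\log|\cdot-\zeta|\,d\mu$ whose $L^\infty(\cA_\rho)$ norm is not controlled by the hypothesis $\|h\|_{L^\infty(\cA_\rho)}\le M$. Concretely, take $u(z)=N\log|z-\zeta_0|$ with $|\zeta_0|$ slightly below $1-\rho$, so that $\zeta_0$ lies in the neighborhood but not in $\cA_\rho$: then $\mu(\cA_\rho)=0$ and $h\equiv 0$, so the hypothesis holds with $M=0$, yet $\langle u(r\cdot)\rangle=N\log r$ on $(1-\rho,1+\rho)$ has Lipschitz constant of order $N$. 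Read literally, the lemma therefore needs either that $\mu$ be supported in $\cA_\rho$ or that the mass bound hold on the full neighborhood. In the paper's applications (see the proof of \lemref{M^a-Riesz_bounds}, where $\mu_n(\cA_{\rho_0'})$ is bounded on the larger annulus) the stronger bound is always available, so the defect is in the statement rather than in your method; once that extra hypothesis is in place your argument goes through unchanged.
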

The following corollary is an immediate consequence of the previous
lemma and \lemref{M^a-Riesz_bounds}.
\begin{cor}
\label{cor:L(r1)-L(r2)}There exists a constant $C_{0}=C_{0}\left(\left\Vert a\right\Vert _{\infty},\left\Vert b\right\Vert _{*},\left|E\right|,\rho_{0},\rho_{0}',\rho_{0}''\right)$
such that
\[
\left|L_{n}^{u}\left(r_{1}\right)-L_{n}^{u}\left(r_{2}\right)\right|=\left|L_{n}^{a}\left(r_{1}\right)-L_{n}^{a}\left(r_{2}\right)\right|\le C_{0}\left|r_{1}-r_{2}\right|
\]
for any $r_{1},r_{2}\in\left(1-\rho_{0},1+\rho_{0}\right)$ and any
positive integer $n$.\end{cor}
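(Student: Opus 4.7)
The plan is to apply the preceding lemma to the subharmonic function $u_n(z) := n^{-1}\log\|\gh_n(z)\|$. Since $\gh_n$ is analytic on a neighborhood of $\overline{\cA_{\rho_0''}}$, $u_n$ is subharmonic there, and \lemref{M^a-Riesz_bounds} furnishes exactly the needed uniform (in $n$) control $\mu_n(\cA_{\rho_0}) + \|h_n\|_{L^\infty(\cA_{\rho_0})} \le C_0$ on its Riesz data on $\cA_{\rho_0'}$. The preceding lemma then gives $|\langle u_n(r_1\,\cdot)\rangle - \langle u_n(r_2\,\cdot)\rangle| \le C|r_1-r_2|$, and since $\langle u_n(r\,\cdot)\rangle = L_n^a(r)$ this is precisely the Lipschitz estimate for $L_n^a$.

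To transfer the estimate to $L_n^u$, integrate \eqref{logM^u...logM^a} around $r\TT$ to obtain
\[
L_n^u(r) = L_n^a(r) - \tfrac{1}{2}\bigl(\t D(r) + D(r)\bigr).
\]
Both $\log|b|$ and $\log|\t b|$ are subharmonic on a neighborhood of $\overline{\cA_{\rho_0''}}$, with Riesz data on $\cA_{\rho_0'}$ bounded in terms of $\|b\|_\infty$ (and $\|\t b\|_* = \|b\|_*$), so the preceding lemma applies to each and yields $|D(r_1) - D(r_2)| \le C|r_1 - r_2|$ and $|\t D(r_1) - \t D(r_2)| \le C|r_1 - r_2|$. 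Combined with the bound already obtained for $L_n^a$, this gives the Lipschitz estimate for $L_n^u$ with a (possibly enlarged) uniform constant $C_0$. Moreover, from $\t b(z) = \overline{b(1/\bar z)}$ one has $\t D(r) = D(1/r)$, and when $\rho_0$ is small enough that $\cA_{\rho_0}$ contains no zeros of $b$, Jensen's formula gives
\[
D(r) + D(1/r) = 2\log|b(0)| - 2\sum_{|\zeta_k|<1}\log|\zeta_k|,
\]
which is independent of $r$, so the correction term drops out of any difference and the equality of the two absolute values in the statement holds verbatim.

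All the substantive work has already been done in \lemref{M^a-Riesz_bounds}, which records the uniformity of the Riesz bounds in $n$ and tracks the dependence on $\|a\|_\infty$, $\|b\|_*$, $|E|$, and the $\rho_0$'s; I do not anticipate any real obstacle in assembling the pieces.
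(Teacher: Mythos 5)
Your argument for the Lipschitz bound $|L_n^a(r_1)-L_n^a(r_2)|\le C_0|r_1-r_2|$ — apply the preceding lemma to $u_n=n^{-1}\log\|\gh_n\|$ with the uniform Riesz data from \lemref{M^a-Riesz_bounds} — is exactly the paper's one-line proof, and your route to the same bound for $L_n^u$, applying the lemma again to $\log|b|$ and $\log|\t b|$ and using $L_n^u(r)=L_n^a(r)-\tfrac12\bigl(D(r)+\t D(r)\bigr)$, is a sound way to finish.

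The step that does not hold up is the justification of the literal equality $|L_n^u(r_1)-L_n^u(r_2)|=|L_n^a(r_1)-L_n^a(r_2)|$. You deduce it from the assertion that $D(r)+\t D(r)=D(r)+D(1/r)$ is independent of $r$, which by Jensen's formula is true exactly when $b$ has no zeros in (a slight enlargement of) $\cA_{\rho_0}$. You flag this yourself with ``when $\rho_0$ is small enough that $\cA_{\rho_0}$ contains no zeros of $b$,'' but that proviso is not available here: the paper explicitly allows $b$ to vanish on $\TT$ — dealing with such zeros is its central technical point — and if $b(\zeta_0)=0$ for some $\zeta_0\in\TT$, then $\cA_{\rho_0}$ contains a zero of $b$ for \emph{every} $\rho_0>0$. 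A concrete instance is $b(z)=z-1$, for which $D(r)+D(1/r)=|\log r|$, so $L_n^u(r)-L_n^a(r)$ genuinely varies with $r$. What is actually used downstream (in \propref{M^a-upper-bound} and \thmref{Number-of-eigenvalues}) is only the Lipschitz estimate for each of $L_n^u$ and $L_n^a$ separately, which your argument does deliver; the reduction to the displayed equality via a zero-free annulus is not legitimate in this setting.
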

\begin{prop}
\label{prop:M^a-upper-bound} For any integer $n>1$ we have that
\[
\sup_{x\in\TT}\log\left\Vert \gh_{n}\left(x\right)\right\Vert \le nL_{n}^{a}+C_{0}\left(\log n\right)^{p}
\]
where $C_{0}=C_{0}\left(\left\Vert a\right\Vert _{\infty},\left\Vert b\right\Vert _{*},\left|E\right|,\omega,\gamma,p\right)$.\end{prop}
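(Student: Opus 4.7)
The plan is to combine the subharmonicity of $u_n := \log\|\gh_n(\cdot)\|$ on $\cA_{\rho_0''}$ with \thmref{ldt} (which by construction holds uniformly on nearby concentric circles) and the Lipschitz control on $L_n^a(r)$ supplied by \corref{L(r1)-L(r2)}. The key inputs are that $u_n$ is subharmonic (since $\gh_n$ is analytic) and uniformly bounded above by $Cn$ on $\cA_{\rho_0''}$, as observed in the proof of \lemref{M^a-Riesz_bounds}. Fix $x_0 \in \TT$ and set $\rho := C_1(\log n)^p / n$ with $C_1$ to be chosen; for $n$ large, $D(x_0,\rho) \subset \cA_{\rho_0/4}$, and the sub-mean value property for $u_n$ on this disk gives
\[
u_n(x_0) - nL_n^a \;\le\; \frac{1}{\pi\rho^2}\int_{D(x_0,\rho)}\bigl(u_n(z) - nL_n^a\bigr)\,dA(z).
\]

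Passing to polar coordinates $z = re^{2\pi i\phi}$ and applying \thmref{ldt} on each circle $r\TT$ for $r \in [1-\rho,1+\rho]$ with $\delta := K(\log n)^p/n$ for a large constant $K$, the ``bad'' slice $\{\phi : |u_n(re^{2\pi i\phi}) - nL_n^a(r)| > K(\log n)^p\}$ has measure at most $\exp\bigl(-(c_0 K - C_0)(\log n)^p\bigr)$. Combined with $n|L_n^a(r) - L_n^a| \le Cn\rho = CC_1(\log n)^p$ from \corref{L(r1)-L(r2)}, on the ``good'' complement in $D(x_0,\rho)$ the integrand is bounded by $(K+CC_1)(\log n)^p$, contributing at most $(K+CC_1)(\log n)^p \cdot \pi\rho^2$ to the integral. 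On the bad part I use $u_n(z) \le Cn$; its total area inside $D(x_0,\rho)$ is at most $2\rho\,\exp(-(c_0K - C_0)(\log n)^p)$, contributing at most $2Cn\rho\,\exp(-(c_0K-C_0)(\log n)^p)$. Dividing by $\pi\rho^2$ yields
\[
u_n(x_0) - nL_n^a \;\le\; (K+CC_1)(\log n)^p + O\!\left(\tfrac{n}{\rho}\,\exp(-(c_0K - C_0)(\log n)^p)\right),
\]
and by choosing $K$ so that $c_0K > C_0$, the error term becomes $O\bigl(n^2(\log n)^{-p}\exp(-(c_0K-C_0)(\log n)^p)\bigr)$, which decays faster than any power of $n$. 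Since $C_1$ and $K$ are independent of $x_0$, this is the claimed uniform bound.

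The main obstacle I anticipate is the balancing of the disk radius $\rho$: one needs $\rho$ small enough that the drift $n|L_n^a(r) - L_n^a|$ stays within $O((\log n)^p)$, yet not so small that the factor $1/\rho^2$ erases the super-polynomial gain from the exceptional set in \thmref{ldt}. The scale $\rho \sim (\log n)^p/n$ is exactly the sweet spot. Crucially, the whole argument relies on having \thmref{ldt} with constants uniform in $r$ on a neighborhood of $r=1$; absent this the sub-mean value bound would only see values of $u_n$ on $\TT$ itself and the argument would collapse. This uniformity was arranged in advance in the paper's setup, where all prior estimates were shown to hold on $r\TT$ for $r \in (1-\rho_0/2, 1+\rho_0/2)$.
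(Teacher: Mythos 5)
Your argument is correct and follows essentially the same route as the paper's proof: use the sub-mean value inequality for the subharmonic function $\log\|\gh_n\|$ over a small disk, apply \thmref{ldt} radially on concentric circles (valid near $r=1$ since $L^u(r)\ge\gamma/2$ there by \corref{L(r1)-L(r2)}), control the drift $n|L_n^a(r)-L_n^a|$ by the same corollary, and absorb the exceptional set contribution. The only cosmetic differences are the choice of disk radius ($C_1(\log n)^p/n$ versus the paper's $n^{-1}$, which both produce the same final order once divided by $\pi\rho^2$) and the treatment of the bad set (you use the deterministic upper bound $\log\|\gh_n\|\le Cn$ on $\cA_{\rho_0''}$, while the paper invokes \lemref{Integral-lemma} to get an $L^2$ bound and Cauchy--Schwarz); both are valid and yield the same superpolynomially small error.
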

\begin{proof}
It is sufficient to establish the estimate for large $n$. From the
large deviations estimate, with $n\delta=C\left(\log n\right)^{p}$
where $C$ is sufficiently large, we have
\[
\log\left\Vert \gh_{n}\left(rx\right)\right\Vert -nL_{n}^{a}\left(r\right)\le C\left(\log n\right)^{p}
\]
except for a set $\cB\left(r\right)$ of measure less than $\exp\left(-c_{1}C\left(\log n\right)^{p}+C'\left(\log n\right)^{p}\right)<\exp\left(-c\left(\log n\right)^{p}\right)$.
Here $r$ is in a neighborhood of $1$ such that $L^{u}\left(r\right)\ge\gamma/2$.
Such a neighborhood exists because of \corref{L(r1)-L(r2)}. By the
subharmonicity of $\log\left\Vert \gh_{n}\left(z\right)\right\Vert $
we have
\begin{multline}
\log\left\Vert \gh_{n}\left(x\right)\right\Vert -nL_{n}^{a}\le\frac{1}{\pi n^{-2}}\int_{D\left(x,n^{-1}\right)}\left(\log\left\Vert \gh_{n}\left(z\right)\right\Vert -nL_{n}^{a}\right)dA\left(z\right)\\
\le\frac{1}{\pi n^{-2}}\int_{1-n^{-1}}^{1+n^{-1}}\int_{x-2n^{-1}}^{x+2n^{-1}}\left|\log\left\Vert \gh_{n}\left(ry\right)\right\Vert -L_{n}^{a}\right|rdydr.\label{eq:M^a-nL^a...upper_bound}
\end{multline}
For $r\in\left(1-n^{-1},1+n^{-1}\right)$ we have
\begin{multline*}
\int_{x-2n^{-1}}^{x+2n^{-1}}\left|\log\left\Vert \gh_{n}\left(ry\right)\right\Vert -L_{n}^{a}\right|dy\\
\le\int_{x-2n^{-1}}^{x+2n^{-1}}\left|\log\left\Vert \gh_{n}\left(ry\right)\right\Vert -L_{n}^{a}\left(r\right)\right|dy+\left|L_{n}^{a}-L_{n}^{a}\left(r\right)\right|\\
\le C_{1}\left(\log n\right)^{p}n^{-1}+C_{2}n\exp\left(-c\left(\log n\right)^{p}/2\right)+C_{3}n^{-1}<C\left(\log n\right)^{p}n^{-1}.
\end{multline*}
As usual, we used \lemref{Integral-lemma} to deal with the exceptional
set. Plugging this estimate in \eqref{M^a-nL^a...upper_bound} yields
the desired conclusion.
\end{proof}
As was mentioned in the introduction, from this point forward we will
make use of the fact that $\tilde{b}=\bar{b}$ on $\TT$. In particular
we will tacitly use that $D=\tilde{D}$, $S=\tilde{S}$, $L_{n}=L_{n}^{u}$,
$L=L^{u}$, and $\left|\tilde{b}\right|=\left|\bar{b}\right|=\left|b\right|$.

Next we want to estimate $L_{n}\left(E\right)-L_{n}\left(E_{0}\right)$
in a neighborhood of $E_{0}$.
\begin{lem}
\label{lem:M^a(E1)-M^a(E2)}There exist constants $C_{0}=C_{0}\left(\left\Vert a\right\Vert _{\infty},\left\Vert b\right\Vert _{*},\max\left\{ \left|E_{1}\right|,\left|E_{2}\right|\right\} \right)$
and $c_{0}=c_{0}\left(\left\Vert b\right\Vert _{*},\omega\right)$
such that 
\begin{multline*}
\left|\log\left\Vert \fg_{l}\left(x,E_{1}\right)\right\Vert -\log\left\Vert \fg_{l}\left(x,E_{2}\right)\right\Vert \right|\\
=\left|\log\left\Vert \gh_{l}\left(x,E_{1}\right)\right\Vert -\log\left\Vert \gh_{l}\left(x,E_{2}\right)\right\Vert \right|\le\exp\left(C_{0}l\right)\left|E_{1}-E_{2}\right|
\end{multline*}
holds for any positive integer $l$ and any $x$ up to a set (independent
of $E_{1}$ and $E_{2}$) of measure less than $\exp\left(-c_{0}l\right)$.\end{lem}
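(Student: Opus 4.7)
The equality between the two quantities is essentially free: the relations \eqref{M^a} and \eqref{M^u} show that $\gh_{l}(x,E)$ and $\fg_{l}(x,E)$ are obtained from $M_{l}(x,E)$ by multiplying by scalar factors that depend only on $x$ and on $b,\tilde b$, not on $E$. Therefore
\[
\log\|\gh_{l}(x,E_{1})\|-\log\|\gh_{l}(x,E_{2})\|=\log\|\fg_{l}(x,E_{1})\|-\log\|\fg_{l}(x,E_{2})\|,
\]
and it is enough to prove the inequality for $\gh_{l}$.

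The plan is to use the elementary bound $|\log s-\log t|\le|s-t|/\min(s,t)$. For the numerator, write $\gh_{l}(x,E)=B_{l-1}(x,E)\cdots B_{0}(x,E)$ where
\[
B_{j}(x,E)=\left[\begin{array}{cc} a(x+j\omega)-E & -\tilde b(x+j\omega)\\ b(x+(j+1)\omega) & 0\end{array}\right].
\]
Each $B_{j}$ is affine in $E$ with $B_{j}(x,E_{1})-B_{j}(x,E_{2})=(E_{2}-E_{1})\mathrm{diag}(1,0)$, and $\|B_{j}(x,E_{i})\|\le\|a\|_{\infty}+\|b\|_{\infty}+\max\{|E_{1}|,|E_{2}|\}=:K$. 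A standard telescoping expansion
\[
\gh_{l}(x,E_{1})-\gh_{l}(x,E_{2})=\sum_{j=0}^{l-1}\Bigl(\prod_{k=j+1}^{l-1}B_{k}(E_{1})\Bigr)(B_{j}(E_{1})-B_{j}(E_{2}))\Bigl(\prod_{k=0}^{j-1}B_{k}(E_{2})\Bigr)
\]
then yields, uniformly in $x$,
\[
\|\gh_{l}(x,E_{1})-\gh_{l}(x,E_{2})\|\le l\,K^{l-1}|E_{1}-E_{2}|\le\exp(C_{0}'l)\,|E_{1}-E_{2}|,
\]
with $C_{0}'$ depending only on $\|a\|_{\infty}$, $\|b\|_{\infty}$ and $\max\{|E_{1}|,|E_{2}|\}$. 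This step involves no exceptional set.

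For the denominator we use $\|\gh_{l}(x,E_{i})\|\ge|\det\gh_{l}(x,E_{i})|^{1/2}=\exp\bigl(\tfrac12(\tilde S_{l}(x)+S_{l}(x+\omega))\bigr)$, which is the bound already appearing in \eqref{T-bounds-everywhere}. Applying \thmref{sh_ldt} to the subharmonic functions $\log|b|$ and $\log|\tilde b|$ gives, for $\lambda$ large enough (depending only on $\|b\|_{*}$, $\omega$), the estimate $\tilde S_{l}(x)+S_{l}(x+\omega)\ge-C_{1}l$ off a set of measure $\exp(-c_{0}l)$; crucially this exceptional set depends only on $b,\tilde b,\omega$, not on $E_{1}$ or $E_{2}$. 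Combining with the telescoping bound gives
\[
|\log\|\gh_{l}(x,E_{1})\|-\log\|\gh_{l}(x,E_{2})\||\le\frac{\|\gh_{l}(x,E_{1})-\gh_{l}(x,E_{2})\|}{\min\|\gh_{l}(x,E_{i})\|}\le\exp(C_{0}l)|E_{1}-E_{2}|
\]
off the same set. There is no real obstacle here; the only point requiring care is keeping the exceptional set independent of $E_{1},E_{2}$, which is handled by the fact that the lower bound relies solely on $\det\gh_{l}=\prod\tilde b\cdot b$.
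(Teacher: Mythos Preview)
Your proof is correct and follows essentially the same route as the paper's: the paper also reduces to the bound $|\log s-\log t|\le|s-t|/\min(s,t)$ (via the Mean Value Theorem), controls the numerator by $\sup_{E}\|\partial_{E}\gh_{l}(x,E)\|\le\exp(Cl)$ (which is exactly your telescoping estimate in disguise), and handles the denominator through the $E$-independent determinant lower bound, invoking \eqref{M^a-bound} whose exceptional set was already noted to be independent of $E$. The only cosmetic difference is that the paper cites \eqref{M^a-bound} for the lower bound rather than re-deriving it from $\det\gh_{l}$ and \thmref{sh_ldt}.
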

\begin{proof}
The identity follows from \eqref{logM^u...logM^a}. By the Mean Value
Theorem we have
\begin{multline*}
\left|\log\left\Vert \gh_{l}\left(x,E_{1}\right)\right\Vert -\log\left\Vert \gh_{l}\left(x,E_{2}\right)\right\Vert \right|\\
\le\frac{1}{\min\left\{ \left\Vert \gh_{l}\left(x,E_{1}\right)\right\Vert ,\left\Vert \gh_{l}\left(x,E_{2}\right)\right\Vert \right\} }\left|\left\Vert \gh_{l}\left(x,E_{1}\right)\right\Vert -\left\Vert \gh_{l}\left(x,E_{2}\right)\right\Vert \right|\\
\le\frac{1}{\min\left\{ \left\Vert \gh_{l}\left(x,E_{1}\right)\right\Vert ,\left\Vert \gh_{l}\left(x,E_{2}\right)\right\Vert \right\} }\sup_{E\in\left[E_{1},E_{2}\right]}\left\Vert \frac{\partial}{\partial E}\gh_{l}\left(x,E\right)\right\Vert \left|E_{1}-E_{2}\right|.
\end{multline*}
There exists a constant $C=C\left(\left\Vert a\right\Vert _{\infty},\left\Vert b\right\Vert _{\infty},\max\left\{ \left|E_{1}\right|,\left|E_{2}\right|\right\} \right)$
such that
\[
\sup_{E\in\left[E_{1},E_{2}\right]}\left\Vert \frac{\partial}{\partial E}\gh_{l}\left(x,E\right)\right\Vert \le\exp\left(Cl\right).
\]
The conclusion now follows by using \eqref{M^a-bound}.\end{proof}
\begin{lem}
Fix $E_{0}\in\CC$ such that $L\left(E_{0}\right)\ge\gamma$. There
exist constants $C_{0}=C_{0}\left(\left\Vert a\right\Vert _{\infty},\left\Vert b\right\Vert _{*},\left|E_{0}\right|,\omega,\gamma\right)$,
$C_{1}=C_{1}\left(\left\Vert a\right\Vert _{\infty},\left\Vert b\right\Vert _{*},\left|E_{0}\right|,\omega,\gamma\right)$,
and $n_{0}=n_{0}(\left\Vert a\right\Vert _{\infty},$ $\left\Vert b\right\Vert _{*},\left|E_{0}\right|,\omega,\gamma)$
such that we have
\[
\left|\log\left\Vert \gh_{n}\left(x,E\right)\right\Vert -\log\left\Vert \gh_{n}\left(x,E_{0}\right)\right\Vert \right|\le n^{-C_{0}}
\]
 for $n\ge n_{0}$, $\left|E-E_{0}\right|<n^{-C_{1}}$, and all $x$
up to a set $\cB=\cB\left(n,E_{0}\right)$ of measure less than $n^{-1}$.\end{lem}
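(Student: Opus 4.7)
The plan is to apply the Avalanche Principle \lemref{AppliedAP} with logarithmic block size $l=\left[C_l\log n\right]$ separately at the energies $E$ and $E_0$, and then subtract the two resulting identities, controlling each block-by-block difference using \lemref{M^a(E1)-M^a(E2)}. The matching of scales is the key: a block of size $l=O(\log n)$ is short enough that the crude Mean Value bound $\exp(Cl)|E-E_0|=n^{CC_l-C_1}$ is a negative power of $n$ once $C_1\gg CC_l$, while it is long enough that the Avalanche Principle remainder, of order $(n/l)\exp(-\gamma l/2)=n^{1-\gamma C_l/2}/l$, is also a negative power of $n$ once $C_l\gg 1/\gamma$. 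Both the pointwise block differences and the avalanche remainder will then be absorbed into $n^{-C_0}$ by choosing $C_0$, $C_l$, $C_1$ in that order.

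First I would fix $C_l$ large enough, depending only on $\gamma$, $C_0$ and the various absolute constants, so that simultaneously: (i) $l>(2/\gamma)\log n$; (ii) $L_l^u(E_0)-L_{2l}^u(E_0)\le\gamma/100$, which can be arranged with bounded $C_l$ exactly as in the proof of \lemref{Ln_L}; (iii) $(n/l)\exp(-\gamma l/2)\le n^{-C_0-1}$; and (iv) the pooled exceptional sets from \corref{ldt_M^u} and \lemref{M^a(E1)-M^a(E2)} summed over the $O(n/l)$ blocks have total measure at most $n^{-1}$. Then \lemref{M^a(E1)-M^a(E2)}, integrated in $x$, yields $|L_l^u(E)-L_l^u(E_0)|\le\exp(Cl)|E-E_0|\le n^{CC_l-C_1}$, which for $C_1$ sufficiently large transfers both the gap condition and the block LDT hypotheses from $E_0$ to $E$, so the hypotheses of \lemref{AppliedAP} are met at both energies.

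Next I fix a common partition $n=\sum_{j=1}^m l_j$ with $l_j\in[l,3l]$ and apply \lemref{AppliedAP} to $\fg_n(x,E)$ and to $\fg_n(x,E_0)$. By the remark following \eqref{logM^u...logM^a}, the resulting identity holds verbatim with $\fg$ replaced by $\gh$. Subtracting the two identities, $\log\left\Vert \gh_n(x,E)\right\Vert -\log\left\Vert \gh_n(x,E_0)\right\Vert$ becomes a sum of $O(n/l)$ differences of the form $\log\left\Vert \gh_{l'}(\cdot,E)\right\Vert -\log\left\Vert \gh_{l'}(\cdot,E_0)\right\Vert$ with $l'\in[l,6l]$, plus an avalanche remainder of size $O(n^{-C_0-1})$. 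By \lemref{M^a(E1)-M^a(E2)} each such difference is at most $\exp(6Cl)|E-E_0|\le n^{6CC_l-C_1}$ off an exceptional set of measure $\exp(-c_0 l)$; summing gives a total bound $(2n/l)\cdot n^{6CC_l-C_1}+n^{-C_0-1}\le n^{-C_0}$ provided $C_1\ge 6CC_l+C_0+2$, outside an exceptional set of total measure at most $n^{-1}$.

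The main technical obstacle is ensuring that \emph{all} the hypotheses of \lemref{AppliedAP} actually hold at $E$, not only at the reference energy $E_0$: one needs the gap condition on the $L_{l_j}^u(E)$, the block LDT at $E$, and boundedness of the constants in \corref{ldt_M^u} uniformly for $E$ in a neighborhood of $E_0$. At the logarithmic scale $l=O(\log n)$ the perturbation $\exp(Cl)|E-E_0|=n^{CC_l-C_1}$ is much smaller than $\gamma$ once $C_1$ is chosen large enough, so all these conditions are inherited from the corresponding statements at $E_0$ with essentially the same constants. Once this calibration of $C_l$ and $C_1$ is in place, the rest of the argument is a direct bookkeeping exercise combining \lemref{M^a(E1)-M^a(E2)} with the Avalanche Principle identity already used in the proof of \thmref{ldt}.
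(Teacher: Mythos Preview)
Your proposal is correct and follows essentially the same route as the paper: choose $l\approx C_l\log n$, run the Avalanche Principle at both energies on a common partition, subtract, and control the $O(n/l)$ block differences via \lemref{M^a(E1)-M^a(E2)}. The only noteworthy nuance is how the Avalanche hypotheses at $E$ are checked. The paper stresses that one \emph{cannot} invoke \corref{ldt_M^u} at $E$ directly because $L(E)>0$ is not known a priori; instead it verifies the raw Avalanche conditions ($\log\Vert\fg_l(x,E)\Vert>\gamma l/2$ and the near-multiplicativity) by perturbing pointwise from $E_0$ via \lemref{M^a(E1)-M^a(E2)}. Your phrasing about ``boundedness of the constants in \corref{ldt_M^u} uniformly for $E$'' could be read as applying that corollary at $E$, but your actual mechanism---transferring the block-LDT from $E_0$ to $E$ through the same perturbation estimate---is equivalent and correct.
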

\begin{proof}
Let $l=\left[C_{2}\log n\right]$, $m=\left[n/l\right]$, and $l'=n-\left(m-2\right)l$.
In what follows we should keep in mind that some of the estimates
hold by choosing $C_{2}$ large enough. To be able to apply the Avalanche
Principle we will need that $m\ge2$, hence we also need that $n$
is large enough. Applying the Avalanche Principle (see \lemref{AppliedAP})
we get
\begin{multline}
\log\left\Vert \gh_{n}\left(x,E_{0}\right)\right\Vert +\sum_{j=1}^{m-2}\log\left\Vert \gh_{l}\left(x+jl\omega,E_{0}\right)\right\Vert -\log\left\Vert \gh_{l+l'}\left(x+\left(m-2\right)l\omega,E_{0}\right)\right\Vert \\
-\sum_{j=0}^{m-3}\log\left\Vert \gh_{2l}\left(x+jl\omega,E_{0}\right)\right\Vert =O\left(\frac{n}{l}\exp\left(-\frac{\gamma}{2}l\right)\right)=O\left(\frac{1}{n^{cC_{2}}}\right)\label{eq:AP-for-M^a(E0)}
\end{multline}
up to a set of measure $3n\exp\left(-c_{1}l\right)<n^{-cC_{2}}$.
We claim that the Avalanche Principle can be applied, with the same
$\mu$, for the same factorization of $\gh_{n}\left(x,E\right)$.
Note that we cannot apply the deviations estimate since we don't know
whether $L\left(E\right)>0$. For example, \lemref{M^a(E1)-M^a(E2)}
and \corref{ldt_delta^2...M^u} imply that
\begin{multline*}
\log\left\Vert \fg_{l}\left(x,E\right)\right\Vert \ge\log\left\Vert \fg_{l}\left(x,E_{0}\right)\right\Vert -\exp\left(Cl-C_{1}\log n\right)\\
\ge\left(\gamma-\frac{\gamma}{100}\right)l-\exp\left(Cl-C_{1}\log n\right)>\frac{\gamma}{2}l
\end{multline*}
up to a set of measure $\exp\left(-c_{1}l\right)+\exp\left(-c_{2}l\right)<\exp\left(-cl\right)$.
Note that the exceptional set from the deviation estimate is already
included in the exceptional set for \eqref{AP-for-M^a(E0)} and recall
that the exceptional set from \lemref{M^a(E1)-M^a(E2)} doesn't depend
on $E$. Also note that $C_{1}$ needs to satisfy $C_{1}\ge CC_{2}$.
The other estimates needed for the Avalanche Principle are obtained
similarly, provided $C_{1}$ is large enough. Hence, \eqref{AP-for-M^a(E0)}
holds with $E$ instead of $E_{0}$. The conclusion follows by subtracting
\eqref{AP-for-M^a(E0)} for $E$ and $E_{0}$ and using \lemref{M^a(E1)-M^a(E2)}
(again, $C_{1}$ needs to be chosen to be large enough). \end{proof}
\begin{cor}
\label{cor:L(E)-L(E0)} Fix $E_{0}\in\CC$ such that $L\left(E_{0}\right)\ge\gamma$.
There exist constants $C_{0}=C_{0}\left(\left\Vert a\right\Vert _{\infty},\left\Vert b\right\Vert _{*},\left|E_{0}\right|,\omega,\gamma\right)$,
$C_{1}=C_{1}\left(\left\Vert a\right\Vert _{\infty},\left\Vert b\right\Vert _{*},\left|E_{0}\right|,\omega,\gamma\right)$,
and $n_{0}=n_{0}\left(\left\Vert a\right\Vert _{\infty},\left\Vert b\right\Vert _{*},\left|E_{0}\right|,\omega,\gamma\right)$
such that we have
\[
\left|n\left(L_{n}\left(E\right)-L_{n}\left(E_{0}\right)\right)\right|=\left|n\left(L_{n}^{a}\left(E\right)-L_{n}^{a}\left(E_{0}\right)\right)\right|\le n^{-C_{0}}
\]
 for $n\ge n_{0}$ and $\left|E-E_{0}\right|<n^{-C_{1}}$.\end{cor}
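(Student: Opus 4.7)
The plan is to derive the corollary by integrating the pointwise bound provided by the previous lemma. Using \eqref{L_n...L^a_n} (which, since $D$ does not depend on $E$, immediately gives the equality $L_n(E) - L_n(E_0) = L_n^a(E) - L_n^a(E_0)$), the task reduces to estimating
\[
n\bigl(L_n^a(E) - L_n^a(E_0)\bigr) = \int_\TT \bigl[\log\|\gh_n(x,E)\| - \log\|\gh_n(x,E_0)\|\bigr] dx.
\]
I would split this integral according to the good set $\TT \setminus \cB$ and the exceptional set $\cB$ supplied by the previous lemma. On $\TT \setminus \cB$ the integrand is pointwise at most $n^{-C_0}$, yielding a contribution bounded by $n^{-C_0}$.

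The substantive task is to bound the contribution from $\cB$, which I would handle via Cauchy--Schwarz: it is at most $\lvert\cB\rvert^{1/2}$ times $\|\log\|\gh_n(\cdot,E)\|\|_{L^2(\TT)} + \|\log\|\gh_n(\cdot,E_0)\|\|_{L^2(\TT)}$. The needed $L^2$ bound comes from combining \thmref{ldt} -- which yields $\mes\{x \in \TT : |\log\|\gh_n(x,E)\| - nL_n^a| > n\delta\} < \exp(-c_0 n\delta + C_0(\log n)^p)$ -- with \lemref{Integral-lemma} applied to the normalized function $(\log\|\gh_n(x,E)\| - nL_n^a)/n$; together with the uniform bound $\lvert L_n^a\rvert \le C$ from \eqref{L^a_l-bounds}, this gives $\|\log\|\gh_n(\cdot,E)\|\|_{L^2(\TT)} \le Cn$, with $C$ uniform for $E$ ranging over bounded sets.

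The main obstacle is that the previous lemma as stated provides only $|\cB| \le n^{-1}$, and Cauchy--Schwarz then produces a contribution of order $n \cdot n^{-1/2} = n^{1/2}$, which is useless for our purposes. The remedy is to revisit the proof of the previous lemma: the Avalanche Principle there is applied at scale $l = [C_2\log n]$, so the pointwise error is $O(n^{-cC_2})$ and the actual exceptional set has measure $3n\exp(-c_1 l) = n^{1 - c_1 C_2}$. By taking $C_2$ larger, both the pointwise bound and the exceptional set measure improve simultaneously, at the cost of enlarging $C_1$ (which must satisfy $C_1 \ge CC_2$). Selecting $C_2$ large enough that the pointwise bound is at most $n^{-C_0}$ and $|\cB| \le n^{-2(C_0+1)}$, the bad-set contribution becomes at most $Cn \cdot n^{-(C_0+1)} \le n^{-C_0}$, and combined with the good-set bound this yields the corollary with $C_1$ chosen correspondingly large.
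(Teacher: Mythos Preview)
Your approach is essentially the paper's: integrate the pointwise estimate from the preceding lemma and control the exceptional-set contribution via an $L^2$ bound and Cauchy--Schwarz. Your diagnosis that the stated bound $\mes\cB\le n^{-1}$ is insufficient, and that one must return to the proof of the preceding lemma to extract $\mes\cB\le n^{-cC_2}$ with $C_2$ chosen large, is exactly right; the paper's one-line proof tacitly relies on this.

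One correction: you invoke \thmref{ldt} at the energy $E$ to get the $L^2$ bound, but \thmref{ldt} requires $L(E)\ge\gamma>0$, which is not known for $E\neq E_0$ (indeed, the preceding lemma explicitly avoids the deviations estimate at $E$ for this reason). The paper instead uses \eqref{M^a-bound}, which gives $\left|\log\left\Vert\gh_n(x,E)\right\Vert\right|\le\lambda n$ off a set of measure $\exp(-c_0\lambda n)$ that is \emph{independent of $E$} and requires no positivity of $L(E)$; combining this for $E$ and $E_0$ and applying \lemref{Integral-lemma} to the difference yields the required $L^2$ bound directly. With this substitution your argument goes through.
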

\begin{proof}
Integrate the estimate of the previous lemma. To deal with the exceptional
set we used \lemref{Integral-lemma} and the fact that as a consequence
of \eqref{M^a-bound} we have
\[
\left|\log\left\Vert \gh_{n}\left(x,E\right)\right\Vert -\log\left\Vert \gh_{n}\left(x,E_{0}\right)\right\Vert \right|\le\lambda n
\]
up to a set of size $\exp\left(-c\lambda n\right)$ for any $\lambda\ge\lambda_{0}$.
\end{proof}

\section{Estimates for the Entries of the Fundamental Matrix\label{sec:Estimates-for-Entries}}

We will need the following particular case of a lemma from \cite{MR2438997}.
\begin{lem}
\label{lem:sh_better_upper_bound}(\cite[Lemma 2.4]{MR2438997}) Let
$u$ be a subharmonic function defined on $\cA_{\rho}$ such that
$\sup_{\cA_{\rho}}u\le M$. There exist constants $C_{1}=C_{1}\left(\rho\right)$
and $C_{2}$ such that, if for some $0<\delta<1$ and some $L$ we
have 
\[
\mes\left\{ x\in\TT:\, u\left(x\right)<-L\right\} >\delta,
\]
then
\[
\sup_{\TT}u\le C_{1}M-\frac{L}{C_{1}\log\left(C_{2}/\delta\right)}.
\]

\end{lem}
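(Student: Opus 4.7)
The plan is to prove this as an application of the two-constants theorem (Nevanlinna's principle) for subharmonic functions on a slit domain, combined with a standard harmonic measure estimate for slit annuli. Since $u$ is upper semicontinuous, the set $E:=\{x\in\TT:u(x)<-L\}$ is relatively open in $\TT$, hence a countable union of open arcs of total measure exceeding $\delta$. By inner regularity I replace $E$ by a compact subset (say a finite union of closed arcs) of measure at least $\delta/2$; this worsens the constants only by an absolute factor.

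Consider the slit domain $\Omega:=\cA_\rho\setminus E$. The function $u$ is subharmonic on $\Omega$, satisfies $u\le M$ on $\partial\cA_\rho$, and, by upper semicontinuity together with $u<-L$ on $E$, satisfies $u\le -L$ on both sides of the slit. The two-constants theorem then gives, for every $z_0\in\Omega$,
\[
u(z_0)\le -L\,\omega_{z_0}(E,\Omega)+M\,\bigl(1-\omega_{z_0}(E,\Omega)\bigr),
\]
where $\omega_{z_0}(E,\Omega)$ is the harmonic measure of $E$ in $\Omega$ at $z_0$. For $z_0\in E$ we already have $u(z_0)<-L$, which beats the claimed bound trivially, so it suffices to treat $z_0\in\TT\setminus E$.

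The key step is the harmonic measure estimate
\[
\omega_{z_0}(E,\Omega)\ge \frac{c(\rho)}{\log(C/\delta)},\qquad z_0\in\TT\setminus E.
\]
I would prove this by conformally mapping $\cA_\rho$ via the logarithm onto the cylinder $\mathbb{R}/(2\pi\mathbb{Z})\times(-\log(1+\rho),\log(1+\rho))$, which sends $\TT$ to the central equator. On the cylinder, I apply Beurling's projection theorem to reduce the estimate to the case where $E$ is a single arc of length $\delta$; for such an arc the harmonic measure at a nearby point of the equator is then computed explicitly (e.g.\ by straightening the strip to a half-plane via $\sin$ or $\tan$), yielding the $1/\log(C/\delta)$ lower bound. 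Feeding this back into the two-constants inequality gives
\[
u(z_0)\le M-(M+L)\,\frac{c}{\log(C/\delta)},
\]
and after choosing $C_1\ge 1/c$ large and $C_2$ to absorb $C$, the inequality rearranges to the stated bound (the cases $M<0$ or $M+L\le 0$ are handled by the trivial observation that $u\le M$ everywhere, possibly after shifting $u$ by $-M$). The main obstacle is the harmonic measure estimate: the logarithmic rate $1/\log(1/\delta)$ is sharp (as one sees by taking $E$ to be a single arc) and requires the Beurling/strip computation; the rest is a routine application of the two-constants theorem.
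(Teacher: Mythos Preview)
The paper does not prove this lemma at all: it is simply quoted from \cite[Lemma 2.4]{MR2438997} and used as a black box. So there is nothing to compare against in the paper itself; at most one can compare with the argument in the cited reference, which proceeds via the Riesz representation of $u$ together with Cartan's estimate on logarithmic potentials, rather than via harmonic measure.

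Your two-constants/harmonic-measure strategy is a legitimate alternative route, and the final inequality does fall out once the harmonic-measure lower bound
\[
\omega_{z_0}(E,\cA_\rho\setminus E)\ \ge\ \frac{c(\rho)}{\log(C/\delta)},\qquad z_0\in\TT\setminus E,
\]
is in hand. The gap is precisely in your justification of this bound. Beurling's projection theorem compares $\omega_{z_0}(E)$ with the harmonic measure of the \emph{radial} (or circular) projection of $E$; in the cylinder picture, projecting $E\subset\TT\times\{0\}$ onto the line through $z_0$ perpendicular to the equator collapses $E$ to a point and yields nothing. There is no version of Beurling projection that rearranges a union of arcs on a circle into a single arc of the same length while decreasing harmonic measure from an arbitrary equatorial basepoint. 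What you actually need is either a polarization/symmetrization argument on the cylinder (the extremal configuration is a single arc at maximal distance from $z_0$), or---more directly---the capacity route: $|E|\ge\delta$ forces $\operatorname{cap}(E)\ge c\delta$, and then the harmonic measure of a slit of capacity $\kappa$ on the equator of a fixed annulus, seen from any equatorial point, is bounded below by $c(\rho)/\log(C/\kappa)$.

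A second, smaller gap: your explicit computation is phrased for ``a nearby point of the equator''. But $z_0$ may sit diametrically opposite the single arc, and then you first need the standard strip estimate that Brownian motion travels a bounded distance along the equator (at most $\pi$) before exiting $\cA_\rho$ with probability at least $c(\rho)>0$; only after reaching a $\rho$-neighborhood of $E$ does the $1/\log(C/\delta)$ factor appear. Both points are fixable, but as written the harmonic-measure step is not yet a proof.
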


Let $I_{a,E}=\int_{\TT}\log\left|a\left(x\right)-E\right|dx$. Note
that $\left|I_{a,E}\right|<\infty$ if and only if $a\not\equiv E$.
If $a\equiv E$ then it is straightforward to see that $L=0$. Hence
if $L\left(E\right)>0$ then $\left|I_{a,E}\right|<\infty$. Furthermore,
if $L\left(E\right)>0$ on some set, it can be seen that $I_{a,E}$
is continuous in $E$ on that set.
\begin{lem}
\label{lem:considerable-difficulty-lemma}There exists $l_{0}=l_{0}\left(\left\Vert a\right\Vert _{\infty},I_{a,E},\left\Vert b\right\Vert _{*},\left|E\right|,\omega,\gamma\right)$
such that
\[
\mes\left\{ x\in\TT:\,\left|f_{l}\left(x\right)\right|\le\exp\left(-l^{3}\right)\right\} \le\exp\left(-l\right)
\]
for all $l\ge l_{0}$.\end{lem}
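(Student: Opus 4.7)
Plan. The strategy is to reduce the problem to a lower bound on $|f_l^a|$ via \eqref{f...f^a}, and then exploit the subharmonicity of $u_l:=\log|f_l^a|$ through a Riesz/BMO-type estimate in the spirit of \thmref{sh_ldt}.

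First, by applying \thmref{sh_ldt} to $\log|b|$ with $\delta$ a fixed large constant, $\prod_{j=1}^l|b(x+j\omega)|\le e^{Cl}$ off a set of measure less than $\tfrac{1}{2}e^{-l}$. Hence $|f_l(x)|\le e^{-l^3}$ forces $|f_l^a(x)|\le e^{-l^3+Cl}\le e^{-l^3/2}$ on the complement for $l$ sufficiently large, so it suffices to estimate $\mes\{x\in\TT:\,|f_l^a(x)|\le e^{-l^3/2}\}$. The function $u_l$ is subharmonic on $\cA_{\rho_0''}$ with the pointwise upper bound $u_l(z)\le\log\|\gh_l(z)\|\le C_1 l$ there (from the product formula defining $\gh_l$ and the boundedness of $a$, $b$, $\tilde b$). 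Following the proof of \lemref{M^a-Riesz_bounds} (via \cite[Lemma 2.2]{MR2438997}), if we establish the lower bound $\sup_{\cA_{\rho_0'}} u_l\ge -C_2 l$, then the Riesz representation $u_l(z)=\int\log|z-\zeta|d\mu_l(\zeta)+h_l(z)$ on $\cA_{\rho_0'}$ satisfies $\mu_l(\cA_{\rho_0})+\|h_l\|_{L^\infty(\cA_{\rho_0})}\le C_3 l$.

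The main obstacle is establishing this pointwise lower bound for $u_l$, and this is where the non-constancy of $b$ intervenes: in the Schr\"odinger case $f_l^a$ is a trigonometric polynomial of degree $l$ and the bound is direct, as in \cite[Lemma 2.7]{MR2438997}. For our Jacobi setting, we proceed as follows. By \thmref{ldt} applied to $\gh_l$ with a fixed small $\delta$, on a set of measure $\ge 1-e^{-cl}$ we have $\log\|\gh_l(x)\|\ge l(L_l^a-\delta)\ge -C_0 l$ (using that $L_l^a$ is bounded in absolute value thanks to \eqref{L^a_l-bounds}). Since $\|\gh_l\|\le 2\max_{i,j}|(\gh_l)_{ij}|$ and the entries in \eqref{M^a-entries} are, up to bounded factors from $b$ and $\tilde b$, the values $f_l^a(x)$, $f_{l-1}^a(x)$, $f_{l-1}^a(x+\omega)$, $f_{l-2}^a(x+\omega)$, at every $x$ in the good set at least one of these four has modulus $\ge e^{-C_0' l}$. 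Either some shift of $f_l^a$ itself achieves $\ge e^{-C_2 l}$ at such an $x$ (and we are done), or we invoke the three-term recurrence
\[
f_l^a(z)=(a(z+(l-1)\omega)-E)f_{l-1}^a(z)-\tilde b(z+(l-1)\omega)b(z+(l-1)\omega)f_{l-2}^a(z)
\]
and its shifted variants to transfer the lower bound from $f_{l-1}^a$ or $f_{l-2}^a$ to $f_l^a$ at a nearby point. The transfer must avoid points where $b$, $\tilde b$, or $a-E$ is small; these are controlled via further applications of \thmref{sh_ldt} to $\log|b|$ and $\log|a-E|$ (the latter requires the hypothesis $|I_{a,E}|<\infty$), at the cost of an additional exceptional set of measure $\le e^{-cl}$.

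Finally, with the bound $M:=\mu_l(\cA_{\rho_0})+\|h_l\|_{L^\infty(\cA_{\rho_0})}\le C_3 l$ in hand, an application of \thmref{sh_ldt} to $u_l$ (with $n=1$; its constant $c_0(\omega,M,\rho)$ scales like $1/M\sim 1/l$) yields
\[
\mes\{x\in\TT:\,|u_l(x)-\langle u_l\rangle|>\lambda\}<\exp\bigl(-c\lambda/l+C_4\bigr).
\]
Choosing $\lambda=l^2$ produces a measure estimate $<e^{-l}$ for $l\ge l_0$, while the pointwise inequality $u_l(x)\ge\langle u_l\rangle-l^2\ge -C_2 l-l^2\ge -l^3/2$ provides the required lower bound $|f_l^a(x)|\ge e^{-l^3/2}$. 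Combined with the first-step reduction, this completes the proof.
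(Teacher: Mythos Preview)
Your overall architecture---reduce to $f_l^a$, bound the Riesz data of $u_l=\log|f_l^a|$, then feed this into \thmref{sh_ldt} with $n=1$---is a reasonable alternative strategy, but the argument has a genuine gap at the step you label ``transfer the lower bound from $f_{l-1}^a$ or $f_{l-2}^a$ to $f_l^a$ at a nearby point.'' The three-term recurrence
\[
f_l^a(z)=(a(z+(l-1)\omega)-E)\,f_{l-1}^a(z)-\tilde b(z+(l-1)\omega)b(z+(l-1)\omega)\,f_{l-2}^a(z)
\]
does not allow you to conclude that $|f_l^a|$ is large from the largeness of $|f_{l-1}^a|$ (or $|f_{l-2}^a|$) alone: the two terms on the right can cancel. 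In fact, the scenario in which $|f_l^a|$ is uniformly tiny while $|f_{l-1}^a|$ and $|f_{l-2}^a|$ are both of size $e^{-C'l}$ is perfectly consistent with the recurrence; the determinant identity $f_{l-1}^a(x)f_{l-1}^a(x+\omega)-f_l^a(x)f_{l-2}^a(x+\omega)=\prod_{j=1}^{l-1}|b(x+j\omega)|^2$ even forces $f_{l-1}^a$ to be large once $f_l^a$ is small. So your dichotomy ``either $f_l^a$ is large somewhere, or transfer via the recurrence'' does not close, and you are left without the inequality $\sup_\TT u_l\ge -C_2 l$ on which the Riesz bound (and hence the whole argument) rests.

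This is precisely the difficulty the paper isolates. Its proof goes by contradiction: assuming the measure estimate fails, one first gets (via \lemref{sh_better_upper_bound}) that $|f_l|\le e^{-Cl^2}$ \emph{everywhere} off a small set, and then---through a careful chain of determinant identities for $M_l$, $M_{l+1}$ and the recurrence read in both directions---shows that all four entries of $M_l$ must be bounded, contradicting $\|M_l(x)\|\ge e^{l\gamma}$ from the large deviations estimate. The point is that ruling out the cancellation scenario requires simultaneously controlling $f_{l-1}$, $f_{l-2}$ \emph{and} the cross-terms, and this is where $I_{a,E}$, the determinant relations, and several rounds of exceptional-set bookkeeping enter; none of this is captured by a single appeal to the recurrence. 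Note also that the Riesz bound you are aiming for is exactly the content of \corref{f^a-Riesz-bounds}, which in the paper's logical order is proved \emph{after} (and using) the present lemma via \lemref{avg_entries->lower_bound}; your proposal would have to supply an independent route to it, and the sketch does not.
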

\begin{proof}
We argue by contradiction. Assume
\[
\mes\left\{ x\in\TT:\,\left|f_{l}\left(x\right)\right|\le\exp\left(-l^{3}\right)\right\} >\exp\left(-l\right)
\]
for arbitrarily large $l$. We will be tacitly using the fact that
$l$ can be arbitrarily large. We have that
\begin{align*}
\left|\g_{l}\left(x\right)\right| & =\left|f_{l}\left(x\right)\right|\prod_{j=1}^{l}\left|b\left(x+j\omega\right)\right|\le\exp\left(-l^{3}\right)C^{l-1}\le\exp\left(-l^{3}/2\right)
\end{align*}
on a set of measure greater than $\exp\left(-l\right)$. Hence
\[
\mes\left\{ x\in\TT:\,\left|\g_{l}\left(x\right)\right|\le\exp\left(-l^{3}/2\right)\right\} >\exp\left(-l\right).
\]
At the same time we have that
\[
\log\left|\g_{l}\left(x\right)\right|\le\log\left\Vert \gh_{l}\left(x\right)\right\Vert \le Cl
\]
for all $x$, so by applying \lemref{sh_better_upper_bound} we get
that
\[
\left|\g_{l}\left(x\right)\right|\le\exp\left(C_{1}l-\frac{l^{3}}{C_{2}\log\left(C_{3}\exp\left(l\right)\right)}\right)\le\exp\left(-Cl^{2}\right)
\]
for all $x$ and consequently
\begin{equation}
\left|f_{l}\left(x\right)\right|\le\exp\left(\left(l-1\right)(1-D)-C_{1}l^{2}\right)\le\exp\left(-Cl^{2}\right)\label{eq:f_l-ub}
\end{equation}
for all $x$ except for a set of measure less than $\exp\left(-c_{1}\left(l-1\right)+r_{l-1}\right)<\exp\left(-cl\right)$.

From \corref{ldt_M^u} we have that
\begin{multline}
\exp\left(l\gamma\right)\le\left\Vert M_{l}\left(x\right)\right\Vert ^{2}\le2\Bigg(\left|f_{l}\left(x\right)\right|^{2}+\left|f_{l-1}\left(x\right)\right|^{2}\\
+\left|\frac{b\left(x\right)}{b\left(x+\omega\right)}f_{l-1}\left(x+\omega\right)\right|^{2}+\left|\frac{b\left(x\right)}{b\left(x+\omega\right)}f_{l-2}\left(x+\omega\right)\right|^{2}\Bigg)\label{eq:IIMlII_ub}
\end{multline}
for all $x$ except for a set of measure less than $\exp\left(-c_{1}\gamma l/2+r_{l}\right)<\exp\left(-cl\right)$.
Suppose that
\begin{equation}
\left|\frac{b\left(x\right)}{b\left(x+\omega\right)}f_{l-1}\left(x+\omega\right)\right|^{2}\ge\frac{1}{4}\exp\left(l\gamma\right)\label{eq:f_l-1_lb}
\end{equation}
for all $x$ except for a set of measure less than $1/3$ (any constant
in $(0,1/2)$ would work). Since
\begin{multline*}
\frac{\overline{b\left(x\right)}}{b\left(x+l\omega\right)}=\det M_{l}\left(x\right)=-\frac{\overline{b\left(x\right)}}{b\left(x+\omega\right)}f_{l}\left(x\right)f_{l-2}\left(x+\omega\right)\\
+\frac{\overline{b\left(x\right)}}{b\left(x+\omega\right)}f_{l-1}\left(x\right)f_{l-1}\left(x+\omega\right)
\end{multline*}
it follows that 
\begin{multline*}
\left|f_{l-1}\left(x\right)\right|=\left|\frac{b\left(x\right)}{b\left(x+\omega\right)}f_{l-1}\left(x+\omega\right)\right|^{-1}\left|\frac{b\left(x\right)}{b\left(x+l\omega\right)}+\frac{b\left(x\right)}{b\left(x+\omega\right)}f_{l}\left(x\right)f_{l-2}\left(x+\omega\right)\right|\\
\le2\exp\left(-l\gamma/2\right)\left(C_{1}\exp\left(\delta-D\right)+\exp\left(-C_{2}l^{2}+C_{3}l\right)\right)
\end{multline*}
for all $x$ except for a set of measure less than $1/3+\exp\left(-c_{1}\delta+r_{1}\right)+\exp\left(-c_{1}l\right)+\exp\left(-c_{2}l+r_{l}^{'}\right)$.
Note that in the above estimate we used
\[
\log\left|\frac{b\left(x\right)}{b\left(x+\omega\right)}f_{l-2}\left(x+\omega\right)\right|\le\log\left\Vert M_{l}\left(x\right)\right\Vert 
\]
and the large deviations estimate for $M_{l}$. Choosing $\delta=l\gamma/2$
we get
\[
\left|f_{l-1}\left(x\right)\right|\le C
\]
for all $x$ except for a set of measure less than $1/3+\exp\left(-cl\right)$.
This contradicts \eqref{f_l-1_lb} because 
\[
\left|f_{l-1}\left(x+\omega\right)\right|\le C
\]
and \eqref{f_l-1_lb} would hold at the same time on a set of measure
greater than $1/3-\exp\left(-cl\right)$. Hence we must have
\begin{equation}
\left|\frac{b\left(x\right)}{b\left(x+\omega\right)}f_{l-1}\left(x+\omega\right)\right|^{2}<\frac{1}{4}\exp\left(l\gamma\right)\label{eq:f_l-1-ub}
\end{equation}
on a set of measure greater than $1/3$. At the same time
\begin{multline*}
\exp\left(\left(l+1\right)\gamma\right)\le\left\Vert M_{l+1}\left(x\right)\right\Vert ^{2}\\
\le2\left(\left|f_{l+1}\left(x\right)\right|^{2}+\left|f_{l}\left(x\right)\right|^{2}+\left|\frac{b\left(x\right)}{b\left(x+\omega\right)}f_{l}\left(x+\omega\right)\right|^{2}+\left|\frac{b\left(x\right)}{b\left(x+\omega\right)}f_{l-1}\left(x+\omega\right)\right|^{2}\right)
\end{multline*}
for all $x$ except for a set of measure less than 
\[
\exp\left(-c_{1}\gamma\left(l+1\right)/2+r_{l+1}\right)<\exp\left(-cl\right).
\]
This, \eqref{f_l-ub}, and \eqref{f_l-1-ub} imply that we must have
\begin{multline*}
\left|f_{l+1}\left(x\right)\right|^{2}\ge\frac{1}{2}\exp\left(\left(l+1\right)\gamma\right)-\exp\left(-C_{1}l^{2}\right)\\
-C_{2}\exp\left(l-D-C_{1}l^{2}\right)-\frac{1}{4}\exp\left(l\gamma\right)>\frac{1}{4}\exp\left(l\gamma\right)
\end{multline*}
on a set of measure greater than 
\[
\frac{1}{3}-\exp\left(-c_{1}l\right)-2\exp\left(-c_{2}l\right)-\exp\left(-c_{3}l+r_{1}\right)>\frac{1}{3}-\exp\left(-cl\right).
\]
From 
\begin{multline*}
\frac{\overline{b\left(x\right)}}{b\left(x+\left(l+1\right)\omega\right)}=\det M_{l+1}\left(x\right)=-\frac{\overline{b\left(x\right)}}{b\left(x+\omega\right)}f_{l+1}\left(x\right)f_{l-1}\left(x+\omega\right)\\
+\frac{\overline{b\left(x\right)}}{b\left(x+\omega\right)}f_{l}\left(x\right)f_{l}\left(x+\omega\right)
\end{multline*}
it can be seen that
\begin{multline*}
\left|\frac{b\left(x\right)}{b\left(x+\omega\right)}f_{l-1}\left(x+\omega\right)\right|=\left|f_{l+1}\left(x\right)\right|^{-1}\\
\cdot\left|\frac{b\left(x\right)}{b\left(x+\left(l+1\right)\omega\right)}-\frac{b\left(x\right)}{b\left(x+\omega\right)}f_{l}\left(x\right)f_{l}\left(x+\omega\right)\right|\\
\le2\exp\left(-l\gamma/2\right)\left(C_{1}\exp\left(\delta-D\right)+C_{1}\exp\left(\delta-D-C_{2}l^{2}\right)\right)
\end{multline*}
on a set of measure greater than $1/3-\exp\left(-c_{1}l\right)-2\exp\left(-c_{2}\delta+r_{1}\right)-2\exp\left(-c_{3}l\right)$.
Choosing $\delta=l\gamma/5$ we get
\begin{equation}
\left|\frac{b\left(x\right)}{b\left(x+\omega\right)}f_{l-1}\left(x+\omega\right)\right|\le\exp\left(-l\gamma/4\right)\label{eq:b/bfl-1_ub}
\end{equation}
on a set of measure greater than $1/3-\exp\left(-cl\right)$. We will
contradict \eqref{IIMlII_ub} by showing that
\begin{equation}
\left|f_{l}\left(x\right)\right|^{2}+\left|f_{l-1}\left(x\right)\right|^{2}+\left|\frac{b\left(x\right)}{b\left(x+\omega\right)}f_{l-1}\left(x+\omega\right)\right|^{2}+\left|\frac{b\left(x\right)}{b\left(x+\omega\right)}f_{l-2}\left(x+\omega\right)\right|^{2}\le C\label{eq:fl+fl-1+fl-1+fl-2<=00003DC}
\end{equation}
on a set of measure greater than $1/3-\exp\left(-cl\right)$. Let
$G_{l}$ be the set on which \eqref{b/bfl-1_ub} holds.

By writing
\[
M_{l}\left(x+\omega\right)=\frac{1}{b\left(x+\left(l+1\right)\omega\right)}\left[\begin{array}{cc}
a\left(x+l\omega\right)-E & -\overline{b\left(x+l\omega\right)}\\
b\left(x+\left(l+1\right)\omega\right) & 0
\end{array}\right]M_{l-1}\left(x+\omega\right)
\]
we get
\[
f_{l}\left(x+\omega\right)=\frac{a\left(x+l\omega\right)-E}{b\left(x+\left(l+1\right)\omega\right)}f_{l-1}\left(x+\omega\right)-\frac{\overline{b\left(x+l\omega\right)}}{b\left(x+\left(l+1\right)\omega\right)}f_{l-2}\left(x+\omega\right).
\]
 From this we deduce that
\begin{multline*}
\left|\frac{b\left(x\right)}{b\left(x+\omega\right)}f_{l-2}\left(x+\omega\right)\right|=\left|\frac{b\left(x+\left(l+1\right)\omega\right)}{b\left(x+l\omega\right)}\right|\\
\cdot\left|\frac{a\left(x+l\omega\right)-E}{b\left(x+\left(l+1\right)\omega\right)}\frac{b\left(x\right)}{b\left(x+\omega\right)}f_{l-1}\left(x+\omega\right)-\frac{b\left(x\right)}{b\left(x+\omega\right)}f_{l}\left(x+\omega\right)\right|\\
\le C_{1}\exp\left(\delta-D\right)\left(C_{2}\exp\left(\delta-D-\gamma l/4\right)+C_{1}\exp\left(\delta-D-C_{3}l^{2}\right)\right)
\end{multline*}
on a subset of $G_{l}$ of measure greater than 
\[
\frac{1}{3}-3\exp\left(-c_{1}\delta+r_{1}\right)-\exp\left(-c_{2}l\right)-\exp\left(-c_{3}l\right).
\]
By choosing $\delta=\gamma l/17$ we get
\[
\left|\frac{b\left(x\right)}{b\left(x+\omega\right)}f_{l-2}\left(x+\omega\right)\right|\le\exp\left(-\gamma l/8\right)
\]
on a subset of $G_{l}$ of measure greater than $1/3-\exp\left(-cl\right)$.

By writing 
\[
M_{l}\left(x-\omega\right)=M_{l-1}\left(x\right)\frac{1}{b\left(x\right)}\left[\begin{array}{cc}
a\left(x-\omega\right)-E & -\overline{b\left(x-\omega\right)}\\
b\left(x\right) & 0
\end{array}\right]
\]
we get
\[
f_{l}\left(x-\omega\right)=\frac{a\left(x-\omega\right)-E}{b\left(x\right)}f_{l-1}\left(x\right)-\frac{\overline{b\left(x\right)}}{b\left(x+\omega\right)}f_{l-2}\left(x+\omega\right).
\]
From this we deduce that
\begin{multline*}
\left|f_{l-1}\left(x\right)\right|=\left|\frac{a\left(x-\omega\right)-E}{b\left(x\right)}\right|^{-1}\left|f_{l}\left(x-\omega\right)+\frac{\overline{b\left(x\right)}}{b\left(x+\omega\right)}f_{l-2}\left(x+\omega\right)\right|\\
\le C_{1}\exp\left(\delta-I_{a,E}\right)\left(\exp\left(-C_{1}l^{2}\right)+\exp\left(-\gamma l/8\right)\right)
\end{multline*}
on a subset of $G_{l}$ of measure greater than $1/3-\exp\left(-c_{1}\delta+r_{1}\right)-\exp\left(-c_{3}l\right)-\exp\left(-c_{4}l\right)$.
By choosing $\delta=\gamma l/17$ we get
\[
\left|f_{l-1}\left(x\right)\right|\le\exp\left(-\gamma l/16\right)
\]
on a subset of $G_{l}$ of measure greater than $1/3-\exp\left(-cl\right)$.$ $
Now it is easy to see that we have \eqref{fl+fl-1+fl-1+fl-2<=00003DC}. \end{proof}
\begin{lem}
\label{lem:small_fl_is_small}Let $\sigma>0$. There exist constants
$l_{0}=l_{0}(\left\Vert a\right\Vert _{\infty},I_{a,E},\left\Vert b\right\Vert _{*},\left|E\right|,\omega,\gamma,$
$\sigma)$ and $N_{0}=N_{0}(\left\Vert a\right\Vert _{\infty},I_{a,E},$
$\left\Vert b\right\Vert _{*},\left|E\right|,\omega,\gamma,\sigma)$
such that
\[
\mes\left\{ x\in\TT:\,\left|f_{l}\left(x\right)\right|\le\exp\left(-N^{\sigma}\right)\right\} \le\exp\left(-N^{\sigma}l^{-2}\right)
\]
for any $N\ge N_{0}$ and for any $l_{0}\le l\le N^{\sigma/3}$. The
same result, but with possibly different $l_{0}$ and $N_{0}$, holds
for $\f_{l}$.\end{lem}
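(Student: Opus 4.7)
The plan is to argue by contradiction, upgrading the weak lower bound from \lemref{considerable-difficulty-lemma} by means of the better-upper-bound lemma for subharmonic functions (\lemref{sh_better_upper_bound}), and then deriving the contradiction via the same three-term recurrence argument used in the proof of \lemref{considerable-difficulty-lemma}. Suppose that the claimed estimate fails, so that
\[
\mes\{x\in\TT:\,|f_{l}(x)|\le\exp(-N^{\sigma})\}>\exp(-N^{\sigma}/l^{2}).
\]

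First I would transfer the hypothesis from $f_{l}$ to the analytic $f_{l}^{a}$. Since $f_{l}=f_{l}^{a}/\prod_{j=1}^{l}b(\cdot+j\omega)$ and \thmref{sh_ldt} applied to $\log|b|$ gives $|S_{l}(x+\omega)|\le Cl$ off a set of measure $\le\exp(-cl)$ (which, for $N$ large, is much smaller than $\exp(-N^{\sigma}/l^{2})$ because $N^\sigma/l^2 \ge N^{\sigma/3} \ge l$), the hypothesis forces
\[
\mes\{x\in\TT:\,\log|f_{l}^{a}(x)|<-N^{\sigma}+Cl\}\ge\tfrac{1}{2}\exp(-N^{\sigma}/l^{2}).
\]
Now $u(z):=\log|f_{l}^{a}(z)|$ is subharmonic on $\cA_{\rho_{0}''}$ with the trivial bound $\sup_{\cA_{\rho_{0}''}}u\le Cl$. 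Applying \lemref{sh_better_upper_bound} with $M=Cl$, $L\approx N^{\sigma}$, and $\delta=\tfrac{1}{2}\exp(-N^{\sigma}/l^{2})$, and using that $\log(C_{2}/\delta)\approx N^{\sigma}/l^{2}$ for $N\ge N_{0}$, yields
\[
\sup_{\TT}u\le C_{1}Cl-\frac{N^{\sigma}}{C_{1}\cdot N^{\sigma}/l^{2}}=C_{1}Cl-\frac{l^{2}}{C_{1}}\le-\frac{l^{2}}{2C_{1}}
\]
for $l\ge l_{0}$. Converting back via the LDT for $\log|b|$ gives $|f_{l}(x)|\le\exp(-cl^{2})$ for \emph{every} $x\in\TT$ outside a negligible set of measure at most $\exp(-cl)$.

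At this point the situation is exactly the one reached in the middle of the proof of \lemref{considerable-difficulty-lemma}: we have a very small $|f_{l}|$ on most of $\TT$ while \corref{ldt_M^u} forces $\|\fg_{l}(x)\|^{2}\ge\exp(l\gamma)$ on most of $\TT$. Using the matrix identity \eqref{M^u-entries}, the smallness of $f_{l}^{u}$ forces one of the neighboring entries $f_{l-1}^{u}(x),\,f_{l-1}^{u}(x+\omega),\,f_{l-2}^{u}(x+\omega)$ to be of size $\ge\exp(l\gamma/3)$; then, using $\det M_{l}=\bar{b}(x)/b(x+l\omega)$ and the three-term recurrence for $f_{l+1}$, together with a large deviation for $M_{l+1}^{u}$, one obtains, following the argument verbatim from the proof of \lemref{considerable-difficulty-lemma}, that $|b(x)/b(x+\omega)\,f_{l-1}(x+\omega)|\le\exp(-\gamma l/4)$ and $|f_{l-1}(x)|\le\exp(-\gamma l/16)$ on a set of positive measure. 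This forces all four entries on the right-hand side of \eqref{IIMlII_ub} to be simultaneously $O(1)$, contradicting $\|\fg_{l}(x)\|^{2}\ge\exp(l\gamma)$.

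The version for $f_{l}^{u}$ follows identically, since on $\TT$ one has $|f_{l}^{u}(x)|=|b(x+l\omega)/b(x)|^{1/2}|f_{l}(x)|$ and the multiplicative ratio is $e^{O(1)}$ off a set of measure $\le\exp(-c)$ by the single-shift LDT applied to $\log|b|$, which is absorbed harmlessly. The main technical point (and the only delicate step) is verifying that the argument from \lemref{considerable-difficulty-lemma} applies with $\exp(-cl^{2})$ in place of the threshold $\exp(-l^{3})$; this is harmless, because that proof only uses the smallness through inequalities that require $|f_{l}|\le\exp(-\omega(l))$ for a superlinear function $\omega(l)$, and $cl^{2}$ suffices. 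All $\delta$-choices and thresholds in the nested applications of the LDT for $M_{l}^{u}$, $M_{l+1}^{u}$ must be re-tracked, but no new ideas are needed.
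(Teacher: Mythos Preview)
Your approach is essentially the paper's: contradict, pass to $f_{l}^{a}$, apply \lemref{sh_better_upper_bound} to get $\log|f_{l}^{a}|\le -cl^{2}$ everywhere, and then invoke the contradiction machinery of \lemref{considerable-difficulty-lemma}. Two technical slips, however, need fixing.

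First, in your transfer from $f_{l}$ to $f_{l}^{a}$ you invoke the LDT to get $|S_{l}(x+\omega)|\le Cl$ off a set of measure $\le\exp(-cl)$, and then claim this exceptional set is much smaller than $\exp(-N^{\sigma}/l^{2})$ ``because $N^{\sigma}/l^{2}\ge N^{\sigma/3}\ge l$''. That chain proves the \emph{opposite} inequality: it gives $\exp(-N^{\sigma}/l^{2})\le\exp(-l)$, and for $l$ near $l_{0}$ one has $N^{\sigma}/l^{2}\gg l$, so $\exp(-cl)$ is in fact \emph{much larger} than $\exp(-N^{\sigma}/l^{2})$ and swallows the hypothesis set. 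The fix is that no LDT is needed here: only the upper bound $S_{l}(x+\omega)\le l\log\|b\|_{\infty}$ is required to pass from $|f_{l}|\le\exp(-N^{\sigma})$ to $|f_{l}^{a}|\le\exp(-N^{\sigma}+Cl)$, and that bound is trivial with no exceptional set. (The LDT \emph{is} needed later, after \lemref{sh_better_upper_bound}, to pass back from $|f_{l}^{a}|\le\exp(-cl^{2})$ to $|f_{l}|\le\exp(-c'l^{2})$; there the exceptional set of size $\exp(-cl)$ is harmless because it is only compared to the constant $1/3$ in the recurrence argument.)

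Second, the same issue recurs in your treatment of $f_{l}^{u}$: the single-shift LDT gives a ratio $e^{O(1)}$ only off a set of \emph{fixed} positive measure $\exp(-c)$, which again can swallow the hypothesis set of measure $\exp(-N^{\sigma}/l^{2})$. The clean fix (which the paper uses) is to pass directly from $f_{l}^{u}$ to $f_{l}^{a}$ via the trivial bound $|f_{l}^{a}(x)|=|f_{l}^{u}(x)|\prod_{j=0}^{l-1}|b(x+j\omega)b(x+(j+1)\omega)|^{1/2}\le C^{l}|f_{l}^{u}(x)|$, with no exceptional set, and then reach the same contradiction.
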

\begin{proof}
We argue by contradiction. Assume
\[
\mes\left\{ x\in\TT:\,\left|f_{l}\left(x\right)\right|\le\exp\left(-N^{\sigma}\right)\right\} >\exp\left(-N^{\sigma}l^{-2}\right)
\]
 for some arbitrarily large $l$ and $N$. We have that
\begin{align*}
\left|\g_{l}\left(x\right)\right| & =\left|f_{l}\left(x\right)\right|\prod_{j=1}^{l}\left|b\left(x+j\omega\right)\right|\le\exp\left(-N^{\sigma}\right)C^{l-1}\le\exp\left(-N^{\sigma}/2\right)
\end{align*}
on a set of measure greater than $\exp\left(-N^{\sigma}l^{-2}\right)$.
Hence
\[
\mes\left\{ x\in\TT:\,\left|\g_{l}\left(x\right)\right|\le\exp\left(-N^{\sigma}/2\right)\right\} >\exp\left(-N^{\sigma}l^{-2}\right).
\]
By applying \lemref{sh_better_upper_bound} we get that
\[
\left|\g_{l}\left(x\right)\right|\le\exp\left(C_{1}l-\frac{N^{\sigma}}{2C_{1}\log\left(C_{2}\exp\left(N^{\sigma}l^{-2}\right)\right)}\right)\le\exp\left(-Cl^{2}\right)
\]
for all $x$. Note that the last inequality is equivalent to 
\[
\frac{C_{1}}{l}+C\le\frac{N^{\sigma}l^{-2}}{2C_{1}\log\left(C_{2}\exp\left(N^{\sigma}l^{-2}\right)\right)}=\frac{N^{\sigma}l^{-2}}{2C_{1}\log C_{2}+2C_{1}N^{\sigma}l^{-2}}
\]
which clearly holds with $C=1/\left(4C_{1}\right)$ for large $l$
and $N,$ since $N^{\sigma}l^{-2}\ge N^{\sigma/3}$. We now have that
\[
\left|f_{l}\left(x\right)\right|\le\exp\left(\left(l-1\right)(1-D)-C'l^{2}\right)\le\exp\left(-Cl^{2}\right)
\]
for all $x$ except for a set of measure less than $\exp\left(-c_{1}\left(l-1\right)+r_{l-1}\right)<\exp\left(-cl\right)$.
The contradiction follows in the same way as in the previous lemma.

To get the result for $\f_{l}$ one can argue by contradiction. Using
\[
\left|\g_{l}\left(x\right)\right|=\left|\f_{l}\left(x\right)\right|\prod_{j=0}^{n-1}\left|b\left(x+j\omega\right)b\left(x+\left(j+1\right)\omega\right)\right|^{1/2}
\]
one can get that $\left|\g_{l}\left(x\right)\right|\le\exp\left(-Cl^{2}\right)$
for all $x$ and this gives the same contradiction as before.
\end{proof}
We recall for convenience some facts about stability of contracting
and expanding directions of unimodular matrices. It follows from the
polar decomposition that if $A\in SL\left(2,\CC\right)$ then there
exist unit vectors $u_{A}^{+}\perp u_{A}^{-}$ and $v_{A}^{+}\perp v_{A}^{-}$
such that $Au_{A}^{+}=\left\Vert A\right\Vert v_{A}^{+}$ and $Au_{A}^{-}=\left\Vert A\right\Vert ^{-1}v_{A}^{-}$.
\begin{lem}
\label{lem:expanding_contracting_directions}(\cite[Lemma 2.5]{MR2438997})
For any $A$, $B\in SL\left(2,\CC\right)$ we have
\begin{align*}
\left|Bu_{AB}^{-}\wedge u_{A}^{-}\right| & \le\left\Vert A\right\Vert ^{-2}\left\Vert B\right\Vert ,\,\left|u_{BA}^{-}\wedge u_{A}^{-}\right|\le\left\Vert A\right\Vert ^{-2}\left\Vert B\right\Vert ^{2}\\
\left|v_{AB}^{+}\wedge v_{A}^{+}\right| & \le\left\Vert A\right\Vert ^{-2}\left\Vert B\right\Vert ^{2},\,\left|v_{BA}^{+}\wedge Bv_{A}^{+}\right|\le\left\Vert A\right\Vert ^{-2}\left\Vert B\right\Vert .
\end{align*}

\end{lem}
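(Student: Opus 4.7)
The plan rests on two structural observations. First, any $M\in SL(2,\CC)$ has singular values $s_1\ge s_2$ with $s_1 s_2=1$, so $\Vert M^{-1}\Vert=\Vert M\Vert$; in particular, the inequality $\Vert A\Vert\le\Vert AB\Vert\,\Vert B^{-1}\Vert=\Vert AB\Vert\,\Vert B\Vert$ yields the useful bound $\Vert AB\Vert^{-1}\le\Vert B\Vert/\Vert A\Vert$, and similarly with $BA$. Second, the polar decomposition says $M$ acts as a diagonal scaling between the orthonormal bases $\{u_M^+,u_M^-\}$ and $\{v_M^+,v_M^-\}$: the vector $\xi u_M^++\eta u_M^-$ is sent to $\xi\Vert M\Vert v_M^++\eta\Vert M\Vert^{-1}v_M^-$. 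Every estimate in the lemma will come from decomposing the vector of interest in one of these bases, pushing it through $A$ (or $BA$), and then isolating the component whose absolute value coincides with the wedge on the left-hand side.

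As a template I would prove the first inequality $\vert Bu_{AB}^-\wedge u_A^-\vert\le \Vert A\Vert^{-2}\Vert B\Vert$ in detail. Write $Bu_{AB}^-=\alpha u_A^++\beta u_A^-$; since $u_A^\pm$ are orthonormal, $\vert Bu_{AB}^-\wedge u_A^-\vert=\vert\alpha\vert$. Applying $A$ gives $ABu_{AB}^-=\alpha\Vert A\Vert v_A^++\beta\Vert A\Vert^{-1}v_A^-$, and orthonormality of $v_A^\pm$ yields $\Vert ABu_{AB}^-\Vert^2\ge\alpha^2\Vert A\Vert^2$. Because the left-hand side equals $\Vert AB\Vert^{-2}$ by definition of $u_{AB}^-$, we obtain $\vert\alpha\vert\le\Vert A\Vert^{-1}\Vert AB\Vert^{-1}$, and the first observation finishes the estimate.

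The remaining three inequalities follow the same mechanism with different choices of basis. For $\vert u_{BA}^-\wedge u_A^-\vert$, I would decompose $u_{BA}^-$ in $\{u_A^+,u_A^-\}$ and then use $\Vert BAu_{BA}^-\Vert\ge\Vert Au_{BA}^-\Vert/\Vert B\Vert$ in place of an equality; that one-sided bound is exactly what produces the extra power of $\Vert B\Vert$. For the two expanding-direction estimates, I would work with $v_{AB}^+=ABu_{AB}^+/\Vert AB\Vert$, decomposing $Bu_{AB}^+$ in $\{u_A^+,u_A^-\}$ and wedging with $v_A^+$ (respectively $v_{BA}^+=BAu_A^+\cdot(\text{normalization})$, decomposing $u_A^+$ in $\{u_{BA}^+,u_{BA}^-\}$ and wedging with $v_{BA}^+$); in each case the wedge kills the $v_A^+$ (resp.\ $v_{BA}^+$) term and leaves a quantity bounded by the appropriate combination of $\Vert A\Vert^{-1}$, $\Vert AB\Vert^{-1}$ or $\Vert BA\Vert^{-1}$, and $\Vert B\Vert$ or $1$.

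The only real obstacle is bookkeeping—choosing the correct basis for each inequality, tracking whether one decomposes a vector inside the image of $A$ or of $B$, and counting how many times the inequality $\Vert Bw\Vert\le\Vert B\Vert\,\Vert w\Vert$ (or its inverse counterpart via $\Vert B^{-1}\Vert=\Vert B\Vert$) is invoked. Once the bookkeeping is fixed, the argument is pure $2\times 2$ linear algebra whose only nontrivial input is the unimodularity identity $\Vert M^{-1}\Vert=\Vert M\Vert$ in $SL(2,\CC)$.
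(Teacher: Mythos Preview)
Your argument is correct. The paper does not actually supply a proof of this lemma: it is simply quoted from \cite[Lemma~2.5]{MR2438997} and used as a black box, so there is no in-paper proof to compare against. Your polar-decomposition approach is exactly the standard route and all four estimates go through as you outline; the only cosmetic slip is writing $\alpha^{2}$ instead of $|\alpha|^{2}$ in the complex setting, which does not affect anything.
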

We will need the following estimate (cf. \cite[(2.35)]{MR2438997})
in the proof of \lemref{three_determinants}.
\begin{lem}
\label{lem:wedge_triangle_ineq} If $A\in SL\left(2,\CC\right)$ and
$w_{1}$, $w_{2}$, and $w_{3}$ are unit vectors in the plane then
\[
\left|w_{1}\wedge Aw_{2}\right|\le\left|w_{1}\wedge Aw_{3}\right|+\sqrt{2}\left\Vert A^{-1}\right\Vert \left|w_{2}\wedge w_{3}\right|
\]
and
\[
\left|w_{1}\wedge Aw_{2}\right|\le\left|w_{3}\wedge Aw_{2}\right|+\sqrt{2}\left\Vert A\right\Vert \left|w_{1}\wedge w_{3}\right|
\]
\end{lem}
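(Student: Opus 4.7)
Both inequalities have the same bilinear structure, so I will describe the first in detail and indicate the (entirely analogous) modifications for the second. The strategy is to add and subtract $w_1\wedge Aw_3$ and write
\[
w_1\wedge Aw_2 = w_1\wedge Aw_3 + w_1\wedge A(w_2-w_3),
\]
and then control the error term by $\|A(w_2-w_3)\|\le\|A\|\|w_2-w_3\|$ together with a geometric comparison between $\|w_2-w_3\|$ and $|w_2\wedge w_3|$.

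The single geometric fact I would isolate is the following: for unit vectors $w,w'\in\CC^2$ with $\langle w,w'\rangle$ real and nonnegative, we have $\|w-w'\|\le\sqrt{2}|w\wedge w'|$. This follows from the identities $\|w-w'\|^2=2(1-\langle w,w'\rangle)$ and $|w\wedge w'|^2=1-\langle w,w'\rangle^2$, so that the ratio squared equals $2/(1+\langle w,w'\rangle)\le 2$. The factor $\sqrt{2}$ is sharp (attained when $w\perp w'$). To apply it in the lemma, observe that replacing $w_3$ by $e^{i\phi}w_3$ changes neither $|w_1\wedge Aw_3|$ nor $|w_2\wedge w_3|$, so we may normalize the phase so that $\langle w_2,w_3\rangle\in[0,1]$. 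Then, using $\|w_1\|=1$ and the elementary $|u\wedge v|\le\|u\|\|v\|$,
\[
|w_1\wedge A(w_2-w_3)|\le\|A(w_2-w_3)\|\le\|A\|\|w_2-w_3\|\le\sqrt{2}\|A\||w_2\wedge w_3|.
\]
Finally, since $A\in SL(2,\CC)$ has singular values $\sigma_1\ge\sigma_2$ with $\sigma_1\sigma_2=|\det A|=1$, one has $\|A^{-1}\|=1/\sigma_2=\sigma_1=\|A\|$, and the first inequality follows.

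For the second inequality I would repeat the same steps but put the increment on the other side, writing
\[
w_1\wedge Aw_2 = w_3\wedge Aw_2 + (w_1-w_3)\wedge Aw_2,
\]
normalizing the phase of $w_3$ so that $\langle w_1,w_3\rangle\in[0,1]$, and then bounding
\[
|(w_1-w_3)\wedge Aw_2|\le\|w_1-w_3\|\|Aw_2\|\le\|A\|\cdot\sqrt{2}|w_1\wedge w_3|.
\]
No nontrivial obstacle arises; the only technical points are the phase normalization (which is legitimate because all quantities in the statement are phase-invariant) and the elementary inequality $\|w-w'\|\le\sqrt{2}|w\wedge w'|$. The identity $\|A^{-1}\|=\|A\|$ on $SL(2,\CC)$, which is the only place the hypothesis $A\in SL(2,\CC)$ enters quantitatively, makes the two versions of the constant ($\|A\|$ vs.\ $\|A^{-1}\|$) interchangeable.
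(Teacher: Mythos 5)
Your proof is correct and pursues the same underlying idea as the paper's: decompose the wedge product through the intermediate vector $w_3$ and control the error term by combining a norm estimate with the geometric fact that $\|w-w'\|\le\sqrt{2}\,|w\wedge w'|$ for suitably normalized unit vectors $w,w'$. Two differences are worth noting. To produce the constant $\|A^{-1}\|$, the paper exploits area-preservation ($\det A=1$), rewriting $|w_1\wedge Aw_2|=|A^{-1}w_1\wedge w_2|$ so that the error term becomes $|A^{-1}w_1\wedge(w_2\pm w_3)|\le\|A^{-1}w_1\|\min_{\pm}\|w_2\pm w_3\|$, and then deduces the second inequality from the first by applying it to $A^{-1}$; you instead estimate $|w_1\wedge A(w_2-w_3)|\le\|A\|\,\|w_2-w_3\|$ directly, invoke $\|A\|=\|A^{-1}\|$ on $SL(2,\CC)$, and prove the two inequalities independently. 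More substantively, where you normalize the phase of $w_3$, the paper normalizes only the sign, using $\min_{\pm}\|w_2\pm w_3\|\le\sqrt{2}|w_2\wedge w_3|$. For real unit vectors in $\mathbb{R}^2$ the two normalizations coincide, but since $A\in SL(2,\CC)$ and the vectors to which the lemma is applied in \lemref{three_determinants} are singular directions of complex matrices (hence generally complex), the relevant setting is $\CC^2$, where the sign normalization is insufficient: for instance $w_2=(1,0)$, $w_3=(-i,0)$ gives $\min_{\pm}\|w_2\pm w_3\|=\sqrt{2}$ while $|w_2\wedge w_3|=0$. Your phase normalization (replacing $w_3$ by $e^{i\phi}w_3$ so that $\langle w_2,w_3\rangle\ge0$, which leaves every quantity in the statement invariant) is the correct fix, so your write-up is in fact slightly more robust than the argument given in the paper.
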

\begin{proof}
Since $A$ preserves area we have
\begin{multline*}
\left|w_{1}\wedge Aw_{2}\right|=\left|A^{-1}w_{1}\wedge w_{2}\right|\le\left|A^{-1}w_{1}\wedge w_{3}\right|+\min\left|A^{-1}w_{1}\wedge\left(w_{2}\pm w_{3}\right)\right|\\
\le\left|w_{1}\wedge Aw_{3}\right|+\left\Vert A^{-1}w_{1}\right\Vert \min\left\Vert w_{2}\pm w_{3}\right\Vert \le\left|w_{1}\wedge Aw_{3}\right|+\left\Vert A^{-1}\right\Vert \sqrt{2}\left|w_{2}\wedge w_{3}\right|.
\end{multline*}
The second inequality follows from the first one.
\end{proof}
Let $\cG_{N}$ be the set of points $x\in\TT$ such that for any $1\le j\le N$
and $\left|l\right|\le2N$ we have $\left|\log\left\Vert \fg_{j}\left(x+l\omega\right)\right\Vert -jL\right|\le N^{\sigma}$,
$\log\left\Vert \fg_{j}\left(x+l\omega\right)^{-1}\right\Vert \le N^{\sigma}$,
and $\left|\log\left|b\left(x+j\omega\right)\right|-D\right|\le N^{\sigma}$.
From \corref{ldt_M^u}, \corref{M^u^-1bound} and \thmref{sh_ldt}
we have that 
\begin{multline*}
\mes\left(\TT\setminus\cG_{N}\right)\le\left(4N+1\right)N\exp\left(-c_{1}N^{\sigma}+r_{N}\right)+\left(4N+1\right)N\exp\left(-c_{2}N^{\sigma}\right)\\
+N\exp\left(-c_{3}N^{\sigma}+r'_{1}\right)\le\exp\left(-cN^{\sigma}\right)
\end{multline*}
 for $N$ large enough. The choice of $\mathcal{G}_{N}$ is such that
all the estimates in the next lemma hold on this set.
\begin{lem}
\label{lem:three_determinants} Let $0<\sigma<1$. There exist constants
$l_{0}=l_{0}(\left\Vert a\right\Vert _{\infty},I_{a,E},\left\Vert b\right\Vert _{*},\left|E\right|,\omega,$
$\gamma,\sigma)$ and $N_{0}=N_{0}(\left\Vert a\right\Vert _{\infty},I_{a,E},\left\Vert b\right\Vert _{*},\left|E\right|,\omega,\gamma,\sigma)$
such that
\begin{multline}
\mes\left\{ x\in\TT:\,\left|\f_{N}\left(x\right)\right|+\left|\f_{N}\left(x+j_{1}\omega\right)\right|+\left|\f_{N}\left(x+j_{2}\omega\right)\right|\le\exp\left(NL_{N}-100N^{\sigma}\right)\right\} \\
\le\exp\left(-N^{\sigma/2}\right)\label{eq:three_determinants_estimate}
\end{multline}
for any $l_{0}\le j_{1}\le j_{1}+l_{0}\le j_{2}\le N^{\sigma/8}$
and $N\ge N_{0}$.\end{lem}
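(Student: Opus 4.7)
The plan is to argue by contradiction. Suppose the bad set
\[
\cB := \left\{ x \in \TT : |\f_N(x)| + |\f_N(x+j_1\omega)| + |\f_N(x+j_2\omega)| \le e^{NL_N - 100 N^\sigma} \right\}
\]
satisfies $\mes(\cB) > e^{-N^{\sigma/2}}$. Since $\mes(\TT \setminus \cG_N) \le e^{-c N^\sigma} \ll e^{-N^{\sigma/2}}$ for large $N$, the intersection $\cB \cap \cG_N$ still has measure at least $\tfrac{1}{2} e^{-N^{\sigma/2}}$. For such $x$ and each $k \in \{0, j_1, j_2\}$, I write $\f_N(x+k\omega) = \langle e_1, \fg_N(x+k\omega) e_1 \rangle$ and expand via the polar decomposition of $\fg_N(x+k\omega)$ with singular vectors $u_k^\pm, v_k^\pm$. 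Using the good-set bound $\|\fg_N(x+k\omega)\| \ge e^{NL - N^\sigma}$ together with \lemref{Ln_L}, the smallness of $|\f_N(x+k\omega)|$ forces, at each $k$, one of the alternatives
\[
(A_k)\ |e_1 \wedge u_k^-| \le e^{-40 N^\sigma} \quad \text{or} \quad (B_k)\ |e_1 \wedge v_k^-| \le e^{-40 N^\sigma}.
\]

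By pigeonhole, at least two of the three indices share the same alternative; call these $k_1 < k_2$ and set $j := k_2 - k_1 \in [l_0, N^{\sigma/8}]$, $y := x + k_1 \omega$. In the case $(A_{k_1}, A_{k_2})$, I apply \lemref{expanding_contracting_directions} to the two factorizations
\[
\fg_{N+j}(y) = \fg_j(y+N\omega)\fg_N(y) = \fg_N(y+j\omega)\fg_j(y),
\]
using the good-set bounds $\|\fg_N(\cdot)\| \approx e^{NL}$ and $\|\fg_j(\cdot)\| \le e^{O(N^\sigma)}$, to deduce that $u^-_{\fg_{N+j}(y)}$ is $e^{-NL + O(N^\sigma)}$-close in wedge to $u^-_{\fg_N(y)}$, and that $\fg_j(y)\,u^-_{\fg_{N+j}(y)}$ is similarly close to $u^-_{\fg_N(y+j\omega)}$. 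Combining these with $e_1 \approx u^-_{\fg_N(y)}$ and $e_1 \approx u^-_{\fg_N(y+j\omega)}$ (and \lemref{wedge_triangle_ineq} to transport $e_1$-closeness through $\fg_j(y)$) forces $\fg_j(y) e_1$ to be parallel to $e_1$ up to wedge error $e^{-30 N^\sigma}$. By \eqref{eq:M^u-entries} the first column of $\fg_j(y)$ equals $(\f_j(y),\, c_j(y)\f_{j-1}(y))^T$ with $c_j(y) \ge e^{-N^\sigma}$ on $\cG_N$, so this yields $|\f_{j-1}(y)| \le e^{-N^\sigma}$. The case $(B_{k_1}, B_{k_2})$ is handled symmetrically by running the same argument for the adjoint cocycle $\fg_N(\cdot)^*$ (using that $v_k^- = u^-_{\fg_N(y+k\omega)^*}$), producing an analogous bound $|\f_{j-1}(y')| \le e^{-N^\sigma}$ at a translated point $y'$.

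In either case, $\cB \cap \cG_N$ is contained, after a measure-preserving translation, in a set of the form $\{y : |\f_{j-1}(y)| \le e^{-N^\sigma}\}$ for some $j - 1 \in [l_0, N^{\sigma/3}]$. By \lemref{small_fl_is_small}, this set has measure at most $\exp(-N^\sigma (j-1)^{-2}) \le \exp(-N^{3\sigma/4})$, which for large $N$ is strictly less than $\tfrac{1}{2} e^{-N^{\sigma/2}}$, contradicting our standing assumption on $\mes(\cB)$.

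The main obstacle is the delicate bookkeeping of error terms through the cocycle: the $e^{-40N^\sigma}$ closeness of $e_1$ to $u_k^-$ may be amplified by a factor of $\|\fg_j\| \cdot \|\fg_j^{-1}\| \le e^{O(N^\sigma)}$ when transported through $\fg_j(y)$, so the constants in the hypothesis (the slack $100 N^\sigma$ in the exponent and the constraint $j_2 \le N^{\sigma/8}$) must be calibrated precisely so that the resulting bound $|\f_{j-1}| \le e^{-N^\sigma}$ comfortably beats the smallness threshold of \lemref{small_fl_is_small} by a genuine exponent gap over $\mes(\cB)$.
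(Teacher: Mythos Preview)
Your proposal is correct and follows essentially the same argument as the paper: contradiction, polar decomposition to get the dichotomy at each shift, pigeonhole to find two shifts sharing the same alternative, then \lemref{expanding_contracting_directions} and \lemref{wedge_triangle_ineq} to force a short determinant $\f_{j-1}$ to be small, contradicting \lemref{small_fl_is_small}. The only cosmetic differences are that the paper phrases the second alternative as $|v_N^{+}\wedge e_2|$ small (equivalent to your $|e_1\wedge v^{-}|$), handles case $(B)$ directly at the shifted point $x+(N+j_1+1)\omega$ rather than via the adjoint cocycle, and is explicit that $\cB\cap\cG_N$ lands in a \emph{finite union} (over the three possible pairs and two alternatives) of the small-$\f_{j-1}$ sets rather than a single translate---a point your sketch glosses over but which is immediately repaired by one more pigeonhole.
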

\begin{proof}
Let $\left\{ e_{1},e_{2}\right\} $ be the standard basis of $\mathbb{R}^{2}$.
By \eqref{M^u-entries} we have 
\begin{multline*}
\f_{N}\left(x\right)=\fg_{N}\left(x\right)e_{1}\wedge e_{2}\\
=\left(\fg_{N}\left(x\right)\left[\left(u_{N}^{+}\left(x\right)\cdot e_{1}\right)u_{N}^{+}\left(x\right)+\left(u_{N}^{-}\left(x\right)\cdot e_{1}\right)u_{N}^{-}\left(x\right)\right]\right)\wedge e_{2}\\
=\left(u_{N}^{+}\left(x\right)\cdot e_{1}\right)\left\Vert \fg_{N}\left(x\right)\right\Vert v_{N}^{+}\left(x\right)\wedge e_{2}+\left(u_{N}^{-}\left(x\right)\cdot e_{1}\right)\left\Vert \fg_{N}\left(x\right)\right\Vert ^{-1}v_{N}^{-}\left(x\right)\wedge e_{2}.
\end{multline*}
If $\left|\f_{N}\left(x\right)\right|\le\exp\left(NL_{N}-100N^{\sigma}\right)$
then
\begin{multline*}
\left\Vert \fg_{N}\left(x\right)\right\Vert \left|u_{N}^{+}\left(x\right)\cdot e_{1}\right|\left|v_{N}^{+}\left(x\right)\wedge e_{2}\right|-\left\Vert \fg_{N}\left(x\right)\right\Vert ^{-1}\left|u_{N}^{-}\left(x\right)\cdot e_{1}\right|\left|v_{N}^{-}\left(x\right)\wedge e_{2}\right|\\
\le\exp\left(NL_{N}-100N^{\sigma}\right).
\end{multline*}
From the above and the fact that $u_{N}^{+}\left(x\right)\cdot e_{1}=u_{N}^{-}\left(x\right)\wedge e_{1}$
(recall that $u_{N}^{+}\perp u_{N}^{-}$) one gets that on $\cG_{N}$
we have
\begin{multline*}
\left|u_{N}^{-}\left(x\right)\wedge e_{1}\right|\left|v_{N}^{+}\left(x\right)\wedge e_{2}\right|\le\exp\left(N\left(L_{N}-L\right)-99N^{\sigma}\right)+\exp\left(2N^{\sigma}-2NL\right)\\
\le\exp\left(-90N^{\sigma}\right)
\end{multline*}
and hence $\left|u_{N}^{-}\left(x\right)\wedge e_{1}\right|\le\exp\left(-40N^{\sigma}\right)$
or $\left|v_{N}^{+}\left(x\right)\wedge e_{2}\right|\le\exp\left(-40N^{\sigma}\right)$.

Suppose \eqref{three_determinants_estimate} fails. Then
\begin{multline*}
\mes\left\{ x\in\cG_{N}:\,\left|\f_{N}\left(x\right)\right|+\left|\f_{N}\left(x+j_{1}\omega\right)\right|+\left|\f_{N}\left(x+j_{2}\omega\right)\right|\le\exp\left(NL_{N}-100N^{\sigma}\right)\right\} \\
>\exp\left(-N^{\sigma/2}\right)-\exp\left(-c_{1}N^{\sigma}\right)>\exp\left(-cN^{\sigma/2}\right).
\end{multline*}
Let $x$ be in the above set. By the preliminary discussion, either
$\left|u_{N}^{-}\left(x\right)\wedge e_{1}\right|\le\exp\left(-40N^{\sigma}\right)$
or $\left|v_{N}^{+}\left(x\right)\wedge e_{2}\right|\le\exp\left(-40N^{\sigma}\right)$
has to hold for two of the points $x$, $x+j_{1}\omega$, $x+j_{2}\omega$.

We first assume that 
\begin{equation}
\left|u_{N}^{-}\left(x+j_{1}\omega\right)\wedge e_{1}\right|\le\exp\left(-40N^{\sigma}\right)\quad\text{and}\quad\left|u_{N}^{-}\left(x+j_{2}\omega\right)\wedge e_{1}\right|\le\exp\left(-40N^{\sigma}\right).\label{eq:first_catch}
\end{equation}
 We now compare $\fg_{j_{2}-j_{1}}\left(x+j_{1}\omega\right)u_{N}^{-}\left(x+j_{1}\omega\right)$
and $u_{N}^{-}\left(x+j_{2}\omega\right)$. From \lemref{wedge_triangle_ineq}
it follows that 
\begin{multline*}
\left|u_{N}^{-}\left(x+j_{2}\omega\right)\wedge\fg_{j_{2}-j_{1}}\left(x+j_{1}\omega\right)u_{N}^{-}\left(x+j_{1}\omega\right)\right|\\
\le\left|u_{N}^{-}\left(x+j_{2}\omega\right)\wedge\fg_{j_{2}-j_{1}}\left(x+j_{1}\omega\right)u_{N+j_{2}-j_{1}}^{-}\left(x+j_{1}\omega\right)\right|\\
+C\left\Vert \fg_{j_{2}-j_{1}}\left(x+j_{1}\omega\right)^{-1}\right\Vert \left|u_{N+j_{2}-j_{1}}^{-}\left(x+j_{1}\omega\right)\wedge u_{N}^{-}\left(x+j_{1}\omega\right)\right|
\end{multline*}
Applying \lemref{expanding_contracting_directions} with $A=\fg_{N}\left(x+j_{2}\omega\right)$
and $B=\fg_{j_{2}-j_{1}}\left(x+j_{1}\omega\right)$ for the first
term, and $A=\fg_{N}\left(x+j_{1}\omega\right)$ and $B=\fg_{j_{2}-j_{1}}\left(x+\left(N+j_{1}\right)\omega\right)$
for the second term, yields
\begin{multline}
\left|u_{N}^{-}\left(x+j_{2}\omega\right)\wedge\fg_{j_{2}-j_{1}}\left(x+j_{1}\omega\right)u_{N}^{-}\left(x+j_{1}\omega\right)\right|\\
\le\left\Vert \fg_{N}\left(x+j_{2}\omega\right)\right\Vert ^{-2}\left\Vert \fg_{j_{2}-j_{1}}\left(x+j_{1}\omega\right)\right\Vert \\
+C\left\Vert \fg_{j_{2}-j_{1}}\left(x+j_{1}\omega\right)^{-1}\right\Vert \left\Vert \fg_{N}\left(x+j_{1}\omega\right)\right\Vert ^{-2}\left\Vert \fg_{j_{2}-j_{1}}\left(x+\left(N+j_{1}\right)\omega\right)\right\Vert ^{2}\\
\le\exp\left(\left(-2N+j_{2}-j_{1}\right)L+3N^{\sigma}\right)+C\exp\left(\left(-2N+2\left(j_{2}-j_{1}\right)\right)L+5N^{\sigma}\right)\\
\le\exp\left(-NL\right)\label{eq:uN_vs_tMuN}
\end{multline}
for $x\in\cG_{N}$. Using \lemref{wedge_triangle_ineq}, \eqref{first_catch},
and \eqref{uN_vs_tMuN} we get
\begin{multline*}
\left|e_{1}\wedge\fg_{j_{2}-j_{1}}\left(x+j_{1}\omega\right)e_{1}\right|\le\left|e_{1}\wedge\fg_{j_{2}-j_{1}}\left(x+j_{1}\omega\right)u_{N}^{-}\left(x+j_{1}\omega\right)\right|\\
+C\left\Vert \fg_{j_{2}-j_{1}}\left(x+j_{1}\omega\right)^{-1}\right\Vert \left|e_{1}\wedge u_{N}^{-}\left(x+j_{1}\omega\right)\right|\\
\le\left|u_{N}^{-}\left(x+j_{2}\omega\right)\wedge\fg_{j_{2}-j_{1}}\left(x+j_{1}\omega\right)u_{N}^{-}\left(x+j_{1}\omega\right)\right|\\
+C\left\Vert \fg_{j_{2}-j_{1}}\left(x+j_{1}\omega\right)\right\Vert \left|e_{1}\wedge u_{N}^{-}\left(x+j_{2}\omega\right)\right|\\
+C\left\Vert \fg_{j_{2}-j_{1}}\left(x+j_{1}\omega\right)^{-1}\right\Vert \left|e_{1}\wedge u_{N}^{-}\left(x+j_{1}\omega\right)\right|\\
\le\exp\left(-NL\right)+C\exp\left(\left(j_{2}-j_{1}\right)L-39N^{\sigma}\right)+C\exp\left(-39N^{\sigma}\right)\\
\le\exp\left(-30N^{\sigma}\right).
\end{multline*}
On the other hand by \eqref{M^u-entries} we have 
\[
\left|e_{1}\wedge\fg_{j_{2}-j_{1}}\left(x+j_{1}\omega\right)e_{1}\right|=\left|\frac{b\left(x+j_{2}\omega\right)}{b\left(x+\left(j_{2}-1\right)\omega\right)}\right|^{1/2}\left|\f_{j_{2}-j_{1}-1}\left(x+j_{1}\omega\right)\right|,
\]
so
\[
\left|\f_{j_{2}-j_{1}-1}\left(x+j_{1}\omega\right)\right|\le C\exp\left(\frac{1}{2}\left(N^{\sigma}-D\right)-30N^{\sigma}\right)\le\exp\left(-20N^{\sigma}\right).
\]
The same type of estimate is obtained if we replace $\left(j_{1},j_{2}\right)$
in \eqref{first_catch} with $\left(0,j_{1}\right)$ or $\left(0,j_{2}\right)$.

Now assume that
\[
\left|v_{N}^{+}\left(x+j_{1}\omega\right)\wedge e_{2}\right|\le\exp\left(-40N^{\sigma}\right)\quad\text{and}\quad\left|v_{N}^{+}\left(x+j_{2}\omega\right)\wedge e_{2}\right|\le\exp\left(-40N^{\sigma}\right).
\]
Similarly to the previous case (first use \lemref{wedge_triangle_ineq}
and then \lemref{expanding_contracting_directions}) we have
\begin{multline*}
\left|v_{N}^{+}\left(x+j_{2}\omega\right)\wedge\fg_{j_{2}-j_{1}}\left(x+\left(N+j_{1}\right)\omega\right)v_{N}^{+}\left(x+j_{1}\omega\right)\right|\\
\le\left|v_{N+j_{2}-j_{1}}^{+}\left(x+j_{1}\omega\right)\wedge\fg_{j_{2}-j_{1}}\left(x+\left(N+j_{1}\right)\omega\right)v_{N}^{+}\left(x+j_{1}\omega\right)\right|\\
+C\left\Vert \fg_{j_{2}-j_{1}}\left(x+\left(N+j_{1}\right)\omega\right)\right\Vert \left|v_{N}^{+}\left(x+j_{2}\omega\right)\wedge v_{N+j_{2}-j_{1}}^{+}\left(x+j_{1}\omega\right)\right|\\
\le\left\Vert \fg_{N}\left(x+j_{1}\omega\right)\right\Vert ^{-2}\left\Vert \fg_{j_{2}-j_{1}}\left(x+\left(N+j_{1}\right)\omega\right)\right\Vert \\
+C\left\Vert \fg_{j_{2}-j_{1}}\left(x+\left(N+j_{1}\right)\omega\right)\right\Vert \left\Vert \fg_{N}\left(x+j_{2}\omega\right)\right\Vert ^{-2}\left\Vert \fg_{j_{2}-j_{1}}\left(x+j_{1}\omega\right)\right\Vert ^{2}\\
\le\exp\left(\left(-2N+j_{2}-j_{1}\right)L+3N^{\sigma}\right)+C\exp\left(\left(-2N+3\left(j_{2}-j_{1}\right)\right)L+5N^{\sigma}\right)\\
\le\exp\left(-NL\right)
\end{multline*}
for $x\in\cG_{N}$ and 
\begin{multline*}
\left|e_{2}\wedge\fg_{j_{2}-j_{1}}\left(x+\left(N+j_{1}\right)\omega\right)e_{2}\right|\le\left|e_{2}\wedge\fg_{j_{2}-j_{1}}\left(x+\left(N+j_{1}\right)\omega\right)v_{N}^{+}\left(x+j_{1}\omega\right)\right|\\
+C\left\Vert \fg_{j_{2}-j_{1}}\left(x+\left(N+j_{1}\right)\omega\right)^{-1}\right\Vert \left|e_{2}\wedge v_{N}^{+}\left(x+j_{1}\omega\right)\right|\\
\le\left|v_{N}^{+}\left(x+j_{2}\omega\right)\wedge\fg_{j_{2}-j_{1}}\left(x+\left(N+j_{1}\right)\omega\right)v_{N}^{+}\left(x+j_{1}\omega\right)\right|\\
+C\left\Vert \fg_{j_{2}-j_{1}}\left(x+\left(N+j_{1}\right)\omega\right)\right\Vert \left|e_{2}\wedge v_{N}^{+}\left(x+j_{2}\omega\right)\right|\\
+C\left\Vert \fg_{j_{2}-j_{1}}\left(x+\left(N+j_{1}\right)\omega\right)^{-1}\right\Vert \left|e_{2}\wedge v_{N}^{+}\left(x+j_{1}\omega\right)\right|
\end{multline*}

\begin{multline*}
\le\exp\left(-NL\right)+C\exp\left(\left(j_{2}-j_{1}\right)L-39N^{\sigma}\right)+C\exp\left(-39N^{\sigma}\right)\le\exp\left(-30N^{\sigma}\right).
\end{multline*}
On the other hand by \eqref{M^u-entries} we have
\begin{multline*}
\left|e_{2}\wedge\fg_{j_{2}-j_{1}}\left(x+\left(N+j_{1}\right)\omega\right)e_{2}\right|=\left|\frac{b\left(x+\left(N+j_{1}\right)\omega\right)}{b\left(x+\left(N+j_{1}+1\right)\omega\right)}\right|^{1/2}\\
\cdot\left|\f_{j_{2}-j_{1}-1}\left(x+\left(N+j_{1}+1\right)\omega\right)\right|,
\end{multline*}
so 
\[
\left|\f_{j_{2}-j_{1}-1}\left(x+\left(N+j_{1}+1\right)\omega\right)\right|\le C\exp\left(\frac{1}{2}\left(N^{\sigma}-D\right)-30N^{\sigma}\right)\le\exp\left(-20N^{\sigma}\right).
\]

In conclusion
\[
\mes\left\{ x\in\TT:\,\left|\f_{l}\left(x\right)\right|\le\exp\left(-20N^{\sigma}\right)\right\} >\exp\left(-cN^{\sigma/2}\right)
\]
for some choice of $l$ from $j_{1}-1$, $j_{2}-1$, $j_{2}-j_{1}-1$.
However, this contradicts the fact that \lemref{small_fl_is_small}
implies
\begin{multline*}
\mes\left\{ x\in\TT:\,\left|\f_{l}\left(x\right)\right|\le\exp\left(-20N^{\sigma}\right)\right\} \le\mes\left\{ x\in\TT:\,\left|\f_{l}\left(x\right)\right|\le\exp\left(-N^{\sigma}\right)\right\} \\
\le\exp\left(-N^{\sigma}l^{-2}\right)\le\exp\left(-N^{3\sigma/4}\right)<\exp\left(-cN^{\sigma/2}\right)
\end{multline*}
(we used $l\le N^{\sigma/8}$).\end{proof}
\begin{lem}
\label{lem:avg_entries->lower_bound} There exist constants $\kappa>0$
and $N_{0}=N_{0}\left(\left\Vert a\right\Vert _{\infty},I_{a,E},\left\Vert b\right\Vert _{*},\left|E\right|,\omega,\gamma\right)$
such that for $N\ge N_{0}$ we have 
\[
\int_{\TT}\frac{1}{N}\left|\f_{N}\left(x\right)\right|dx>L_{N}-N^{-\kappa}.
\]
\end{lem}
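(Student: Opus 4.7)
The displayed inequality is interpreted as $\int_{\TT}\tfrac{1}{N}\log|\f_N(x)|\,dx>L_N-N^{-\kappa}$; the upper direction $\le L_N$ is immediate from $|\f_N|\le\|\fg_N\|$ together with $L_N=L_N^u=\int_{\TT}\tfrac{1}{N}\log\|\fg_N\|\,dx$ on $\TT$, so only the matching lower bound requires work. Set
\[
B_N:=\{x\in\TT:\log|\f_N(x)|<NL_N-100N^\sigma\}
\]
for a small $\sigma\in(0,1)$ to be chosen. The plan has two steps: first control $\mes(B_N)$, then turn this into the desired integral lower bound.

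The central step is to show $\mes(B_N)\le e^{-cN^{\sigma/2}}$. For every admissible pair $(j_1,j_2)$ with $l_0\le j_1$, $j_1+l_0\le j_2\le K:=\lfloor N^{\sigma/8}\rfloor$, \lemref{three_determinants} yields
\[
\mes\bigl(B_N\cap(B_N-j_1\omega)\cap(B_N-j_2\omega)\bigr)\le e^{-N^{\sigma/2}}.
\]
Averaging over the $\asymp K^2$ admissible pairs, Fubini and the shift-invariance of Lebesgue measure give $\int_{\TT}\mathbf{1}_{B_N}(x)F(x)^2\,dx\lesssim e^{-N^{\sigma/2}}$ with $F(x):=K^{-1}\sum_{j=1}^K\mathbf{1}_{B_N}(x+j\omega)$. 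Since $\int F\,dx=\mes(B_N)$ and, invoking the Diophantine condition on $\omega$ together with quantitative equidistribution of the orbit $\{x+j\omega\}_{j\le K}$, one has the matching lower bound $\int\mathbf{1}_{B_N}F^2\,dx\gtrsim\mes(B_N)^3$; hence $\mes(B_N)\le e^{-cN^{\sigma/2}}$.

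With this in hand, decompose $\int_{\TT}\log|\f_N|\,dx=\int_{B_N^c}\log|\f_N|+\int_{B_N}\log|\f_N|$. On $B_N^c$, $\log|\f_N|\ge NL_N-100N^\sigma$ by definition, so
\[
\int_{B_N^c}\log|\f_N|\,dx\ge(NL_N-100N^\sigma)(1-e^{-cN^{\sigma/2}}).
\]
For the $B_N$-term, \corref{ldt_M^u}, \corref{M^u^-1bound}, and \lemref{Integral-lemma} together produce a uniform bound $\|\log|\f_N|\|_{L^2(\TT)}\lesssim N$, so Cauchy--Schwarz yields $\bigl|\int_{B_N}\log|\f_N|\,dx\bigr|\lesssim N\,\mes(B_N)^{1/2}=o(1)$. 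Dividing by $N$ and combining, $\tfrac{1}{N}\int\log|\f_N|\,dx\ge L_N-100N^{\sigma-1}-o(N^{-1})>L_N-N^{-\kappa}$ for any $\kappa<1-\sigma$, so the stated bound follows upon choosing $\sigma$ small enough (say $\sigma=1/10$, $\kappa=1/2$).

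The main obstacle is the measure bound on $B_N$: the Fubini step only delivers the upper bound $\int\mathbf{1}_{B_N}F^2\lesssim e^{-N^{\sigma/2}}$, and converting this into a bound on $\mes(B_N)$ requires ruling out the ``concentrated'' case in which $B_N$ clusters along short $\omega$-orbits. This is where the Diophantine hypothesis on $\omega$ enters essentially, either through a direct quantitative equidistribution estimate or by stratifying $B_N$ along the level sets of $F$ and handling each stratum via a union bound over \lemref{three_determinants}. A secondary technical point is the uniform $L^2$ control on $\log|\f_N|$ in the presence of singularities coming from zeros of $b$, which requires a careful coupling of \corref{M^u^-1bound} with \thmref{sh_ldt} applied to $\log|b|$ and $\log|\tilde b|$.
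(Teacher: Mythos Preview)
Your strategy has a genuine gap at the measure bound on $B_N$. From \lemref{three_determinants} you correctly extract
\[
\int_{\TT}\mathbf 1_{B_N}(x)\,F(x)^2\,dx\lesssim e^{-N^{\sigma/2}}+\frac{l_0}{K}\,\mes(B_N),
\]
where the second term comes from the $O(l_0K)$ non-admissible pairs (those with some gap $<l_0$). The lower bound you invoke, $\int\mathbf 1_{B_N}F^2\gtrsim\mes(B_N)^3$, is exactly the statement that the orbit $\{x+j\omega\}_{j\le K}$ equidistributes in $B_N$ at the scale of $\mes(B_N)$; but quantitative equidistribution for Diophantine rotations only controls discrepancy against intervals (or sets of bounded perimeter), and $B_N$ is a sublevel set of $\log|\f_N|$ whose boundary may have $\asymp N$ components. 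No Diophantine input rescues this. The honest consequence of the triple-intersection bounds is H\"older: $\mes(B_N)^3\le\int G^3\le e^{-N^{\sigma/2}}+CM^{-1}\mes(B_N)$ with $G(x)=M^{-1}\sum_{k}\mathbf 1_{B_N}(x+kl_0\omega)$ and $M\asymp N^{\sigma/8}/l_0$, which yields only the polynomial bound $\mes(B_N)\lesssim M^{-1/2}\lesssim N^{-\sigma/16}$, never the exponential one you claim.

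This would be harmless if your $L^2$ control were as stated, but it is not: \corref{ldt_M^u} and \corref{M^u^-1bound} control $\log\|\fg_N\|$ and $\log\|\fg_N^{-1}\|$, and since $|\det\fg_N|=1$ these coincide; neither bounds the single entry $\log|\f_N|$ from below (which is $-\infty$ at zeros of $\f_N$). The available tail bound is \lemref{small_fl_is_small} (applied with $l=N$), which gives only $\|\tfrac1N\log|\f_N|\|_{L^2}\lesssim N^{A}$ for some $A\ge 2$. Feeding this and the polynomial $\mes(B_N)$ bound into Cauchy--Schwarz produces a polynomially \emph{large} error on $B_N$, and the argument collapses.

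The paper circumvents both problems by never bounding $\mes(B_N)$. Instead it uses the structural consequence of \lemref{three_determinants} directly: on the good set $\Omega_N$ (of exponentially full measure), among the $M\asymp N^{\sigma/8}/l_0$ shifts $x+kl_0\omega$, \emph{at most two} lie in $B_N$. Averaging $u:=\tfrac1N\log|\f_N|$ over these shifts gives, for $x\in\Omega_N$,
\[
\frac{1}{M}\sum_{k=1}^M u(x+kl_0\omega)\ge\frac{M-2}{M}\bigl(L_N-100N^{\sigma-1}\bigr)+\frac{2}{M}\inf_{k}u(x+kl_0\omega).
\]
The infimum is then handled by Cartan's estimate applied to the subharmonic $v=\tfrac1N\log|\g_N|$ (transferred to $u$ via $u=v-\tfrac{1}{2N}(S_N+S_N(\cdot+\omega))$ and \thmref{sh_ldt}), yielding $\inf_k u\ge -CN^{\epsilon}$ off an exponentially small set. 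Thus the damage from the two bad shifts is $O(N^{\epsilon}/M)=O(N^{\epsilon-\sigma/8})$, which is small for $\epsilon<\sigma/8$. Integrating and absorbing the exceptional sets with the $L^2$ bound from \lemref{small_fl_is_small} finishes the proof. The point is that the $2/M$ weight in front of the poorly controlled infimum is what makes Cartan's merely polynomial lower bound sufficient; your Cauchy--Schwarz step lacks any such damping factor.
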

\begin{proof}
Let $\Omega_{N}$ be the set of points $x\in\cG_{N}$ such that
\begin{multline*}
\min\{\left|\f_{N}\left(x+j_{1}\omega\right)\right|+\left|\f_{N}\left(x+j_{2}\omega\right)\right|+\left|\f_{N}\left(x+j_{3}\omega\right)\right|\\
:\,0<j_{1}<j_{1}+l_{0}\le j_{2}<j_{2}+l_{0}\le j_{3}\le N^{\sigma/8}\}\\
>\exp\left(NL_{N}-100N^{\sigma}\right),
\end{multline*}
where $l_{0}$ is as in the previous lemma. If $N$ is large enough
then $\mes\left(\TT\setminus\Omega_{N}\right)\le N\exp\left(-c_{1}N^{\sigma/2}\right)<\exp\left(-cN^{\sigma/2}\right)$.

Let $u\left(x\right)=\log\left|\f_{N}\left(x\right)\right|/N$ and
set $M=\left[N^{\sigma/8}/l_{0}\right]$. For each $x\in\Omega_{N}$
we have that $\left|\f_{N}\left(x+kl_{0}\omega\right)\right|>\exp\left(NL_{N}-100N^{\sigma}\right)/3$
for all but at most two $k$'s, $1\le k\le M$. We have
\begin{multline}
\left\langle u\right\rangle :=\int_{\TT}u\left(x\right)dx=\frac{1}{M}\sum_{k=1}^{M}\int_{\TT}u\left(x+kl_{0}\omega\right)dx\\
\ge\int_{\Omega_{N}}\left(\frac{M-2}{M}\left(L_{N}-100N^{\sigma-1}-\frac{\log3}{N}\right)+\frac{2}{M}\inf_{1\le k\le M}u\left(x+kl_{0}\omega\right)\right)dx\\
+\frac{1}{M}\sum_{k=1}^{M}\int_{\TT\setminus\Omega_{N}}u\left(x+kl_{0}\omega\right)dx.\label{eq:tu_first_estimate}
\end{multline}

Let $v\left(x\right)=\log\left|\g_{N}\left(x\right)\right|/N$. We
have that 
\[
S:=\sup_{z\in\mathcal{A}_{\rho_{0}''}}v\left(z\right)\le\sup_{z\in\mathcal{A}_{\rho_{0}''}}\frac{1}{N}\log\left\Vert \gh_{N}\left(z\right)\right\Vert <\infty.
\]
Let
\[
v\left(z\right)=\int_{\mathcal{A}_{\rho_{0}'}}\log\left|z-\zeta\right|d\mu\left(\zeta\right)+h\left(z\right)
\]
be the Riesz representation on $\cA_{\rho_{0}'}$. Applying \cite[Lemma 2.2]{MR2438997}
(see the proof of \lemref{M^a-Riesz_bounds}) we get that 
\begin{equation}
\mu\left(\mathcal{A}_{\rho_{0}}\right)+\left\Vert h\right\Vert _{L^{\infty}\left(\mathcal{A}_{\rho_{0}}\right)}\le C\left(2S-\sup_{\TT}v\right)\le C\left(2S-\left\langle v\right\rangle \right).\label{eq:f^a-Riesz-bounds}
\end{equation}
Note that $\left\langle v\right\rangle $ is finite by subharmonicity.
Since $\left\langle v\right\rangle =\left\langle u\right\rangle +D$,
it follows that $\left\langle u\right\rangle $ is also finite. Using
Cartan's estimate (see \cite[Lemma 2.2]{MR1847592}) we get that for
any small $\epsilon>0$ we have
\begin{equation}
\inf_{1\le k\le M}v\left(x+kl_{0}\omega\right)\ge-C\left(2S-\left\langle v\right\rangle \right)N^{\epsilon}\label{eq:inf_u}
\end{equation}
up to a set not exceeding $CM\exp\left(-N^{\epsilon}\right)$ in measure.
Since 
\begin{equation}
u\left(x\right)=v\left(x\right)-\frac{1}{2N}\left(S_{N}\left(x\right)+S_{N}\left(x+\omega\right)\right)\label{eq:u...v}
\end{equation}
we can use \eqref{inf_u} and \thmref{sh_ldt} to conclude that
\[
\inf_{1\le k\le M}u\left(x+kl_{0}\omega\right)>-C\left(2S-\left\langle u\right\rangle -D\right)N^{\epsilon}-D-N^{\epsilon}>\left(C\left\langle u\right\rangle -C'\right)N^{\epsilon}
\]
up to a set $\mathcal{B}_{N}$ not exceding $\exp\left(-cN^{\epsilon}\right)$
in measure. Therefore
\begin{multline*}
\int_{\Omega_{N}}\inf_{1\le k\le M}u\left(x+kl_{0}\omega\right)dx>\left(C\left\langle u\right\rangle -C'\right)N^{\epsilon}+\int_{\Omega_{N}\cap\mathcal{B}_{N}}\inf_{1\le k\le M}u\left(x+kl_{0}\omega\right)\\
>\left(C\left\langle u\right\rangle -C'\right)N^{\epsilon}-\sum_{k=1}^{M}\int_{\Omega_{N}\cap\mathcal{B}_{N}}\left|u\left(x+kl_{0}\omega\right)\right|dx.
\end{multline*}
Now \eqref{tu_first_estimate} becomes
\begin{multline*}
\left\langle u\right\rangle \ge\left(1-\frac{2}{M}\right)\left(L_{N}-100N^{\sigma-1}-\frac{\log3}{N}\right)+\frac{\left(C\left\langle u\right\rangle -C'\right)N^{\epsilon}}{M}\\
-\frac{2}{M}\sum_{k=1}^{M}\int_{\Omega_{N}^{c}\cup\mathcal{B}_{N}}\left|u\left(x+kl_{0}\omega\right)\right|.
\end{multline*}
Using \lemref{small_fl_is_small} (with $\sigma=3$) and reasoning
as in the proof of \lemref{Integral-lemma} we get that $\left\Vert u\right\Vert _{L^{2}\left(\TT\right)}\le CN^{3}$
and consequently
\[
\int_{\Omega_{N}^{c}\cup\mathcal{B}_{N}}\left|u\left(x+kl_{0}\omega\right)\right|dx\le\left(\mes\left\{ \Omega_{N}^{c}\cup\mathcal{B}_{N}\right\} \right)^{1/2}\left\Vert u\right\Vert _{L^{2}\left(\TT\right)}\le CN^{3}\exp\left(-cN^{\epsilon}\right).
\]
Now it is straightforward to reach the conclusion.\end{proof}
\begin{cor}
\label{cor:f^a-Riesz-bounds}Let 
\[
\frac{1}{n}\log\left\Vert \g_{n}\left(z\right)\right\Vert =\int_{\cA_{\rho_{0}'}}\log\left|z-\zeta\right|d\mu_{n}\left(\zeta\right)+h_{n}\left(z\right)
\]
be the Riesz representation on $\cA_{\rho_{0}'}$. There exists a
constant $C_{0}=C_{0}(\left\Vert a\right\Vert _{\infty},I_{a,E},$
$\left\Vert b\right\Vert _{*},\left|E\right|,\omega,\gamma,\rho_{0},\rho_{0}',\rho_{0}'')$
such that
\[
\mu_{n}\left(\cA_{\rho_{0}}\right)+\left\Vert h_{n}\right\Vert _{L^{\infty}\left(\cA_{\rho_{0}}\right)}\le C_{0}.
\]
\end{cor}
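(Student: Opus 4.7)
The plan is to mimic the proof of \lemref{M^a-Riesz_bounds}, replacing the trivial lower bound \eqref{L^a_l-bounds} on $L_n^a$ by the deeper lower bound on $\frac{1}{n}\int_\TT\log|f_n^u(x)|\,dx$ furnished by \lemref{avg_entries->lower_bound}. Set $v_n(z) := \frac{1}{n}\log|f_n^a(z)|$, which is subharmonic on $\cA_{\rho_0''}$ because $f_n^a$ is analytic there, and let $T_n := \sup_{\cA_{\rho_0''}} v_n$. Applying \cite[Lemma 2.2]{MR2438997} to $v_n$ on the pair $(\cA_{\rho_0},\cA_{\rho_0'})$ exactly as in \lemref{M^a-Riesz_bounds}, together with the chain $T_n \ge \sup_{\cA_{\rho_0'}} v_n \ge \sup_\TT v_n \ge \langle v_n\rangle$, reduces the statement to showing that $T_n$ is bounded above and $\langle v_n\rangle$ is bounded below, uniformly in $n$.

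The upper bound is immediate: the inequality $|f_n^a(z)|\le \|\gh_n(z)\|$ gives $v_n(z) \le \frac{1}{n}\log\|\gh_n(z)\|$, and the right-hand side is uniformly bounded on $\cA_{\rho_0''}$ by exactly the same constant $C=C(\|a\|_\infty,\|b\|_\infty,|E|,\rho_0'')$ used at the end of \lemref{M^a-Riesz_bounds}.

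For the lower bound on $\langle v_n\rangle$ I would exploit $|\tilde b|=|b|$ on $\TT$: combining \eqref{f...f^a} and \eqref{f...f^u} one obtains the pointwise identity $\log|f_n^a(x)| = S_n(x+\omega) + \log|f_n^u(x)|$ for $x\in\TT$. Averaging over $\TT$ and invoking \lemref{avg_entries->lower_bound} yields
\[
\langle v_n\rangle \;=\; D + \frac{1}{n}\int_\TT \log|f_n^u(x)|\,dx \;\ge\; D + L_n - n^{-\kappa} \;\ge\; D + \gamma/2
\]
for $n\ge N_0$, where the last inequality uses $L_n \ge L \ge \gamma$. For the finitely many $n<N_0$, $\langle v_n\rangle$ is finite by subharmonicity of $v_n$ and depends continuously on $E$ on any compact subset of $\{L(E)\ge\gamma\}$, so it is uniformly bounded below there; the corresponding finitely many constants are absorbed into $C_0$.

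The main obstacle is precisely the lower bound on $\langle v_n\rangle$: it is nontrivial because $f_n^a$ can be very small (even vanish) on sets of positive measure, so controlling its logarithmic average requires the full strength of \lemref{avg_entries->lower_bound} (which in turn rests on \lemref{three_determinants} and \lemref{considerable-difficulty-lemma}). Once that lower bound is in hand, the remainder is a direct transcription of the Riesz-representation bookkeeping from \lemref{M^a-Riesz_bounds}.
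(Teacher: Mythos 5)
Your proof is correct and follows exactly the paper's route: apply \cite[Lemma 2.2]{MR2438997} to the subharmonic function $v_n=\frac{1}{n}\log|f_n^a|$ to reduce to a uniform upper bound on $\sup v_n$ and a uniform lower bound on $\langle v_n\rangle$, the latter supplied by \lemref{avg_entries->lower_bound} (the paper's one-line proof cites precisely the inequality \eqref{f^a-Riesz-bounds} from that lemma's proof together with the lemma itself). One harmless slip: the pointwise relation coming from \eqref{f...f^a} and \eqref{f...f^u} is $\log|f_n^a(x)|=\tfrac{1}{2}\bigl(S_n(x)+S_n(x+\omega)\bigr)+\log|f_n^u(x)|$ on $\TT$, not $S_n(x+\omega)+\log|f_n^u(x)|$; since both versions average to $\langle v_n\rangle=D+\langle u_n\rangle$, your conclusion is unaffected.
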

\begin{proof}
It suffices to obtain the bound for large $n$. The bound follows
from \eqref{f^a-Riesz-bounds} and the previous lemma.\end{proof}
\begin{lem}
\label{lem:ldt_entries_weak}There exist constants $\sigma_{0}>0$,
$c_{0}=c_{0}\left(I_{a,E},\left\Vert b\right\Vert _{*},\left|E\right|,\omega,\gamma\right)$,
and $C_{0}=C_{0}\left(I_{a,E},\left\Vert b\right\Vert _{*},\left|E\right|,\omega,\gamma\right)$
such that for every integer $n$ and any $\delta>0$ we have
\[
\mes\left\{ x\in\TT:\,\left|\log\left|\g_{n}\left(x\right)\right|-\left\langle \log\left|\g_{n}\right|\right\rangle \right|>n\delta\right\} \le C_{0}\exp\left(-c_{0}\delta n^{\sigma_{0}}\right).
\]
The same estimate with possibly different $c_{0}$ and $C_{0}$ holds
for $\f_{n}$.\end{lem}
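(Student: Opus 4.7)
The plan is to implement the strategy announced just before \lemref{considerable-difficulty-lemma}: derive a BMO bound for $\tfrac{1}{n}\log|f_n^a|$ via the BMO splitting lemma \cite[Lemma 2.3]{MR1843776}, then apply the John--Nirenberg inequality. I would treat $f_n^a$ first, since $\log|f_n^a|$ is subharmonic, and then transfer to $f_n^u$ via \eqref{f...f^u}.

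I would begin by setting $v_n := \tfrac{1}{n}\log|f_n^a|$ and collecting the uniform (in $n$) estimates that feed into the splitting lemma: $v_n$ is subharmonic on $\cA_{\rho_0''}$; $\sup v_n \le C$ (from \propref{M^a-upper-bound} together with the trivial inequality $|f_n^a| \le \|M_n^a\|$); $\langle v_n\rangle \ge L_n^a - n^{-\kappa}$ (from \lemref{avg_entries->lower_bound}); and the Riesz-representation bound $\mu_n(\cA_{\rho_0}) + \|h_n\|_{L^\infty(\cA_{\rho_0})} \le C_0$ (from \corref{f^a-Riesz-bounds}). The splitting lemma then produces, for some $\sigma_0 > 0$, a decomposition $v_n|_\TT = \phi_n + \psi_n$ with $\|\phi_n\|_{L^\infty(\TT)} \le C$ and $\|\psi_n\|_{\mathrm{BMO}(\TT)} \le C n^{-\sigma_0}$. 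John--Nirenberg applied to $\psi_n$ gives $\mes\{x\in\TT:|\psi_n(x) - \langle \psi_n\rangle| > \delta/2\} \le C\exp(-c\delta n^{\sigma_0})$, while the $\phi_n$-piece is handled trivially: $\{|\phi_n - \langle \phi_n\rangle| > \delta/2\} = \emptyset$ for $\delta > 4C$, and for smaller $\delta$ the claimed right-hand side is $\ge 1$ after enlarging $C_0$. Combining and rescaling the deviation by $n$ yields the estimate for $\log|f_n^a|$.

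To transfer to $f_n^u$, I would combine \eqref{f...f^u}, \eqref{f...f^a} and the identity $|\tilde b| = |b|$ on $\TT$; these express $\log|f_n^u(x)| - \log|f_n^a(x)|$ as $-S_n(x+\omega)$ plus two boundary terms $\pm\tfrac{1}{2}\log|b(x+j\omega)|$ (for $j=0,n$). Applying \thmref{sh_ldt} to $\log|b|$ controls $S_n(x+\omega)$ with rate $\exp(-c\delta n + r_n)$, and the two individual terms similarly with $n=1$. Since all these rates are much stronger than $\exp(-c\delta n^{\sigma_0})$, splitting the event $\{x:|\log|f_n^u(x)|-\langle\log|f_n^u|\rangle|>n\delta\}$ into the corresponding pieces closes the argument. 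The main obstacle is the BMO-decay bound $\|\psi_n\|_{\mathrm{BMO}} \le Cn^{-\sigma_0}$: the naive estimate obtained from the Riesz representation alone only yields BMO norm $O(1)$ (and hence $\sigma_0=0$), and producing the polynomial-in-$n$ improvement is precisely the content of \cite[Lemma 2.3]{MR1843776}, applied to the uniform bounds from \corref{f^a-Riesz-bounds} and \lemref{avg_entries->lower_bound}.
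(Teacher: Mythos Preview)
Your overall strategy is the same as the paper's: assemble the pointwise upper bound (\propref{M^a-upper-bound}), the lower bound on the mean (\lemref{avg_entries->lower_bound}), and the Riesz bounds (\corref{f^a-Riesz-bounds}), feed them into \cite[Lemma 2.3]{MR1843776}, apply John--Nirenberg, and then transfer from $f_n^a$ to $f_n^u$ using \thmref{sh_ldt} applied to $\log|b|$.

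There is, however, a genuine gap in how you invoke the splitting lemma. The lemma does not \emph{output} a decomposition $v_n=\phi_n+\psi_n$; it takes as input the Riesz bounds together with the smallness of $\|v_n-\langle v_n\rangle\|_{L^1(\TT)}$, and returns directly
\[
\|v_n\|_{\mathrm{BMO}(\TT)}\le C\,\|v_n-\langle v_n\rangle\|_{L^1(\TT)}^{1/2}.
\]
You never derive the required $L^1$-smallness; it comes from the fact that \propref{M^a-upper-bound} gives the \emph{sharp} upper bound $\sup_\TT v_n\le L_n^a+C(\log n)^p/n$ (not merely $\sup v_n\le C$), which together with $\langle v_n\rangle\ge L_n^a-n^{-\kappa}$ forces $(v_n-\langle v_n\rangle)_+\le Cn^{-\kappa}$ pointwise and hence $\|v_n-\langle v_n\rangle\|_{L^1}=2\int(v_n-\langle v_n\rangle)_+\le Cn^{-\kappa}$. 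John--Nirenberg then applies to $v_n$ itself, with no leftover bounded piece. Your handling of the purported $\phi_n$-piece is in fact incorrect: the claim ``for smaller $\delta$ the claimed right-hand side is $\ge1$ after enlarging $C_0$'' fails, since for any fixed $C_0,c_0$ and any $\delta\in(0,4C]$ one has $C_0\exp(-c_0\delta n^{\sigma_0})\to0$ as $n\to\infty$, so the inequality cannot be made trivial in that range. As written the argument does not close; the fix is simply to use the lemma in its correct form and get the BMO bound on $v_n$ directly.
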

\begin{proof}
It is enough to establish the estimate for $n$ large enough. Let
$u\left(x\right)=\log\left|\f_{n}\left(x\right)\right|/n$ and $v\left(x\right)=\log\left|\g_{n}\left(x\right)\right|/n$.
By the previous lemma (recall that $\left\langle v\right\rangle =\left\langle u\right\rangle +D$)
and \propref{M^a-upper-bound} we have that there exists a small $\kappa>0$
such that
\[
\begin{cases}
\left\langle v\right\rangle \ge L_{n}^{a}-n^{-\kappa}\\
\sup_{\TT}v\le L_{n}^{a}+n^{-\kappa}.
\end{cases}
\]
This implies that
\[
\left\Vert v-\left\langle v\right\rangle \right\Vert _{L^{1}\left(\TT\right)}\le Cn^{-\kappa}
\]
and hence by \cite[Lemma 2.3]{MR1843776} we have
\[
\left\Vert v\right\Vert _{BMO\left(\TT\right)}=\left\Vert v-\left\langle v\right\rangle \right\Vert _{BMO\left(\TT\right)}\le C\left\Vert v-\left\langle v\right\rangle \right\Vert _{L^{1}\left(\TT\right)}^{1/2}\le Cn^{-\kappa/2}.
\]
As in the proof of \cite[Proposition 2.11]{MR2438997} we note that
in order to get the conclusion of \cite[Lemma 2.3]{MR1843776} we
just need the bounds on the Riesz representation of $v$. By the John-Nirenberg
inequality we get
\[
\mes\left\{ x\in\TT:\,\left|v\left(x\right)-\left\langle v\right\rangle \right|>\delta\right\} \le C\exp\left(-c\delta n^{\kappa/2}\right).
\]
Using \eqref{u...v} we have
\begin{multline*}
\mes\left\{ x\in\TT:\,\left|u\left(x\right)-\left\langle u\right\rangle \right|>\delta\right\} \le\mes\left\{ x\in\TT:\,\left|v\left(x\right)-\left\langle v\right\rangle \right|>\frac{\delta}{2}\right\} \\
+\mes\left\{ x\in\TT:\,\left|\frac{1}{2n}\left(S_{n}\left(x\right)+S_{n}\left(x+\omega\right)\right)-D\right|>\frac{\delta}{2}\right\} \\
\le C\exp\left(-c\delta n^{\kappa/2}/2\right)+2\exp\left(-c'\delta n/2+r_{n}\right)\le C'\exp\left(-c''\delta n^{\kappa/2}/2\right).
\end{multline*}
This concludes the proof.
\end{proof}
Next we will use the Avalanche Principle to refine the previous estimate.
\begin{prop}
\label{prop:ldt_entries} There exist constants $c_{0}=c_{0}\left(\left\Vert a\right\Vert _{\infty},I_{a,E},\left\Vert b\right\Vert _{*},\left|E\right|,\omega,\gamma\right)$,
$C_{0}=C_{0}\left(\omega\right)>\alpha+2$, and $C_{1}=C_{1}\left(\left\Vert a\right\Vert _{\infty},I_{a,E},\left\Vert b\right\Vert _{*},\left|E\right|,\omega,\gamma\right)$
such that for every integer $n>1$ and any $\delta>0$ we have
\[
\mes\left\{ x\in\TT:\,\left|\log\left|\g_{n}\left(x\right)\right|-\left\langle \log\left|\g_{n}\right|\right\rangle \right|>n\delta\right\} \le C_{1}\exp\left(-c_{0}\delta n\left(\log n\right)^{-C_{0}}\right).
\]
\end{prop}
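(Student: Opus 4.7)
The plan is to refine the weak deviations estimate of \lemref{ldt_entries_weak} by applying the Avalanche Principle in the scheme used to prove \thmref{ldt} for matrix norms. The essential new ingredient is \lemref{three_determinants}, which is used to pass between the matrix norms $\|\fg_l\|$ and the entries $|\f_l|$. Set $l = \lfloor C(\log n)^{C_0}\rfloor$ with $C_0 > \alpha+2$ and $C$ large enough that $l > (2/\gamma)\log n$, that the weak estimate of \lemref{ldt_entries_weak} at scale $l$ produces an exceptional set of measure $\ll \exp(-c\delta n)$, and that the AP hypotheses may be verified at scale $l$ using \corref{ldt_M^u}, \corref{M^u^-1bound}, and \lemref{Ln_L}. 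Set $m = \lfloor n/l\rfloor$ and $l' = n-(m-1)l$.

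Applying \lemref{AppliedAP} to the factorization $\fg_n(x) = \fg_{l'}(x+(m-1)l\omega)\,\fg_l(x+(m-2)l\omega)\cdots\fg_l(x)$ yields, off an exceptional set of small measure,
\begin{equation*}
\log\|\fg_n(x)\| + \sum_{j=1}^{m-2}\log\|\fg_l(x+jl\omega)\| - \sum_{j=0}^{m-3}\log\|\fg_{2l}(x+jl\omega)\| = O(l).
\end{equation*}
The main new step is to translate this into the corresponding identity for the entries $\log|\g_n|$, $\log|\g_l|$, $\log|\g_{2l}|$. By \lemref{three_determinants} applied at scales $l$ and $2l$, for each index $j$ there is a small shift $j_* \in \{0,j_1,j_2\}$ with $j_2 \le l^{\sigma/8}$ such that $\log|\f_l(x+jl\omega+j_*\omega)| = \log\|\fg_l(x+jl\omega)\| + O(l^\sigma)$, valid off a set of measure $\exp(-l^{\sigma/2})$. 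Averaging the AP identity over shifts $k\omega$ for $k=0,\ldots,l-1$ and using the almost invariance \eqref{M^a-almost-invariance} permits the replacement of matrix norms by entry logarithms modulo errors of size $o(\delta n)$.

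Next, \thmref{sh_ldt} applied to the subharmonic functions $\log|\g_l|$ and $\log|\g_{2l}|$, whose Riesz representations are uniformly bounded by \corref{f^a-Riesz-bounds}, yields concentration of the corresponding averages with rate $\exp(-c\delta l + C(\log l)^p)$, which in terms of $n$ becomes $\exp(-c\delta n(\log n)^{-C_0})$. Combining these ingredients produces $\left|\log|\g_n(x)| - n L_n^a\right| \le n\delta$ off an exceptional set of the desired size. The identification $\langle\log|\g_n|\rangle = nL_n^a + O(n^{-\kappa})$ from \lemref{avg_entries->lower_bound}, together with the deterministic upper bound $\log|\g_n(x)| \le \log\|\gh_n(x)\| \le nL_n^a + C(\log n)^p$ from \propref{M^a-upper-bound}, then replaces $nL_n^a$ by $\langle\log|\g_n|\rangle$ to give the stated form.

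The main obstacle is the matrix-norm-to-entry conversion in the middle paragraph: \lemref{three_determinants} only guarantees control at one of three nearby shifts, so the bookkeeping of exceptional sets across all summands in the AP identity, together with the use of almost invariance to align everything at the given $x$, is the delicate technical heart of the proof.
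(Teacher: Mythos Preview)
Your plan has a genuine gap at the central step: the conversion of matrix norms to entries in the Avalanche Principle expansion cannot be carried out as you describe.

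The Avalanche Principle yields an identity for $\log\|\fg_n(x)\|$ in terms of $\log\|\fg_l\|$ and $\log\|\fg_{2l}\|$; there is no analogous identity relating $\log|\f_n|$ to $\log|\f_l|$ and $\log|\f_{2l}|$, because the entries do not have the multiplicative (cocycle) structure on which the AP rests. Your use of \lemref{three_determinants} to bridge this misreads that lemma: it gives a \emph{lower bound} on the maximum of three shifted values $|\f_N(x+j\omega)|$, not an approximate identity $\log|\f_l|\approx\log\|\fg_l\|$ at any specified point. Even granting such a shifted lower bound, the shift $j_*$ depends on $x$ and on the block index, so after summing over the blocks you do not get an ergodic sum to which \thmref{sh_ldt} applies. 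Most seriously, for the leading term you must replace $\log\|\fg_n(x)\|$ by $\log|\f_n(x)|$ at the same $x$, and this is essentially the statement you are trying to prove.

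The paper avoids all of this by a rank-one trick. One sets $\cM_n^u(x)=P\,\fg_n(x)\,P$ with $P=\left[\begin{smallmatrix}-1&0\\0&0\end{smallmatrix}\right]$, so that $\|\cM_n^u(x)\|=|\f_n(x)|$ (and likewise $\|\cM_n^a(x)\|=|\g_n(x)|$). The Avalanche Principle is then applied to the factorization $\cM_n^u=A_m^u\cdots A_1^u$ in which only the two endpoint blocks carry $P$; the $m-2$ interior blocks are the full matrices $\fg_l$. The lower bounds on $\|A_1^u\|$ and $\|A_m^u\|$ needed for the AP hypotheses come from \lemref{ldt_entries_weak} at scale $l$, and this is the only place where entry estimates at small scale enter (so \lemref{three_determinants} is used only indirectly, via \lemref{avg_entries->lower_bound} and then \lemref{ldt_entries_weak}). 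After averaging over $l$ shifted factorizations and applying \thmref{sh_ldt} to the interior sums of $\log\|\gh_l\|$, $\log\|\gh_{2l}\|$ and to the subharmonic boundary terms $u_k$, one obtains $|\log|\g_n(x)|-\langle\log|\g_n|\rangle|\le C(\log n)^{C_2}$ off a set of measure $\exp(-c(\log n)^2)$. The stated form of the deviation bound is then obtained not directly but via the BMO splitting lemma and the John--Nirenberg inequality, a step your proposal also omits.
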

\begin{proof}
It is enough to establish the estimate for $n$ large enough. We have
that
\[
\left[\begin{array}{cc}
\f_{n}\left(x\right) & 0\\
0 & 0
\end{array}\right]=\left[\begin{array}{cc}
-1 & 0\\
0 & 0
\end{array}\right]\fg_{n}\left(x\right)\left[\begin{array}{cc}
-1 & 0\\
0 & 0
\end{array}\right]=:\cM_{n}^{u}\left(x\right).
\]
We define $\cM_{n}^{a}$ analogously. We obviously have that $\left|f_{n}^{a}\left(x\right)\right|=\left\Vert \cM_{n}^{a}\left(x\right)\right\Vert $. 

Let $l=\left[\left(\log n\right)^{2/\sigma_{0}}\right]$ with $\sigma_{0}$
as in \lemref{ldt_entries_weak}. Let $n=l+\left(m-2\right)l+l'$
with $2l\le l'\le3l$. We want to apply the Avalanche Principle to
$\cM_{n}^{u}\left(x\right)=\prod_{j=m}^{1}A_{j}^{u}\left(x\right)$
where $A_{j}^{u}\left(x\right)=\fg_{l}\left(x+\left(j-1\right)l\omega\right)$,
$j=2,\ldots,m-1$,
\[
A_{1}^{u}\left(x\right)=\fg_{l}\left(x\right)\left[\begin{array}{cc}
-1 & 0\\
0 & 0
\end{array}\right]=\left[\begin{array}{cc}
\f_{l}\left(x\right) & 0\\
\star & 0
\end{array}\right],
\]
and
\[
A_{m}^{u}\left(x\right)=\left[\begin{array}{cc}
-1 & 0\\
0 & 0
\end{array}\right]\fg_{l'}\left(x\right)=\left[\begin{array}{cc}
\f_{l'}\left(x\right) & \star\\
0 & 0
\end{array}\right].
\]
We define the matrices $A_{j}^{a}$ analogously. We clearly have that
\[
\log\left|\f_{l}\left(x\right)\right|\le\log\left\Vert A_{1}^{u}\left(x\right)\right\Vert \le\log\left\Vert \fg_{l}\left(x\right)\right\Vert ,
\]
and an analogous estimate for $\log\left\Vert A_{m}^{u}\right\Vert $.
Now it follows from \corref{ldt_M^u}, \lemref{ldt_entries_weak},
and \lemref{avg_entries->lower_bound} that the hypotheses of \lemref{AppliedAP}
are satisfied and hence
\[
\log\left\Vert \cM_{n}^{a}\left(x\right)\right\Vert +\sum_{j=2}^{m-1}\log\left\Vert A_{j}^{a}\left(x\right)\right\Vert -\sum_{j=1}^{m-1}\log\left\Vert A_{j+1}^{a}\left(x\right)A_{j}^{a}\left(x\right)\right\Vert =O\left(\frac{1}{l}\right)
\]
up to a set of measure less than $3n\exp\left(-cl^{\sigma_{0}}\right)<\exp\left(-c'\left(\log n\right)^{2}\right)$.
Note that, as before, we checked the conditions of the Avalanche Principle
for $\cM_{n}^{u}$, but we wrote the conclusion for $\cM_{n}^{a}$.
By letting 
\[
u_{0}\left(x\right)=\log\left\Vert A_{m}^{a}\left(x\right)A_{m-1}^{a}\left(x\right)\right\Vert +\log\left\Vert A_{2}^{a}\left(x\right)A_{1}^{a}\left(x\right)\right\Vert 
\]
 we rewrite the previous relation as
\begin{multline*}
\log\left\Vert \cM_{n}^{a}\left(x\right)\right\Vert +\sum_{j=2}^{m-1}\log\left\Vert \gh_{l}\left(x+\left(j-1\right)l\omega\right)\right\Vert \\
-\sum_{j=2}^{m-2}\log\left\Vert \gh_{2l}\left(x+\left(j-1\right)l\omega\right)\right\Vert -u_{0}\left(x\right)=O\left(\frac{1}{l}\right).
\end{multline*}
Note that
\begin{multline}
\log\left\Vert \g_{l+l'}\left(x+\left(m-2\right)l\omega\right)\right\Vert +\log\left\Vert \g_{2l}\left(x\right)\right\Vert \le u_{0}\left(x\right)\\
\le\log\left\Vert \gh_{l+l'}\left(x+\left(m-2\right)l\omega\right)\right\Vert +\log\left\Vert \gh_{2l}\left(x\right)\right\Vert .\label{eq:u-bounds}
\end{multline}
We apply the Avalanche Principle $l-1$ more times. At each step we
decrease the length of $A_{m}$ by one and increase the length of
$A_{1}$ by one. Adding the resulting estimates and dividing by $l$
yields
\begin{multline}
\log\left\Vert \cM_{n}^{a}\left(x\right)\right\Vert +\sum_{j=l}^{\left(m-1\right)l-1}\frac{1}{l}\log\left\Vert M_{l}^{a}\left(x+j\omega\right)\right\Vert -\sum_{j=l}^{\left(m-2\right)l-1}\frac{1}{l}\log\left\Vert M_{2l}^{a}\left(x+j\omega\right)\right\Vert \\
-\sum_{k=0}^{l-1}\frac{1}{l}u_{k}\left(x\right)=O\left(\frac{1}{l}\right)\label{eq:M_impure_AP}
\end{multline}
up to a set of measure less than $l\exp\left(-c\left(\log n\right)^{2}\right)<\exp\left(-c'\left(\log n\right)^{2}\right)$.
The functions $u_{k}$, $k=1,\ldots,l-1$ are defined analogously
to $u_{0}$ and satisfy estimates analogous to \eqref{u-bounds}.
Based on these estimates it is straightforward to conclude (see \lemref{M^a-Riesz_bounds}
and \eqref{f^a-Riesz-bounds}) that there is an uniform bound for
the Riesz representations of $u_{k}/l$, $k=1,\ldots,l-1$. Hence
we can use \thmref{sh_ldt} to get
\[
\sum_{k=0}^{l-1}\frac{1}{l}u_{k}\left(x\right)-\sum_{k=0}^{l-1}\frac{1}{l}\left\langle u_{k}\right\rangle =O\left(l\left(\log n\right)^{2}\right)=O\left(\left(\log n\right)^{2+2/\sigma_{0}}\right)
\]
up to a set of measure less than $l\exp\left(-c\left(\log n\right)^{2}\right)<\exp\left(-c'\left(\log n\right)^{2}\right)$.
On the other hand, using \thmref{ldt} we have
\begin{multline*}
\sum_{j=l}^{\left(m-1\right)l-1}\frac{1}{l}\log\left\Vert M_{l}^{a}\left(x+j\omega\right)\right\Vert -\sum_{j=l}^{\left(m-2\right)l-1}\frac{1}{l}\log\left\Vert M_{2l}^{a}\left(x+j\omega\right)\right\Vert \\
=\left(m-2\right)lL_{l}^{a}-\left(m-3\right)lL_{2l}^{a}+O\left(\left(\log n\right)^{p}\right)
\end{multline*}
up to a set of measure less than $\exp\left(-c\left(\log n\right)^{p}\right)$.
We can now conclude from \eqref{M_impure_AP} that
\[
\log\left|f_{n}^{a}\left(x\right)\right|+\left(m-2\right)lL_{l}^{a}-\left(m-3\right)lL_{2l}^{a}-\sum_{k=0}^{l-1}\frac{1}{l}\left\langle u_{k}\right\rangle =O\left(\left(\log n\right)^{C_{2}}\right)
\]
up to a set of measure less than $\exp\left(-c\left(\log n\right)^{2}\right)$,
where $C_{2}=\max\left\{ p,2+2/\sigma_{0}\right\} $. Integrating
the above relation and then subtracting it, yields
\begin{equation}
\left|\log\left|f_{n}^{a}\left(x\right)\right|-\left\langle \log\left|f_{n}^{a}\right|\right\rangle \right|\le C\left(\log n\right)^{C_{2}}\label{eq:f^a-avg-bound}
\end{equation}
up to a set of measure less than $\exp\left(-c\left(\log n\right)^{2}\right)$.
Note that the exceptional set was handled by using the fact that $\left\Vert \log\left|f_{n}^{a}\right|\right\Vert _{L^{2}\left(\TT\right)}\le Cn$.
This follows from 
\begin{equation}
\left\Vert \log\left|f_{n}^{a}\right|-\left\langle \log\left|f_{n}^{a}\right|\right\rangle \right\Vert _{L^{2}\left(\TT\right)}\le Cn\label{eq:f^a-avg-L^2}
\end{equation}
 and $\left|\left\langle \log\left|f_{n}^{a}\right|\right\rangle \right|\le Cn$.
The first estimate is an imediate consequence of \lemref{ldt_entries_weak}
and \lemref{Integral-lemma}. The second estimate can be deduced from
\lemref{avg_entries->lower_bound}.

Let $\cB$ be the exceptional set for \eqref{f^a-avg-bound}. Let
\[
\log\left|f_{n}^{a}\right|-\left\langle \log\left|f_{n}^{a}\right|\right\rangle =u_{0}+u_{1}
\]
 where $u_{0}=0$ on $\cB$ and $u_{1}=0$ on $\TT\setminus\cB$.
By \eqref{f^a-avg-bound} and \eqref{f^a-avg-L^2} we have that $\left\Vert u_{0}-\left\langle u_{0}\right\rangle \right\Vert _{L^{\infty}\left(\TT\right)}\le C\left(\log n\right)^{C_{2}}$
and 
\[
\left\Vert u_{1}\right\Vert _{L^{2}\left(\TT\right)}\le Cn\sqrt{\mes\left(\cB\right)}\le\exp\left(-c\left(\log n\right)^{2}\right).
\]
Applying \cite[Lemma 2.3]{MR1843776} we have
\[
\left\Vert \log\left|f_{n}^{a}\right|\right\Vert _{BMO\left(\TT\right)}\le C\left(\left(\log n\right)^{C_{2}+2}+\sqrt{n\exp\left(-c\left(\log n\right)^{2}\right)}\right)\le C'\left(\log n\right)^{C_{0}}.
\]
The conclusion follows from the John-Nirenberg inequality.\end{proof}
\begin{lem}
\label{lem:<f>-nL}There exists a constant $C_{0}=C_{0}\left(\left\Vert a\right\Vert _{\infty},I_{a,E},\left\Vert b\right\Vert _{*},\left|E\right|,\omega,\gamma\right)$
such that
\[
\left|\left\langle \log\left|f_{n}^{a}\right|\right\rangle -nL_{n}^{a}\right|\le C_{0}
\]
for all integers.\end{lem}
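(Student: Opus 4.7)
I split the bound $|\langle\log|f^{a}_{n}|\rangle-nL^{a}_{n}|\le C_{0}$ into the two one-sided inequalities.

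The upper bound $\langle\log|f^{a}_{n}|\rangle\le nL^{a}_{n}$ is immediate. Since $f^{a}_{n}$ is the $(1,1)$-entry of $M^{a}_{n}$, one has the pointwise inequality $|f^{a}_{n}(x)|\le\|M^{a}_{n}(x)\|$, and integrating over $\TT$ gives $\langle\log|f^{a}_{n}|\rangle\le\langle\log\|M^{a}_{n}\|\rangle=nL^{a}_{n}$ by the definition of $L^{a}_{n}$.

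For the lower bound, the starting ingredient is \lemref{avg_entries->lower_bound}. Translating its conclusion from $f^{u}_{n}$ to $f^{a}_{n}$ via the on-torus identity $\log|f^{u}_{n}(x)|=-\tfrac{1}{2}(S_{n}(x)+S_{n}(x+\omega))+\log|f^{a}_{n}(x)|$, integrating, and using the relation $L^{a}_{n}=L_{n}+D$, yields the preliminary estimate $\langle\log|f^{a}_{n}|\rangle\ge nL^{a}_{n}-Cn^{1-\kappa}$, whose error grows with $n$. To promote this to a constant error, the plan is to apply the Avalanche Principle at block size $l\sim\log n$ in parallel to both $\cM^{u}_{n}$ (which gives $\log|f^{a}_{n}|$ as in the proof of \propref{ldt_entries}) and $M^{u}_{n}$ (which gives $\log\|M^{a}_{n}\|$ as in the proof of \thmref{ldt}). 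The two resulting decompositions share identical bulk sums of $\log\|M^{a}_{l}(x+jl\omega)\|$ and $\log\|M^{a}_{2l}(x+jl\omega)\|$ and differ only in their leftmost and rightmost factors (truncated for $|f^{a}_{n}|$, untruncated for $\|M^{a}_{n}\|$). Pointwise subtraction on the common exceptional-set complement annihilates the bulk terms and leaves $\log\|M^{a}_{n}(x)\|-\log|f^{a}_{n}(x)|=E_{L}(x)+E_{R}(x)+O(1/l)$, where $E_{L}$ and $E_{R}$ are boundary corrections living at scale $\lesssim l$. Integrating over $\TT$ and using the $L^{2}$-control $\|\log|f^{a}_{n}|\|_{L^{2}(\TT)}\le Cn$ (obtained from \lemref{ldt_entries_weak} together with \lemref{Integral-lemma}) to absorb the exceptional set, this delivers $nL^{a}_{n}-\langle\log|f^{a}_{n}|\rangle=\langle E_{L}\rangle+\langle E_{R}\rangle+O(1)$.

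The main obstacle is then to bound $|\langle E_{L}\rangle|$ and $|\langle E_{R}\rangle|$ by a constant independent of $n$. These quantities compare the full operator norm to truncated-row or truncated-column norms of $M^{a}_{l+l'}$ at shifted arguments, so one is effectively asking for the same type of estimate at the smaller scale $l+l'\sim\log n$. The cleanest route is to invoke the sharpened bound $|L_{n}-L|\le C/n$ (available by the argument of \cite[Theorem 5.1]{MR1847592}, indicated in the remark following \lemref{Ln_L}) in combination with the polar-decomposition estimates from \lemref{expanding_contracting_directions} and \lemref{three_determinants}, which together show that for typical $x$ the $(1,1)$-entry of a long matrix product is within a bounded multiplicative factor of the operator norm; this closes the estimate uniformly in $n$ and produces the claimed constant $C_{0}$.
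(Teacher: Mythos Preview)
Your setup through the parallel Avalanche Principle expansions of $\cM_{n}^{a}$ and $M_{n}^{a}$, the pointwise subtraction leaving only the two boundary corrections $E_{L},E_{R}$ at scale $\lesssim l$, and the integration with the exceptional set handled via the $L^{2}$ bound, is exactly the paper's approach. You also correctly isolate the crux: bounding $\langle E_{L}\rangle+\langle E_{R}\rangle$ uniformly in $n$, and you even observe that this is ``the same type of estimate at the smaller scale $l+l'\sim\log n$''.

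The gap is in how you then close that last step. The lemmas you invoke do not give what you claim. \lemref{three_determinants} only shows that at least one of three shifted determinants exceeds $\exp(NL_{N}-100N^{\sigma})$; feeding this through \lemref{avg_entries->lower_bound} yields $\langle\log|f^{a}_{N}|\rangle\ge NL^{a}_{N}-N^{1-\kappa}$, not an $O(1)$ gap. Evaluated at the boundary scale $N\sim(\log n)^{A}$ this still leaves a growing error $\sim(\log n)^{A(1-\kappa)}$. The sharp rate $|L_{n}-L|\le C/n$ concerns the convergence $L_{n}\to L$ and says nothing about the entry--norm defect $\log\|M^{a}_{m}\|-\log|f^{a}_{m}|$. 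And the assertion that ``for typical $x$ the $(1,1)$-entry is within a bounded factor of the norm'' is not established by the cited lemmas (indeed the $(1,1)$-entry vanishes on a nonempty set, so any such statement must be distributional and quantitative, which is precisely the content of the lemma you are trying to prove).

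The paper instead \emph{iterates} the observation you already made. Since $|f^{a}_{2l}(x)|\le\|A^{a}_{2}A^{a}_{1}\|\le\|M^{a}_{2l}(x)\|$ (and similarly on the right), integrating gives $\langle E_{L}\rangle+\langle E_{R}\rangle\le C\,R\!\left(4(\log n)^{A}\right)$ with $R(m)=\sup_{m/2\le k\le m}\bigl|\,\langle\log|f^{a}_{k}|\rangle-kL^{a}_{k}\,\bigr|$, so that
\[
R(n)\;\le\; C\,R\!\left(4(\log n)^{A}\right)+O(1/l).
\]
One now repeats the whole argument at scale $4(\log n)^{A}$, then again, descending in $\log^{*}$-many steps to a fixed scale where $R$ is bounded directly; the $O(1/l)$ errors are summable along this tower. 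This is the content of the reference to \cite[Lemma~3.5]{MR2438997}. So the missing idea is not an external input but simply to turn your remark ``same estimate at smaller scale'' into a recursion and iterate it.
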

\begin{proof}
Subtracting the Avalanche Principle expansions for $\cM_{n}^{a}$
and $M_{n}^{a}$ at scale $l\thickapprox\left(\log n\right)^{A}$
and then integrating, yields
\[
\left|\left\langle \log\left|f_{n}^{a}\right|\right\rangle -nL_{n}^{a}\right|\le CR\left(4\left(\log n\right)^{A}\right)+O\left(\frac{1}{l}\right)
\]
where 
\[
R\left(n\right)=\sup_{n/2\le m\le n}\left|\left\langle \log\left|f_{m}^{a}\right|\right\rangle -mL_{m}^{a}\right|.
\]
Iterating this estimate yields the desired conclusion (cf. \cite[Lemma 3.5]{MR2438997}).
\end{proof}
We now prepare to prove the estimate on the number of eigenvalues.
Fix $E_{0}\in\mathbb{R}$ such that $L\left(E_{0}\right)\ge\gamma>0$.
As a consequence of \corref{L(E)-L(E0)} and \lemref{Ln_L} it follows
that there exists a disk $\cD$ around $E_{0}$ such that $L\left(E\right)\ge\gamma/2$
on $I$. In what follows we also fix $\cD$. Note that the existence
of the disk $\cD$ would follow from the continuity of the Lyapunov
exponent, which is known from \cite{MR2825743}. However, we also
need the information on the modulus of continuity provided by \corref{L(E)-L(E0)}.
This information follows from the Hölder continuity of the Lyapunov
exponent proved in \cite{2011arXiv1108.3747T}, but we use \corref{L(E)-L(E0)}
in order to keep the paper self-contained. The following deviations
estimate in $E$ will be needed in the proof of the estimate. 
\begin{lem}
\label{lem:ldt_in_E} Let $C_{0}$ be as in \propref{ldt_entries}.
There exist constants $c_{0}=c_{0}\left(\left\Vert a\right\Vert _{\infty},I_{a,E},\left\Vert b\right\Vert _{*},\left|E\right|,\omega,\gamma\right)$
and $C_{1}=C_{1}\left(\left\Vert a\right\Vert _{\infty},I_{a,E},\left\Vert b\right\Vert _{*},\left|E\right|,\omega,\gamma\right)$
such that for every integer $n>1$ and every $\delta\ge\delta_{0}$
there exists a set $\cB_{n,\delta}\subset\TT$ with $\mes\cB_{n,\delta}<C_{1}\exp\left(-c_{0}\delta\left(\log n\right)^{-C_{0}}\right)$,
such that for each $x\in\TT\setminus\cB_{n,\delta}$ there exists
$\cE_{n,\delta,x}\subset\cD$, with $\mes\cE_{n,\delta,x}<C_{1}\exp\left(-c_{0}\delta\left(\log n\right)^{-C_{0}}\right)$,
such that
\begin{equation}
\left|\log\left|f_{n}^{a}\left(x,E\right)\right|-nL_{n}^{a}\left(E\right)\right|\le\delta,\label{eq:ldt_in_E}
\end{equation}
for any $E\in\cD\setminus\cE_{n,\delta,x}$.\end{lem}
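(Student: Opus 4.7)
The plan is to derive the claimed $E$-deviations estimate from the $x$-deviations estimate of \propref{ldt_entries} by a Fubini--Chebyshev argument, after first replacing the empirical mean $\langle\log|\g_n|\rangle$ by the quantity $nL_n^a(E)$.

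First, I would fix $E\in\cD$. Since $L(E)\ge\gamma/2$ on $\cD$ and both $|E|$ and $I_{a,E}$ remain in a compact set as $E$ ranges over $\cD$ (the latter by the continuity of $I_{a,E}$ on $\{L>0\}$ noted just before \lemref{considerable-difficulty-lemma}), the constants $c_0,C_0,C_1$ in \propref{ldt_entries} can be chosen uniformly for $E\in\cD$. Applying that proposition with deviation parameter $\delta/n$ in place of $\delta$ yields
\[
\mes\{x\in\TT:|\log|\g_n(x,E)|-\langle\log|\g_n(\cdot,E)|\rangle|>\delta\}\le C_1\exp(-c_0\delta(\log n)^{-C_0}).
\]
Combining this with \lemref{<f>-nL}, which provides $|\langle\log|\g_n(\cdot,E)|\rangle-nL_n^a(E)|\le C_2$ uniformly in $E\in\cD$, I set $\delta_0:=2C_2$ so that for $\delta\ge\delta_0$ the $C_2$ discrepancy is absorbed and
\[
\mes\{x\in\TT:|\log|\g_n(x,E)|-nL_n^a(E)|>\delta\}\le C_1\exp(-c_0'\delta(\log n)^{-C_0}),\qquad E\in\cD,
\]
with a possibly smaller $c_0'$.

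Second, I would run the product-space step. Let $\Omega:=\{(x,E)\in\TT\times\cD:|\log|\g_n(x,E)|-nL_n^a(E)|>\delta\}$ and set $\eta:=\exp(-c_0'\delta(\log n)^{-C_0})$. By Fubini applied to the previous display,
\[
\int_\TT\mes\{E\in\cD:(x,E)\in\Omega\}\,dx=\int_\cD\mes\{x\in\TT:(x,E)\in\Omega\}\,dE\le|\cD|\,C_1\eta.
\]
Define $\cB_{n,\delta}:=\{x\in\TT:\mes\{E\in\cD:(x,E)\in\Omega\}>\eta^{1/2}\}$; Chebyshev's inequality then gives $\mes(\cB_{n,\delta})\le|\cD|\,C_1\eta^{1/2}$. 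For each $x\in\TT\setminus\cB_{n,\delta}$ put $\cE_{n,\delta,x}:=\{E\in\cD:(x,E)\in\Omega\}$; by construction this has measure at most $\eta^{1/2}$, and \eqref{ldt_in_E} holds on $\cD\setminus\cE_{n,\delta,x}$ by the definition of $\Omega$. Absorbing the factor $1/2$ in the exponent of $\eta^{1/2}$ into $c_0'$ produces the measure bounds in the form stated.

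There is no real obstacle: the argument is routine once \propref{ldt_entries} and \lemref{<f>-nL} are in hand. The two points to check carefully are (i) the uniformity in $E\in\cD$ of the constants coming from those two results, which follows from compactness of $\cD$ together with the continuity of $E\mapsto I_{a,E}$ on $\{L>0\}$, and (ii) joint measurability so that Fubini applies, which is immediate from the definitions of $f_n^a$ and $L_n^a$.
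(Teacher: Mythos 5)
Your proposal is correct and follows essentially the same route as the paper: replace $\langle\log|\g_n|\rangle$ by $nL_n^a(E)$ via \lemref{<f>-nL}, obtain the pointwise-in-$E$ measure estimate from \propref{ldt_entries} (with the deviation parameter rescaled by $n$), and then pass to the product-space statement by Fubini together with Chebyshev. The paper's proof states this in one line; your write-up supplies exactly the details the paper elides, including the correct observation that uniformity in $E\in\cD$ of the constants is needed and follows from compactness of $\cD$ and continuity of $E\mapsto I_{a,E}$.
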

\begin{proof}
From \propref{ldt_entries} and \lemref{<f>-nL} it follows that \eqref{ldt_in_E}
holds for for $\delta\ge\delta_{0}$, and $\left(x,E\right)\in\TT\times\cD$
except for a set of measure $C\exp\left(-c\delta/\left(\log n\right)^{C_{0}}\right)$.
The conclusion follows by Fubini's Theorem and Chebyshev's inequality.\end{proof}
\begin{thm}
\label{thm:Number-of-eigenvalues} Let $C_{0}=C_{0}\left(\omega\right)$
be as in \propref{ldt_entries}. There exist constants $C_{1}=C_{1}\left(\left\Vert a\right\Vert _{\infty},\left\Vert b\right\Vert _{*},\left|E_{0}\right|,\omega,\gamma\right)$,
$C_{2}=C_{2}\left(\left\Vert a\right\Vert _{\infty},I_{a,\cD},\left\Vert b\right\Vert _{*},\left|E_{0}\right|,\omega,\gamma\right)$,
and $n_{0}=n_{0}\left(\left\Vert a\right\Vert _{\infty},I_{a,\cD},\left\Vert b\right\Vert _{*},\left|E_{0}\right|,\omega,\gamma\right)$
such that for any $x_{0}\in\TT$, $E_{0}\in\mathbb{R}$, and $n\ge n_{0}$
one has 
\[
\#\left\{ E\in\mathbb{R}:\, f_{n}^{a}\left(x_{0},E\right)=0,\,\left|E-E_{0}\right|<n^{-C_{1}}\right\} \le C_{2}\left(\log n\right)^{2C_{0}}
\]
and
\[
\#\left\{ z\in\mb C:\, f_{n}^{a}\left(z,E_{0}\right)=0,\,\left|z-x_{0}\right|<n^{-1}\right\} \le C_{2}\left(\log n\right)^{2C_{0}}.
\]
\end{thm}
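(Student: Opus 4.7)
Both parts of the theorem reduce, via Jensen's formula, to finding a good lower-bound point for an analytic function whose uniform upper bound is controlled by $\log \|M_n^a\|$. Specifically, the plan is to apply Jensen to $g(z) := f_n^a(z, E_0)$ for the zeros in $z$, and to the degree-$n$ polynomial $h(E) := f_n^a(x_0, E)$ for the zeros in $E$. The key ingredients will be \propref{M^a-upper-bound} (uniform upper bound), \corref{L(E)-L(E0)} ($E$-continuity of $nL_n^a$), together with a large deviation lower bound from \propref{ldt_entries} combined with \lemref{<f>-nL} (for zeros in $z$), or from \lemref{ldt_in_E} (for zeros in $E$).

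For the second statement, I would fix a constant $R \in (0, \rho_0/4)$ and take $\delta = (\log n)^{2C_0}/n$. The exceptional set in \propref{ldt_entries} then has measure $\exp(-c_0(\log n)^{C_0}) \ll R$, so one can select $z^* \in \TT$ with $|z^* - x_0| \le R/100$ and, using \lemref{<f>-nL}, $\log|g(z^*)| \ge nL_n^a(E_0) - n\delta - C$. Meanwhile \propref{M^a-upper-bound} gives $\log|g(z)| \le nL_n^a(E_0) + C(\log n)^p$ on $D(z^*, R)$. Applying Jensen centered at $z^*$ with radius $R$: every zero of $g$ in $D(x_0, n^{-1}) \subset D(z^*, R/50)$ contributes at least $\log 50$ to the Jensen sum, which is bounded by $n\delta + C(\log n)^p + C$, producing at most $C(\log n)^{2C_0}$ zeros.

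For the first statement, I would take $\delta = (\log n)^{2C_0}$ in \lemref{ldt_in_E}. When $x_0 \notin \cB_{n,\delta}$, the set $\cE_{n,\delta,x_0} \subset \cD$ has two-dimensional measure at most $\exp(-c_0(\log n)^{C_0}) \ll n^{-2C_1}$, so a point $E^* \in D(E_0, n^{-C_1}/2) \setminus \cE_{n,\delta,x_0}$ exists. Applying Jensen to $h$ centered at $E^*$ with radius $3n^{-C_1}$: the upper bound on $\log|h|$ on this disk follows by combining \propref{M^a-upper-bound} with \corref{L(E)-L(E0)} (taking $C_1$ larger than the constant in the latter, so that $n|L_n^a(E) - L_n^a(E_0)| = o(1)$ on the disk); the lower bound $\log|h(E^*)| \ge nL_n^a(E_0) - \delta - o(1)$ comes from \lemref{ldt_in_E} and \corref{L(E)-L(E0)}. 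Each zero in $D(E_0, n^{-C_1})$ contributes at least $\log 2$ to the Jensen sum, producing at most $C(\log n)^{2C_0}$ zeros.

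The hard part will be the case $x_0 \in \cB_{n,\delta}$, where \lemref{ldt_in_E} supplies no usable lower bound for $h$. I would handle this by perturbation: since $\mes \cB_{n,\delta} \le \exp(-c_0(\log n)^{C_0})$ is super-polynomially small, I can pick $x_0' \in \TT \setminus \cB_{n,\delta}$ with $|x_0 - x_0'| \le n^{-C_1 - 1}$; by the Lipschitz continuity of $a$ and $b$ on $\TT$ and Weyl's eigenvalue inequality, $\|H^{(n)}(x_0) - H^{(n)}(x_0')\| \le C|x_0 - x_0'|$, so each eigenvalue of $H^{(n)}(x_0)$ in $D(E_0, n^{-C_1})$ lies within $n^{-C_1-1}$ of some eigenvalue of $H^{(n)}(x_0')$ in $D(E_0, 2n^{-C_1})$. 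Applying the good-case argument to $x_0'$ on this slightly enlarged disk (the enlargement absorbed by a small increase of $C_1$) then yields the claim for $x_0$.
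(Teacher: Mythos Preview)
Your argument for the first statement is correct and is essentially the paper's own proof: the paper always perturbs $x_{0}$ to a nearby $x_{1}\notin\cB_{n,\delta}$, applies Jensen in $E$ at $(x_{1},E_{1})$, and then returns from $x_{1}$ to $x_{0}$ via the Hermitian perturbation estimate $|E_{j}^{(n)}(x_{0})-E_{j}^{(n)}(x_{1})|\le\|H^{(n)}(x_{0})-H^{(n)}(x_{1})\|$. Your split into the cases $x_{0}\notin\cB_{n,\delta}$ and $x_{0}\in\cB_{n,\delta}$ is a harmless reorganisation of the same idea.

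For the second statement, however, your choice of a \emph{fixed} radius $R\in(0,\rho_{0}/4)$ breaks the Jensen bound. \propref{M^a-upper-bound} does not give $\log|g(z)|\le nL_{n}^{a}(E_{0})+C(\log n)^{p}$ on the whole disk $D(z^{*},R)$; as stated it controls $\log\|M_{n}^{a}\|$ on each circle $r\TT$ by $nL_{n}^{a}(r)+C(\log n)^{p}$, and to pass to $nL_{n}^{a}(1,E_{0})$ you must invoke \corref{L(r1)-L(r2)}, which only yields $|L_{n}^{a}(r)-L_{n}^{a}(1)|\le C_{0}|r-1|$. Hence on $D(z^{*},R)$ the upper bound is $nL_{n}^{a}(E_{0})+C_{0}nR+C(\log n)^{p}$, and the right-hand side of Jensen is $\sim C_{0}nR$, i.e.\ of order $n$, not $(\log n)^{2C_{0}}$. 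The fix is exactly what the paper does: take the Jensen radius of order $n^{-1}$ (the paper uses $3n^{-1}$), so that $n|L_{n}^{a}(r)-L_{n}^{a}(1)|=O(1)$. Since the exceptional set from \propref{ldt_entries} has measure $\exp(-c(\log n)^{C_{0}})\ll n^{-1}$, you can still find $z^{*}\in\TT$ with $|z^{*}-x_{0}|\le n^{-1}$ and the required lower bound, and then $D(x_{0},n^{-1})\subset D(z^{*},2n^{-1})$ gives the conclusion. With that one change your outline matches the paper's proof.
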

\begin{proof}
From \eqref{ldt_in_E} it follows that there exist $x_{1},\, E_{1}$
such that 
\[
\left|x_{1}-x_{0}\right|\le C\exp\left(-c\left(\log n\right)^{C_{0}}\right),
\]
\[
\left|E_{1}-E_{0}\right|\le C\exp\left(-c\left(\log n\right)^{C_{0}}\right),
\]
and
\begin{equation}
\log\left|f_{n}^{a}\left(x_{1},E_{1}\right)\right|\ge nL_{n}\left(E_{1}\right)-\left(\log n\right)^{2C_{0}}.\label{eq:f^a-lb}
\end{equation}
 Let $R=n^{-2C_{3}}$, where $C_{3}$ is the constant $C_{1}$ from
\corref{L(E)-L(E0)}, and let $\nu_{x,E}\left(r\right)=\#\left\{ E:\, f_{n}^{a}\left(x,E'\right)=0,\,\left|E'-E\right|\le r\right\} $.
Using Jensen's formula we have that
\begin{multline}
\nu_{x_{1},E_{1}}\left(3R\right)\le C\int_{0}^{4R}\frac{\nu_{x_{1},E_{1}}\left(t\right)}{t}dt\\
=\frac{1}{2\pi}\int_{0}^{2\pi}\log\left|f_{n}^{a}\left(x_{1},E_{1}+4Re^{i\theta}\right)\right|d\theta-\log\left|f_{n}^{a}\left(x_{1},E_{1}\right)\right|\label{eq:Jensen-at-E1}
\end{multline}
By \propref{M^a-upper-bound} we have 
\[
\log\left|f_{n}^{a}\left(x_{1},E\right)\right|\le nL_{n}^{a}\left(E\right)+C\left(\log n\right)^{p}
\]
for $E\in\cD$. Using this, together with \eqref{f^a-lb} and \eqref{Jensen-at-E1}
yields
\[
\nu_{x_{1},E_{1}}\left(3R\right)\le C\left(\sup_{\left|E-E_{1}\right|=4R}\left(n\left(L_{n}^{a}\left(E\right)-L_{n}^{a}\left(E_{1}\right)\right)\right)+\left(\log n\right)^{2C_{0}}\right).
\]
For $E$ such that $\left|E-E_{0}\right|\le R$ we have that $\left|E-E_{1}\right|\le n^{-C_{3}}$
and hence by \corref{L(E)-L(E0)} we have that $\left|n\left(L_{n}^{a}\left(E\right)-L_{n}^{a}\left(E_{1}\right)\right)\right|\le n^{-C}$.
We can now conclude that 
\begin{equation}
\nu_{x_{1},E_{0}}\left(2R\right)\le\nu_{x_{1},E_{1}}\left(3R\right)\le C\left(\log n\right)^{2C_{0}}.\label{eq:z(R)-bound}
\end{equation}
Using the Mean Value Theorem we can conclude that
\[
\left\Vert H^{\left(n\right)}\left(x_{0}\right)-H^{\left(n\right)}\left(x_{1}\right)\right\Vert \le C\left|x_{0}-x_{1}\right|\le C\exp\left(-c\left(\log n\right)^{C_{0}}\right).
\]
Let $E_{j}^{\left(n\right)}\left(x\right)$, $j=1,\ldots,n$ be the
eigenvalues of $H^{\left(n\right)}\left(x\right)$ ordered increasingly.
Since $H^{\left(n\right)}\left(x_{0}\right)$ and $H^{\left(n\right)}\left(x_{1}\right)$
are Hermitian it follows that 
\[
\left|E_{j}^{\left(n\right)}\left(x_{0}\right)-E_{j}^{\left(n\right)}\left(x_{1}\right)\right|\le C\exp\left(-c\left(\log n\right)^{C_{0}}\right).
\]
This implies that $\nu_{x_{0},E_{0}}\left(R\right)\le\nu_{x_{1},E_{0}}\left(2R\right)$
and now the first estimate follows from \eqref{z(R)-bound}.

The second estimate follows in a similar way. From \propref{ldt_entries}
it follows that there exists $x_{1}$ such that $\left|x_{1}-x_{0}\right|\le C\exp\left(-c\left(\log n\right)^{C_{0}}\right)$
and 
\begin{equation}
\log\left|f_{n}^{a}\left(x_{1},E_{0}\right)\right|\ge nL_{n}\left(E_{0}\right)-\left(\log n\right)^{2C_{0}}.\label{eq:f^a(E0)-lb}
\end{equation}
Let $\nu_{x}\left(r\right)=\#\left\{ z\in\mb C:\, f_{n}^{a}\left(z,E_{0}\right)=0,\,\left|z-x\right|<r\right\} $.
Using Jensen's formula, \eqref{f^a(E0)-lb} and \propref{M^a-upper-bound},
as before, yields
\begin{multline*}
\nu_{x_{0}}\left(n^{-1}\right)\le\nu_{x_{1}}\left(2n^{-1}\right)\\
\le C\left(\sup_{r\in\left(1-3n^{-1},1+3n^{-1}\right)}\left(n\left(L_{n}^{a}\left(r,E_{0}\right)-L_{n}^{a}\left(1,E_{0}\right)\right)\right)+\left(\log n\right)^{2C_{0}}\right)\\
\le C'\left(\log n\right)^{2C_{0}}.
\end{multline*}
For the last inequality we used \corref{L(r1)-L(r2)}. This concludes
the proof. 
\end{proof}
\bibliographystyle{halpha}
\bibliography{schrodinger}

\vspace{2cm}

\begin{flushleft}
\textbf{I. Binder}: Dept. of Mathematics, University of Toronto, Toronto,
ON, M5S 2E4, Canada; \texttt{ilia@math.utoronto.ca}
\par\end{flushleft}

\medskip{}

\begin{flushleft}
\textbf{M. Voda}: Dept. of Mathematics, University of Toronto, Toronto,
ON, M5S 2E4, Canada; \texttt{mvoda@math.utoronto.ca}
\par\end{flushleft}
\end{document}